\NewDocumentCommand\xDeclarePairedDelimiter{mmm}
{%
	\NewDocumentCommand#1{som}{%
		\IfNoValueTF{##2}
		{\IfBooleanTF{##1}{#2##3#3}{\left#2##3\right#3}}
		{\mathopen{##2#2}##3\mathclose{##2#3}}%
	}%
}
\xDeclarePairedDelimiter\abs{\lvert}{\rvert}
\xDeclarePairedDelimiter\norm{\lVert}{\rVert}
\xDeclarePairedDelimiter{\ceil}{\lceil}{\rceil}
\xDeclarePairedDelimiter{\floor}{\lfloor}{\rfloor}
\xDeclarePairedDelimiter{\set}{\lbrace}{\rbrace}
\newtheorem{reduction}{Reduction}
\crefname{reduction}{reduction}{reductions}
\g@addto@macro\bfseries{\boldmath}
\newcommand{\figurecache}{./figure-cache/}
  \tikzset{external/mode=list and make}
  \tikzset{external/mode=list and make}
\tikzset{external/only named=true}
\newcommand{\tikzpicturename}[1]{\tikzsetnextfilename{\tikzexternalrealjob-#1}}
\colorlet{Obs}{green!50!teal}
\tikzset{
  node/.style={thick,draw,fill,black,thick,minimum size=\pgfplotmarksize,inner sep=0mm,outer sep=0mm},
%  node/.style={
%    circle, fill=black!30,
%    minimum size=1.3ex, inner sep=0, outer sep=0,
%  },
  input/.style={
    node,propagating=no,%fill=black!80,
  },
  output/.style={
    node,rectangle,
  },
  gate/.style={
    node,propagating,draw=black!50,fill=none,inner sep=2.5pt,
  },
  and/.style={
    gate,label={[font=\footnotesize,black]center:$\wedge$}
  },
  or/.style={
    gate,label={[font={\footnotesize},black]center:$\vee$}
  },
  booster/.style={
    node,diamond,
    observed, fill=white, thick,
  },
  >/.tip={Stealth[]},
  snaked/.style={decorate,decoration={snake,post length=1.5mm,amplitude=0.5mm,segment length=2mm}},
  pds instance/.style={
    mark size=1ex,%2mm,
    label distance=0.5\pgfplotmarksize,
    graphs/radius=0.5cm,
  },
  nonedge/.style={red,decoration={markings,mark=at position 0.5 with {
      \draw (-.6\pgfplotmarksize,-.6\pgfplotmarksize) -- (.6\pgfplotmarksize,.6\pgfplotmarksize);
      \draw (.6\pgfplotmarksize,-.6\pgfplotmarksize) -- (-.6\pgfplotmarksize,.6\pgfplotmarksize);
    }},postaction=decorate
  },
  propagating/.search also={/tikz},
  propagating/yes/.style={circle},
  propagating/no/.style={regular polygon,regular polygon sides=3,minimum size=1.1\pgfplotmarksize},
  propagating/dontcare/.style={rectangle,minimum size=0.9\pgfplotmarksize},
  propagating/.style={propagating/.cd,draw,fill,#1,/tikz},
  propagating/.default=yes,
  observed/.search also={/tikz},
  observed/yes/.style={draw=Obs},%,fill=.!70
  observed/no/.style={},
  observed/dontcare/.style={fill=black!50!white},
  observed/.style={observed/.cd,draw,#1,/tikz},
  observed/.default=yes,
  solution/selected/.style={red,fill=red!70,observed/yes/.style={}},
  solution/yes/.style={selected},
  solution/excluded/.style={fill=white},
  solution/no/.style={excluded},
  solution/dontcare/.style={fill=black!30},
  solution/undecided/.style={},
  solution/.search also={/tikz},
  solution/.style={solution/.cd,draw,fill,#1,/tikz},
  solution/.default={yes},
}
\pgfplotsset{compat=newest}
\pgfplotsset{cycle list/Dark2,colormap/Dark2}
\pgfplotsset{table/col sep=comma}
\pgfplotsset{
  boxplot prepared from table/.code={
    \def\tikz@plot@handler{\pgfplotsplothandlerboxplotprepared}%
    \pgfplotsset{
      /pgfplots/boxplot prepared from table/.cd,
      #1,
    }
  },
  /pgfplots/boxplot prepared from table/.cd,
  table/.code={\pgfplotstablecopy{#1}\to\boxplot@datatable},
  row/.initial=0,
  make style readable from table/.style={
    #1/.code={
      \pgfplotstablegetelem{\pgfkeysvalueof{/pgfplots/boxplot prepared from table/row}}{##1}\of\boxplot@datatable
      \pgfplotsset{boxplot/#1/.expand once={\pgfplotsretval}}
    }
  },
  make style readable from table=lower whisker,
  make style readable from table=upper whisker,
  make style readable from table=lower quartile,
  make style readable from table=upper quartile,
  make style readable from table=median,
  make style readable from table=lower notch,
  make style readable from table=upper notch,
}
\colorlet{@initialcolor}{.}
\newcommand{\@definemycolor}[1]{\pgfkeys{/pgfplots/index of colormap={#1}}\colorlet{color#1}{.}}
\newcommand{\pdfsc}[1]{\texorpdfstring{\textsc{#1}}{#1}}
\newcommand*{\DS}{\pdfsc{Dominating Set}\xspace}
\newcommand*{\Ds}{\pdfsc{DS}\xspace}
\newcommand*{\PDS}{\pdfsc{Power Dominating Set}\xspace}
\newcommand*{\Pds}{\pdfsc{PDS}\xspace}
\newcommand*{\IPDS}{\pdfsc{Implicating \PDS}\xspace}
\newcommand*{\Ipds}{\pdfsc{IPDS}\xspace}
\newcommand*{\HS}{\pdfsc{Hitting Set}\xspace}
\newcommand*{\extension}[1]{#1-\pdfsc{Extension}}
\newcommand*{\Extension}[1]{#1 \pdfsc{Extension}}
\newcommand*{\WCS}{\pdfsc{Weighted Circuit Satisfiability}\xspace}
\newcommand*{\Wcs}{\pdfsc{WCS}\xspace}
\newcommand*{\WMCS}{\pdfsc{Weighted Monotone Circuit Satisfiability}\xspace}
\newcommand*{\Wmcs}{\pdfsc{WMCS}\xspace}
\newcommand*{\W}[1]{{W[#1]}}
\newcommand*{\NP}{\text{NP}}
\newcommand*{\AND}{\texttt{and}}
\newcommand*{\OR}{\texttt{or}}
\newcommand*{\NOT}{\texttt{not}}
\newcommand*{\Vin}{\ensuremath{V_\texttt{in}}}
\newcommand*{\Vand}{\ensuremath{V_\AND}}
\newcommand*{\Vor}{\ensuremath{V_\OR}}
\newcommand*{\TRUE}{\textsc{true}\xspace}
\newcommand*{\FALSE}{\textsc{false}\xspace}
\title{An Efficient Algorithm for Power Dominating Set}
\author{Thomas Bläsius}{Karlsruhe Institute of Technology (KIT), Germany}{thomas.blaesius@kit.edu}{https://orcid.org/0000-0003-2450-744X}{}
\author{Max Göttlicher}{Karlsruhe Institute of Technology (KIT), Germany}{max.goettlicher@kit.edu}{https://orcid.org/0000-0002-5556-4140}{}
\authorrunning{T. Bläsius and M. Göttlicher}
\keywords{Power Dominating Set, Implicit Hitting Set, Parameterized Complexity, Reduction Rules} %TODO mandatory; please add comma-separated list of keywords
\begin{document}
  \maketitle

  \begin{abstract}
    The problem \PDS (\Pds) is motivated by the placement of phasor measurement units to monitor electrical networks.
    It asks for a minimum set of vertices in a graph that observes all remaining vertices by exhaustively applying two observation rules.
    Our contribution is twofold.
    First, we determine the parameterized complexity of \Pds by proving it is $W[P]$-complete when parameterized with respect to the solution size.
    We note that it was only known to be $W[2]$-hard before.
    Our second and main contribution is a new algorithm for \Pds that efficiently solves practical instances.

    Our algorithm consists of two complementary parts.
    The first is a set of reduction rules for \Pds that can also be used in conjunction with previously existing algorithms.
    The second is an algorithm for solving the remaining kernel based on the implicit hitting set approach.
    Our evaluation on a set of power grid instances from the literature shows that our solver outperforms previous state-of-the-art solvers for \Pds by more than one order of magnitude on average.
    Furthermore, our algorithm can solve previously unsolved instances of continental scale within a few minutes.
  \end{abstract}
  \clearpage
  \section{Introduction}
\label{sec:intro}

Monitoring power voltages and currents in electric grids is vital for maintaining their stability and for cost-effective operation.
The sensors required to obtain high-resolution measurements, so-called phasor measurement units, are expensive pieces of equipment.
The goal to place as few of those sensors as possible to minimize cost is called the \PDS problem (\Pds).
It was first posed by Mili, Baldwin and Adapa.~\cite{mili1990PhasorMP} and formalized by Baldwin et al.~\cite{baldwin1993PowerSO}.
In its basic form, the problem asks whether the graph of a power grid can be observed by exhaustively applying two observation rules~\cite{brueni1993MinimalPP}:
First, every sensor observes its vertex and all neighbors.
Secondly, if a vertex is observed and has only one unobserved neighbor, that neighbor becomes observed, too.

\Pds is unfortunately NP-complete~\cite{brueni1993MinimalPP,haynes2002domination,kneis2006parameterized}, i.e., we cannot expect there to be an algorithm that performs reasonably on all inputs.
Moreover, the problem remains hard for a wide range of different graph classes~\cite{brueni1993MinimalPP,brueni2005ThePP,haynes2002domination,kneis2006parameterized,xu2006block,guo2008algorithms,liao2013circulararc}.
In terms of parameterized complexity, \Pds is known to be $W[2]$-hard~\cite{guo2008algorithms} when parameterized by solution size.

On the positive side, various approaches for solving \Pds have been proposed.
Theoretic results show that \Pds can be solved in linear time in graphs with fixed tree-width~\cite{kneis2006parameterized,guo2008algorithms}.
However, those algorithms have, to the best of our knowledge, never been implemented and are probably infeasible in practice due to their bad scaling with respect to the tree-width.
Several exponential-time algorithms have been presented~\cite{brueni1993MinimalPP,binkele2012exact} but those algorithms have not been implemented and evaluated either.

Practically feasible approaches using an MILP formulation have been proposed by Aazami~\cite{aazami2008DominationIG}.
This formulation was later improved upon by Brimkov, Mikesell, and Smith~\cite{brimkov2019connected} and most recently Jovanovic and Voss~\cite{jovanovic2020fixed}.
A different approach is to reduce \Pds to the hitting set problem~\cite{bozeman2018RestrictedPD,smith2020OptimalSP}.
This approach is based on the observation that one can determine so-called \emph{forts}, which are subsets of vertices that prevent propagation if none of them is selected.
A set of vertices is a valid solution for \Pds if and only if at least one vertex is selected for each fort, i.e., if it is a hitting set for the collection of all forts.
Graphs may contain an exponential number of forts, so this hitting set instance is not computed explicitly.
Instead, one can use the so-called \emph{implicit hitting set} approach, where one starts with a subset of all forts, computes a hitting set for this subset, and then validates whether this is already a solution for the \Pds instance.
If not, one obtains at least one new fort that can be added to the set of considered forts.
This is iterated until a solution is found.
This implicit hitting set approach has been used for other problems, e.g., for \textsc{MaxSAT}~\cite{saikko2016LMHSAS} and \textsc{TQBF}~\cite{janota2015SolvingQB}.
For \Pds, it has been introduced by Bozeman et al.~\cite{bozeman2018RestrictedPD}.
The strategy of finding forts has been later improved by Smith and Hicks~\cite{smith2020OptimalSP}, providing the current state-of-the-art for solving \Pds in practice.

Our contribution is threefold.
First, we study the parameterized complexity of \Pds parameterized by the solution size.
Though it is known to be $W[2]$-hard~\cite{guo2008algorithms}, it was unknown whether \Pds is also contained in $W[2]$.
We show that \Pds is $W[P]$-complete via a reduction from \WCS for circuits of arbitrary weft.
This completely determines its parameterized complexity and in particular shows that it is not in $W[2]$ unless $W[2] = W[P]$.
In our second contribution, we propose a set of reduction rules for pre-processing \Pds instances.
Our reduction rules aim to produce equivalent instances that are smaller and annotated with partial decisions, i.e., some vertices are marked as selected or as forbidden-to-select.
Though these annotations lead to a more general problem than the basic \Pds, we show that existing approaches for solving \Pds can be easily adapted to solve the annotated instances.
Moreover, we show that their performance greatly benefits from our reduction rules.
Finally, our third contribution is an improved heuristic for finding forts for the implicit hitting set formulation.
This improved heuristic together with our reduction rules beats the current state of the art solvers by more than one order of magnitude.
Moreover, our approach can solve previously unsolved instances of continental scale.

The remainder of this paper is organized as follows.
\Cref{sec:definitions} provides an overview of the basic concepts and notation used throughout this paper.
In \Cref{sec:wp complete}, we show that \Pds is $W[P]$-complete.
Our reduction rules and the heuristic for extending the hitting set instance are presented in \Cref{sec:solving}.
\Cref{sec:experiments} contains our experimental evaluation of the new method using a set of benchmark instances.

  \section{Preliminaries}
\label{sec:definitions}

\subparagraph{Graphs and Neighborhoods.}

Let $G = (V, E)$ be an undirected graph with vertices $V$ and edges $E$.
For $v \in V$, let $N(v) = \set{u \in V \mid uv \in E}$ be the \emph{open neighborhood} of $v$.
Similarly, $N[v] = N(V) \cup \set{v}$ is the \emph{closed neighborhood} of $v$.
Given a set $S \subseteq V$ we denote by $N(S)$ and $N[S]$ the union of all open and closed neighborhoods of the vertices in $S$.

\subparagraph{Power Dominating Set.}

For a given graph $G$, the problem \emph{\PDS (\Pds)} is to find a minimum vertex set $S \subseteq V$ of \emph{selected} vertices such that all vertices of the graph are observed.
We call such set a \emph{power dominating set}.
The size of a minimum power dominating set of a given graph $G$ is called the \emph{power dominating number} $\gamma_P(G)$.
Whether a vertex is \emph{observed} is determined by the following rules, which are applied iteratively.
We note that for the second rule, vertices can be marked as \emph{propagating}, i.e., the input of \Pds is not just a graph but a graph together with a set of propagating vertices.

\begin{description}
\item[Domination rule.]
  A vertex is observed if it is in the closed neighborhood of a selected vertex.
\item[Propagation rule.]
  Let $u \in V$ be a propagating vertex.
  If $u$ is observed and $v \in N(u)$ is the only neighbor of $u$ that is not yet observed, then $v$ becomes observed\footnote{
    The propagation rule is motivated by Kirchoff's law and Ohm's law in electric transmission networks.
    Propagating vertices are also called \emph{zero-injection vertices}.
    In electric networks, they refer to buses in substations without power injection, i.e. without attached loads or generators.
  }.
  If the propagation rule is applied to an observed vertex $u$, we say it \emph{propagates} its observation status.
\end{description}

The special case where we have no propagating vertices yields the well known \emph{\DS (\Ds)} problem.
Moreover, we refer to the special case where all vertices are propagating as \emph{\textsc{simple}-\Pds}.
In addition to the above \DS variants, we also consider the extension variant \Extension{\DS}.
For \extension{\Ds}, the input consists of the graph $G = (V, E)$, a set $X \subseteq V$ of \emph{pre-selected} and a set $Y$ of \emph{excluded} vertices; vertices in $V \setminus X \setminus Y$ are called \emph{undecided}.
\extension{\Ds} asks whether there is a solution $S \subset V$ such that $S$ includes all selected and excludes all excluded vertices, i.e., $X \subseteq S$ and $Y \cap S = \emptyset$.
The problems \extension{\Pds} and \extension{\textsc{simple}-\Ipds} are defined analogously.

\subparagraph{Hitting Set.}

Let $V$ be a set and let $\mathcal F \subseteq 2^V$ be a family of subsets.
A set $H \subseteq V$ is a \emph{hitting set} if it \emph{hits} every set $F \in \mathcal F$, i.e., $F \cap H \neq \emptyset$ for all $F \in \mathcal F$.
The problem \emph{\HS} is to find a hitting set of minimum size.
Note that the extension variant of \HS reduces to an instance of \HS itself, as one can simply remove excluded elements and remove the sets containing pre-selected elements.

\subparagraph{Parameterized Complexity}
\label{sec:param-compl}

We only give a very brief introduction; for more details, see one of the text books on parameterized complexity~\cite{downey2013fundamentals}.
For a parameterized problem, we are given a parameter $k$ in addition to the instance.
The running time is then not only analyzed in terms of the input size but also in terms of $k$.
We consider all problem variants introduced above with their canonical parameterization, i.e., parameterized by their solution size.
A parameterized problem is \emph{fixed parameter tractable (FPT)} if it can be solved in $f(k) \cdot n^{O(1)}$ where $f$ is a computable function.
\begin{description}
\item[{The $W$-Hierarchy.}]
To show that a problem is probably not in FPT, one can show hardness in terms of the $W$-hierarchy.
The \emph{$W$-hierarchy} consists of complexity classes $\mathrm{FPT} \subseteq \W{1} \subseteq \W{2} \subseteq \dots \subseteq \W{P}$, where each of the inclusion is assumed to be strict.
Many graph problems are known to be complete for $\W{1}$ and $\W{2}$, e.g., \textsc{Independent Set} and the above mentioned \DS are $\W{1}$- and $\W{2}$-complete, respectively.
We will see that the other variants of \Pds defined above are $\W{P}$-complete.

\item[{Proving $W[P]$-Hardness.}]
To show that a problem is $\W{P}$-hard, we need to reduce to it from another $\W{P}$-hard problem using a \emph{parameterized reduction}, i.e., a reduction that runs in FPT-time such that the change in the parameter is independent of the input size.
The $\W{P}$-hard problem we reduce from is \emph{\WMCS (\Wmcs)}, which is defined as follows.
A \emph{Boolean circuit} is a directed acyclic graph with a unique sink (the output node), where the sources are inputs and the inner nodes are logic gates (\AND, \OR, \NOT).
This defines a boolean function mapping the values on the input nodes to a Boolean output in the canonical way.
The problem \emph{\WCS (\Wcs)} with parameter $k$ asks whether there is a satisfying assignment that sets $k$ inputs to \TRUE and all other to \FALSE.
\emph{\WMCS (\Wmcs)} is the same problem with restriction that there are no \NOT-gates.
It is known that \Wmcs is $W[P]$-complete~\cite{downey2013fundamentals}.

\item[{Inclusion in $W[P]$.}]
Conversely, to show that a problem is in $\W{P}$, we use the following theorem.
\begin{theorem}[{\cite[Thm. 3.7]{cai1995structure}}, \cite{kneis2006parameterized}]
  \label{thm:wp contained ntm}
  Let $L$ be a parameterized problem in \NP.
  Then $L \in \W{P}$ if and only if there is a non-deterministic Turing machine deciding $L$ in at most $f(k) \log \abs{x}$ non-deterministic and $(\abs{x} + k)^{O(1)}$ deterministic steps, where $x$ is the input, $k$ the parameter, and $f$ a computable function.
\end{theorem}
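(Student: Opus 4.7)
The plan is to prove both directions of the characterization separately. Recall that by definition $L \in \W{P}$ means that $L$ admits a parameterized reduction to \Wcs, so the forward direction reduces to simulating this reduction by a non-deterministic machine, while the reverse direction requires encoding an arbitrary non-deterministic Turing machine computation as a weighted circuit satisfiability instance.

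For the forward direction ($L \in \W{P} \Rightarrow$ NTM), I would first apply the assumed parameterized reduction deterministically, producing in time $(\abs{x}+k)^{O(1)}$ a circuit $C$ of size at most $f(k)\cdot\abs{x}^{O(1)}$ with target weight $k' \le g(k)$. The NTM then non-deterministically guesses the $k'$ input indices to be set to \TRUE; encoding each index takes $O(\log \abs{C}) = O(\log f(k) + \log \abs{x})$ bits, so in total $O(g(k)(\log f(k) + \log \abs{x}))$ non-deterministic bits are used, which is bounded by $f'(k)\log \abs{x}$ for an appropriate computable $f'$. Finally, the NTM evaluates $C$ deterministically on the guessed assignment in polynomial time and accepts iff the output is \TRUE.

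For the reverse direction (NTM $\Rightarrow L \in \W{P}$), the idea is to encode an entire accepting computation of the NTM as a Boolean circuit whose weight-$k$ satisfying assignments are exactly the NTM's non-deterministic guess sequences. Given the machine $M$ with $b := f(k)\log \abs{x}$ non-deterministic and polynomially many deterministic steps, I would, after padding $f(k)$ to be at least $k$ and a multiple of $k$, partition the $b$ guess bits into $k$ blocks of $(f(k)/k)\log \abs{x}$ bits each. For each block the circuit provides $\abs{x}^{f(k)/k}$ input gates, one per possible bit pattern of that block. The total number of input gates is $k \cdot \abs{x}^{f(k)/k}$, which is FPT-bounded. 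The circuit then consists of two sub-circuits conjoined by an \AND-gate: a polynomial-size ``exactly-one-per-block'' consistency check, and a tableau sub-circuit that simulates the deterministic part of $M$ on the decoded guess sequence. Weight-$k$ satisfying assignments then correspond bijectively to accepting computation paths of $M$, and the entire circuit is of FPT size and constructible in FPT time.

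The hard part will be the reverse direction, specifically converting the $f(k)\log \abs{x}$ bits of non-determinism into a $\W{P}$-style \emph{weight-$k$} selection without blowing up the circuit beyond FPT size; the block decomposition combined with the ``exactly-one-per-block'' gadget is the key idea. Once this encoding is in place, the tableau simulation of the remaining polynomial-time deterministic computation of $M$ is the standard textbook construction and introduces no further difficulty.
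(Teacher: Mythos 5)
The paper does not actually prove this statement---it is imported verbatim from Cai, Chen, Downey and Fellows (Thm.~3.7) and used as a black box---so there is no in-paper proof to compare against; I can only judge your argument on its own terms, and the reverse direction contains a fatal flaw. You partition the $f(k)\log\abs{x}$ nondeterministic bits into $k$ blocks of $(f(k)/k)\log\abs{x}$ bits and introduce one input gate per possible block pattern, i.e.\ $\abs{x}^{f(k)/k}$ gates per block. You assert that $k\cdot\abs{x}^{f(k)/k}$ is FPT-bounded, but it is not: for, say, $f(k)=k^2$ this is $k\cdot\abs{x}^{k}$, which is not of the form $g(k)\cdot\abs{x}^{O(1)}$, so the circuit cannot even be written down by a parameterized reduction, let alone evaluated. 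The mistake stems from insisting that the target weight of the \Wcs instance be exactly $k$. A parameterized reduction is allowed to map the parameter to any computable function of $k$, so the standard construction uses $f(k)$ blocks of $\log\abs{x}$ bits each, with $\abs{x}$ input gates per block (one per possible block value), total size $f(k)\cdot\abs{x}$, and target weight $k'=f(k)$. With that change your ``exactly one gate \TRUE per block'' gadget plus the tableau simulation goes through; without it, the step you yourself identify as ``the key idea'' is exactly the step that fails.

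A secondary gap sits in the forward direction: you claim the assumed reduction to \Wcs runs in time $(\abs{x}+k)^{O(1)}$, but membership in $\W{P}$ only guarantees a reduction running in time $g(k)\cdot\abs{x}^{O(1)}$, which exceeds any polynomial in $\abs{x}+k$ when $g(k)$ dominates $\abs{x}$. The theorem's hypothesis that $L\in\NP$ is there precisely to handle this regime (one falls back on the polynomial-time \NP{} verifier when $k$ is large relative to $\abs{x}$), and your proposal never invokes that hypothesis, which is a sign the case split is missing. The guess-and-evaluate skeleton of the forward direction is otherwise fine, and matches the one-line argument the paper does give for the easy containment in its Lemma on \Extension{\IPDS}.
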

\end{description}

  \section{Power Dominating Set is $W[P]$-Complete}
\label{sec:wp complete}

We prove $W[P]$-completeness via a chain of parameterized reductions from the \WMCS (\Wmcs) problem.
\Wmcs has a monotone Boolean circuit as input and asks whether it can be satisfied by setting at most $k$ inputs to \TRUE, where $k$ is the parameter.
We assume familiarity with the $W$-hierarchy and parameterized reductions; for a brief introduction, see \Cref{sec:param-compl}.
We start by introducing a variant of the \Pds problem that we use as an intermediate problem in our chain of reductions.
%For brevity, we only sketch an outline of the proof in this section; for the full proof see \Cref{sec:wp completeness proofs}.

The input of the problem \emph{\IPDS (\Ipds)} is an instance of \Pds with the following additional information.
First, edges of the graph $G = (V, E)$ can be marked as \emph{booster edges}.
Secondly, we are given a set of \emph{implication arcs} $A \subseteq V \times V$.
We interpret $A$ as a set of directed edges on $V$ but perceive them as separate from the graph $G$, i.e., they do not affect the neighborhood.
In addition to the domination and propagation rule introduced in \Cref{sec:definitions}, we define the following to observation rules.

\begin{description}
\item[Booster rule.]
  Let $uv \in E$ be a booster edge.
  If $u$ is observed, then $v$ becomes observed and vice versa.
\item[Implication rule.]
  Let $(u, v) \in A$ be an implication arc and let $u$ be observed.
  Then $v$ also becomes observed.
\end{description}

We note that \Ipds is a generalization of \Pds in the sense that every \Pds instances is an instance of \Ipds with no booster edges and an empty set of implication arcs.
The extension variant \extension{\Ipds} is defined analogously to \extension{\Pds}.
Proving containment of \extension{\Ipds} in $W[P]$ is straight forward by giving an appropriate non-deterministic Turing machine.
We note that the analogous statement has been observed before for \Pds by Kneis et al.~\cite{kneis2006parameterized}.
Observe that this implies containment in $W[P]$ for all other problem variants we defined.

We first note that proving containment of \extension{\Ipds} in $W[P]$ is straight forward by giving an appropriate non-deterministic Turing machine.
The analogous statement for \Pds has been observed before by Kneis et al.~\cite{kneis2006parameterized}.

\begin{restatable}{lemma}{pdsInWP}
  \Extension{\IPDS} is in $\W{P}$.
\end{restatable}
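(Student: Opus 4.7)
The plan is to invoke Theorem~\ref{thm:wp contained ntm}: it suffices to design a non-deterministic Turing machine that decides \extension{\Ipds} using $O(k \log |x|)$ non-deterministic steps and polynomially many deterministic steps, where $k$ is the desired solution size and $x$ the input.

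First I would handle the non-deterministic phase. The machine guesses a candidate solution $S \subseteq V$ of size at most $k$ by guessing $k$ vertex indices; each index requires $\lceil \log |V|\rceil$ bits, giving a total of $O(k \log |x|)$ non-deterministic steps, which matches the required bound with $f(k) = O(k)$. The pre-selected vertices $X$ can be fixed deterministically (they must all be included), so effectively only the at most $k - |X|$ additional vertices need to be guessed; if $|X| > k$ the machine rejects immediately.

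Next comes the deterministic verification phase. The machine first checks that $X \subseteq S$ and $S \cap Y = \emptyset$; this takes polynomial time. It then computes the set of observed vertices starting from $N[S]$ and iteratively applying the four observation rules (domination, propagation, booster, implication) until no further vertex becomes observed. Each rule application that changes the observed set adds at least one vertex, so the fixed point is reached after at most $|V|$ rounds; a single round scans all edges, booster edges, and implication arcs in polynomial time. Finally, the machine accepts iff every vertex of $G$ is observed at the fixed point. The whole verification runs in $(|x|+k)^{O(1)}$ deterministic steps.

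There is no real obstacle here; the only thing to be careful about is accounting for the non-deterministic budget. The na\"ive alternative of guessing one bit per vertex to indicate membership in $S$ would use $|V|$ non-deterministic steps, which is too many. Guessing $k$ vertex indices of $\lceil \log |V| \rceil$ bits each is what keeps the count within $f(k)\log|x|$. The same argument works unchanged for \extension{\Pds}, the simple variant, and the plain (non-extension) variants, which is why containment in $W[P]$ follows for all problem variants introduced so far.
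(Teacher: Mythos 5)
Your proposal is correct and follows essentially the same route as the paper: guess the $k$-vertex solution in $k\log\abs{V}$ non-deterministic steps, verify the extension constraints and the exhaustive application of the observation rules deterministically in polynomial time, and conclude via \cref{thm:wp contained ntm}. The extra care you take with the non-deterministic budget (indices rather than per-vertex bits) and the handling of $X$ and $Y$ is a welcome elaboration of details the paper leaves implicit.
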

\begin{proof}
  A nondeterministic Turing machine can guess a solution $S$ of size $k$ in $k \log \abs{V}$ nondeterministic steps.
  It can then check whether all vertices in $G$ are observed by $S$ in polynomial time.
  Thus, by \cref{thm:wp contained ntm}, \extension{\Ipds} is in $W[P]$.
\end{proof}
As all other variants of the power dominating set problem we consider are special cases of \extension{\Ipds}, this also proves containment in $W[P]$ for the other variants.

\subsection{Power Dominating Set to Simple Power Dominating Set}
\label{sec:pds to simple pds}

\begin{figure}
  \centering
  %\tikzpicturename{proof-outline} % do not externalize to keep references up to date
\begin{tikzpicture}
  \def\extension#1{\textsc{#1-e}}
  \graph[grow right=4cm,branch down=1.5cm,simple] {
    wmcs/"{\Wmcs}" -> {%
      ipdse/"\extension{\Ipds}" -!- pdse/"\extension{\Pds}" -!- spdse/"\extension{\textsc{s-\Pds}}",
      ipds/\Ipds ->["{\cref{lem:ipds to pds}}"] pds/\Pds ->["{\cref{lem:pds to spds}}"] spds/\textsc{s-\Pds}
%      ipdse/"\extension{\Ipds}" -!- pdse/"\extension{\Pds}" -!- spdse/"\extension{\textsc{simple-\Pds}}",
%      ipds/"\Ipds" -> pds/"\Pds" -> spds/"\textsc{simple-\Pds}";
    };
    wmcs -!- ipds;
    wmcs ->["{\cref{lem:wmcs to ipds}}"] ipdse;
    ipdse ->["{\cref{lem:ipdse to ipds}}"'] ipds;
    {pds, spds} ->[densely dashed] {pdse, spdse};
  };
\end{tikzpicture}
  \caption{Reduction steps to show that \Pds and its variants are $W[P]$ hard. The solid arrows indicate our parameterized reductions described in this section. The hardness of the extension problems follows from the hardness of their basic problems, indicated by the dashed arrows.}
  \label{fig:hardness overview}
\end{figure}

Our chain of reductions to prove $W[P]$-hardness is illustrated in \Cref{fig:hardness overview}.
We start with the reduction from \Pds to \textsc{Simple \Pds}, which is similar to the proof of $\W{2}$ hardness by \Pds~\cite{kneis2006parameterized,guo2008algorithms}.
The core idea is to simulate a non-propagating vertex with a propagating vertex with an additional leaf attached.

\begin{restatable}{lemma}{PdsToSpds}
  \label{lem:pds to spds}
  There is a parameterized reduction from \PDS to \textsc{Simple \PDS}.
\end{restatable}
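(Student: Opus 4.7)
The plan is to use the gadget hinted at in the lemma's preamble: for each non-propagating vertex $v$ of the input \PDS instance $(G, P)$, attach a fresh pendant vertex $p_v$ whose only neighbor is $v$, and in the resulting graph $G'$ declare every vertex propagating. Setting $k' := k$ produces a \textsc{Simple \PDS} instance. The construction runs in polynomial time and preserves the parameter exactly, which is what a parameterized reduction requires.

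The main obstacle is to verify that promoting the originally non-propagating vertices to propagating does not create spurious propagations along edges of $G$. The key invariant, proved by tracking when pendants become observed, is the following: for every $S \subseteq V(G)$ and every originally non-propagating vertex $v$, the observation process on $G'$ under $S$ never propagates from $v$ to any original neighbor $u \in N_G(v)$. Indeed, if $v \in S$, then $v$ dominates all its neighbors at time $0$ and thereafter has zero unobserved neighbors, so the propagation rule cannot trigger. If $v \notin S$, then $p_v$ has $v$ as its only neighbor in $G'$, so $p_v$ can become observed only by $v$ propagating to it, which in turn requires every original neighbor of $v$ to already be observed. Hence whenever $p_v$ is observed, $N_G(v)$ is already observed too, making it impossible for any $u \in N_G(v)$ to be a \emph{unique} unobserved neighbor of $v$. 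A symmetric case analysis shows that a pendant $p_v$ can itself never propagate-observe its attachment $v$, since whenever $p_v$ becomes observed (either via $v \in S$, or via $v$ first propagating to $p_v$) the vertex $v$ is already observed. Consequently the observation process on $G'$ restricted to $V(G)$ agrees step-by-step with the observation process on $(G, P)$.

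With the invariant in hand, both directions of the equivalence are immediate. For the forward direction, any \PDS solution $S$ of $(G, P)$ observes all of $V(G)$ in $G'$ via the same steps, and once this original process finishes, each $p_v$ is either dominated by $v \in S$ or receives propagation from $v$ (by then all other neighbors of $v$ are observed, so $p_v$ is the unique unobserved neighbor), so $G'$ is fully observed by $S$. For the reverse direction, given a \textsc{Simple \PDS} solution $S'$ of size at most $k$ for $G'$, we may replace every pendant $p_v \in S'$ by its attached vertex $v$; this does not increase $|S'|$ and only enlarges observation coverage. The resulting set lies in $V(G)$, observes $V(G)$ in $G'$, and hence, by the invariant, observes $V(G)$ in $(G, P)$, giving a \PDS solution of size at most $k$.
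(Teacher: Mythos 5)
Your proposal is correct and follows essentially the same route as the paper: attach a pendant leaf to each non-propagating vertex, keep the parameter unchanged, and argue that the leaf blocks any propagation from its attachment vertex into the original graph. Your explicit invariant (that $p_v$ can only become observed after all of $N_G(v)$ is, so $v$ never has a unique unobserved original neighbor) is a somewhat more careful justification of the step the paper states briefly, but it is the same underlying idea.
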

\begin{proof}
  Our proof is similar to the proof of \W{2} hardness of \PDS~\cite{kneis2006parameterized,guo2008algorithms}.
  Let $G=(V, E)$ be a \Pds-instance with non-propagating vertices $Z \subseteq V$ (i.e., $V \setminus Z$ are propagating).
  We build a \textsc{simple-\Pds}-instance $G'=(V \cup V', E \cup E')$ (i.e., $V \cup V'$ are all propagating) from $G=(V, E)$ by attaching a new leaf to each non-propagating vertex $v \in Z$.
  % We denote the non-propagating vertices in $G$ by $Z$, the new leaves by $V'$ and the edges connecting them to the vertices in $Z$ by $E'$.\todo{Notation}
  On an intuitive level, attaching a leaf to a vertex has two effects.
  First, it is never optimal to choose a leaf to be part of the solution.
  Second, a vertex with an attached leaf can never propagate to any vertex except the leaf.
  Thus, attaching a leaf to every non-propagating vertex has the desired effect.
  Note that this is a parameterized reduction as the parameter is not changed.
  To make this argument more formal, we show that $G$ has a solution of size $k$ if and only if $G'$ has a solution of size $k$.

  Let $S$ be a power dominating set of $G$, i.e., applying the domination rule and the propagation rule (restricted to propagating vertices) exhaustively observes every vertex in $V$.
  Now interpret $S$ as a solution for $G'$.
  Note that the applying the domination rule has the same effect as before.
  Additionally, as the application of the propagation rule to $G$ was restricted to propagating vertices, whose neighborhood did not change in $G'$, it can be applied in the same way to $G$.
  Thus, all vertices in $V$ will be observed in $G'$, too.
  Applying additional propagation rules also observes the vertices in $V'$.

  Now let $S$ be a power dominating set of $G'$.
  Note that it is never optimal to choose one of the new degree-1 vertices as part of the solution, i.e., we can assume without loss of generality that $S \cap V' = \emptyset$.
  As $S$ is a power dominating set of $G'$, applying the domination rule and the propagation rule exhaustively observes all vertices.
  Interpreting $S$ as a solution for $G$, we can apply the domination rule as for $G'$.
  Additionally the propagation rule can be applied in the same way for propagating vertices, i.e., vertices not in $Z$.
  If, for $G'$, the propagation rule is applied to a vertex in $u \in Z$, then the newly observed vertex must be the leaf attached to $u$ in the construction of $G'$.
  Thus, all applications of the propagation rule in $G'$ also happen in $G$ except for those observing vertices in $V'$.
  Hence, $S$ is also a solution in $G$ with non-observed vertices $Z$.
\end{proof}

\subsection{Implicating Power Dominating Set to Power Dominating Set}
\label{sec:ipds to pds}
%Our plan is to proceed in two steps.
%First, we give a small gadget that simulates an implication arc using booster edges.
%This lets us get rid of all implication arcs.
%Afterwards, we show how to eliminate booster edges, leaving us with an instance of \PDS.
%To represent implication arcs and booster edges graphically, we use the notations in \cref{fig:implication short,fig:booster short}.

The reduction from \Ipds to \Pds, works in two steps.
First, we show that we can eliminate implicating arcs by replacing each of them with the small gadget show in \Cref{fig:implication gadget}.
Using another gadget (shown in \cref{fig:booster gadget}), we eliminate booster edges in a similar way, yielding the reduction.

\begin{figure}
  \subcaptionbox{
    Gadget simulating an implication arc form $x$ to $y$ with booster edges (marked with green diamonds).
    \label{fig:implication gadget}
  }[.48\linewidth]{\tikzpicturename{gadget-implication-expanded}
\begin{tikzpicture}[pds instance]%[baseline={(a.south)}]
%	\graph[empty nodes] {[nodes=node] a["$a$"] -- b1[booster] -- c[node] -- {[nodes={y=0.5}] b2[booster] -- c1[node], b3[booster] -- c2[node]} -- b["$b$"]};
	\graph[empty nodes,grow right=0.5] {[nodes={node,propagating}]
		x["$x$"] -- b1[booster]
		-- c1 -!- {[nodes={y=0.5}] b11[coordinate] -!- v11, b12[coordinate] -!- v12}
		%-- c -- {[nodes={y=0.5}] b21[booster] -- v21, b22[booster] -- v22}
		-- c3["$c$"] -- b2[booster] -- y["$y$"]
	};
  \draw (c1) -- node[sloped,booster] {} (v11);
  \draw (c1) -- node[sloped,booster] {} (v12);
	\node[fit={(b1)(b11)(b12)(b2)},draw=black,dotted,rounded corners=5mm] {};%,label=$I_{a}$
\end{tikzpicture}}
  \hfill
  \subcaptionbox{
    %A gadget for simulating an \texttt{and} gate.  The bottom $\wedge$-node is observed iff all input nodes are observed.\label{fig:small and gadget}
    Gadget for a edge between $x$ and $y$.
    The leaves enforce the selection of $b$ (red).
    Thus $v_{xy}$ is observed (green) by $b$ and can in turn observe either $x$ or $y$ by the propagation rule.
    \label{fig:booster gadget}
  }[.48\linewidth]{\tikzpicturename{booster-expanded}
\begin{tikzpicture}[pds instance,every label/.style={anchor=base}]
	\graph[empty nodes] {
%		subgraph P_n[name=P,n=3,clockwise,phase=30,radius=5mm,nodes={node,shift={(1,1)}}];
		b["$b$" anchor=west,node,solution,propagating,shift={(1, 0.75)}];
		{[nodes={node,propagating,baseline}]a["$x$"] -- v["$v_{xy}$" anchor=base west,observed,solution=no,propagating] -- c["$y$"]};
		v -- b;
    b -- {subgraph I_n[n=2,branch left=5mm,name=P,phase=60,radius=7mm,nodes={node,propagating,shift={([shift={(2.5mm,5mm)}]b.center)}}]};
    b --[densely dotted] {subgraph I_n[n=4,circular placement,nodes={draw=none,shift={(b.center)}},phase=180,radius=12mm]};
	};
\end{tikzpicture}}%\vspace{-.5\baselineskip}
	\caption{Gadgets for implication arcs and booster edges.}
\end{figure}
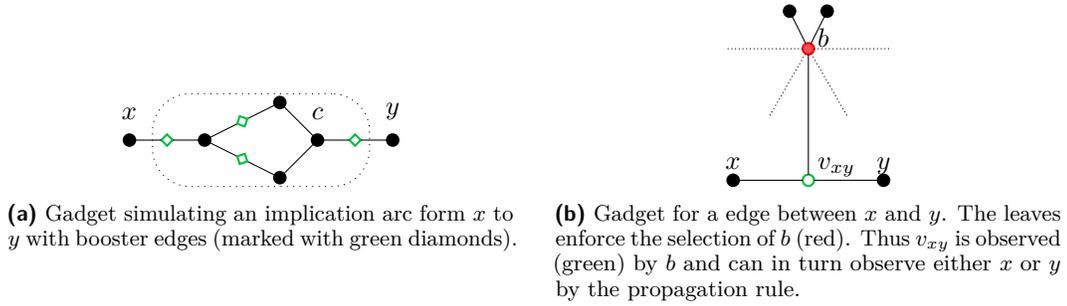

\begin{lemma}
  \label{lem:implication gadget}
  Every instance of \IPDS can be reduced to an equivalent instance with no implication arcs without changing the parameter.
\end{lemma}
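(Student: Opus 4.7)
The plan is to process implication arcs one at a time: for each arc $(x,y) \in A$, remove it from $A$ and insert between $x$ and $y$ a fresh copy of the constant-sized gadget depicted in \Cref{fig:implication gadget}. The gadget introduces a central vertex $c$ with two attached leaves, connected to $x$ and $y$ through a short chain of booster edges. The transformation is polynomial in $|V|+|A|$, produces an \Ipds-instance with no implication arcs, and leaves the parameter $k$ unchanged. It therefore remains to show that the original and the transformed instance admit solutions of the same minimum size.

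For the forward direction I would take an \Ipds-solution $S$ of the original instance and verify that $S$ still solves the transformed one. On $V$ the observation process is unchanged except that each removed arc $(x,y)$ must be simulated by the gadget $G_{xy}$; the key check is that, once $x$ becomes observed, booster propagation inside $G_{xy}$ cascades through the center $c$, the two leaves, and finally reaches $y$, so that the implication is preserved and all new gadget vertices are observed in the process. For the backward direction, given a solution $S'$ of the transformed instance I would first argue that $S'$ can be chosen disjoint from every gadget interior: any internal vertex of a gadget $G_{xy}$ that lies in $S'$ can be swapped for the endpoint $x$ without losing any observation, because the gadget is entirely observed whenever $x$ is. Once $S' \subseteq V$, running the observation rules in parallel on both instances shows that $S'$ is also a solution of the original \Ipds-instance.

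The main obstacle is to ensure that the gadget is genuinely one-directional. Because booster edges are symmetric, a naive chain of boosters between $x$ and $y$ would also observe $x$ whenever $y$ is observed, and a $G'$-solution could exploit this to manufacture a solution that is invalid for the original instance. The two leaves attached to the center $c$ are designed to break this symmetry: depending on the side from which observation enters $G_{xy}$, the center $c$ is left with a different number of unobserved neighbours, and the propagation rule---which fires only when exactly one neighbour is unobserved---is enabled in the intended direction $x \to y$ but blocked in the reverse direction. The technical heart of the proof is therefore a case analysis over all partial observation patterns of $\{x,y\}$ together with the gadget interior, verifying that the exhaustive application of the domination, propagation, and booster rules yields precisely the forward implication and nothing more.
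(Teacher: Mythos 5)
Your proposal takes essentially the same route as the paper's proof: replace the implication arcs one at a time by the gadget of \Cref{fig:implication gadget}, simulate the rule sequences of one instance in the other for both directions, assume without loss of generality that no gadget-interior vertex is selected (swapping it for $x$), and rest the correctness on the one-directionality of the gadget, i.e.\ on the fact that when observation enters from the $y$ side the centre $c$ is stuck with two unobserved auxiliary neighbours and cannot propagate. The only imprecision is your description of the gadget's interior --- the two auxiliary vertices are not leaves hanging off $c$ but degree-two vertices joined to the $x$-side vertex by booster edges and to $c$ by ordinary edges, precisely so that each of them becomes observed and can then propagate to $c$ when $x$ is observed --- but since you defer to the figure and state the correct blocking mechanism, this does not change the argument.
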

\begin{proof}
  Let $G$ be an instance of \Ipds with implication arcs $A$ and let $a = (x, y) \in A$ be an implication arc.
  We show that replacing $a$ with the gadget $I_a$ in \cref{fig:implication gadget} yields an equivalent instance with one less implication arc.
  Applying this replacement to all implication arcs yields the claim.

  To explain this on an intuitive level, consider the implication gadget $I_{a}$ and first assume that $x$ is observed.
  It is not hard to verify that, by applying the booster and the propagation rules, all vertices in $I_a$ and $y$ are observed.
  Conversely, if only $y$ is observed, then $c$ cannot propagate due to its two unobserved neighbors.
  Thus, $I_a$ mimics the behavior of an implication arc.

  To formally prove the claim, let $G'$ be the instance obtained by replacing one implication arc $a$ with the implication gadget $I_a$.
  We show $G$ has a solution of size $k$ if and only if $G'$ has a solution of size $k$.
  For the first direction, let $S$ be a solution of $G$.
  Consider a corresponding sequence $r_1, \dots, r_\ell$ of observation rules.
  Let $V_i$ be the set of observed vertices after the application of the first $i$ rules $r_1, \dots, r_i$.
  Observe that the vertices in $S$ are only observed after the domination rule is applied and thus we define $V_0 = \emptyset$.
  Now interpret $S$ as a candidate solution for $G'$.
  Based on $r_1, \dots, r_\ell$, we give a sequence of observation rules on $G'$ that fully observes the graph.
  Our argument is of inductive nature, i.e., for fixed $i$, we assume that we have a rule sequence for $G'$ that observes at least the vertices in $V_{i - 1}$ and we show that it can be extended to observe the vertices in $V_i$ using a set of rules based on $r_i$.
  In the following we discriminate between the possible types of $r_i$.
  If $r_i$ is the domination or the booster rule, it can be applied as in $G$.
  If $r_i$ is the propagation rule and the propagating vertex is not $x$ or $y$, it can be applied as in $G$.
  Otherwise, note that the only new neighbors to $x$ and $y$ are connected via booster edges.
  Thus, if $x$ or $y$ are observed, we can apply the booster rule to observe their new neighbors.
  Afterwards, the propagation rule can be applied as usual, as the relevant vertex has again only one unobserved neighbor.
  If $r_i$ is the implication rule applied to an implication arc other than $a = (x, y)$, we can apply it as before.
  Otherwise, the above observation shows that all vertices in the implication gadget $I_a$ and $y$ will be observed, which observes at least all vertices in $V_i$.
  It thus follows that after step $\ell$ all vertices in $V_\ell = V$ will be observed in $G'$.
  Additionally the vertices inside the gadget $I_a$ can also be observed by applying observation rules knowing that $x \in V_\ell$ is observed.

  Conversely, assume that $S$ is a solution for $G'$.
  Without loss of generality, we assume that $S$ does not contain any vertices from $I_a$ as we can otherwise choose $x$ instead and apply booster and propagation rules to observe the whole gadget.
  Additionally let $r_1, \dots, r_\ell$ be a sequence of rule applications that observe the whole graph $G'$ and let $V_i$ again be the set of vertices observed after applying $r_1, \dots, r_i$ where $V_0 = \emptyset$.
  Now consider the lowest index $\iota$ such that $x \in V_\iota$.
  Then without loss of generality, assume that the rules following $r_\iota$ are booster and propagation rules that observe the whole gadget $I_a$ and $y$ before any other rules are applied.
  Now consider $S$ to be a solution candidate in $G$.
  We can apply the sequence of rules $r_1, \dots, r_\ell$ as in $G'$ except for the rules observing $I_a$ following $r_\iota$.
  However, this subsequence of rules can be replaced with a single application of the implication rule on $(x, y)$, which concludes the proof.
\end{proof}

The gadget for simulating booster edges shown in \cref{fig:booster gadget} requires adding a globally unique selected vertex $b$ to which all gadgets are connected.
We enforce that $b$ is selected by adding two leaves.%\todo{Add $k$ leaves to work with arbitrary $k$?}
\begin{lemma}
  \label{lem:booster vertex}
  If an \Ipds-instance contains a vertex $b$ with two or more leaves that don't have any booster edges or implication arcs, there is a minimum power dominating set containing $b$.
\end{lemma}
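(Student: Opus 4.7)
The plan is an exchange argument: start from an arbitrary minimum power dominating set $S$, and if $b \notin S$, modify $S$ to include $b$ without increasing its cardinality.

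The first step is to show that at least one of the two leaves $\ell_1, \ell_2$ of $b$ must already lie in $S$. Because neither leaf has an incident booster edge or implication arc, the only ways a leaf $\ell_i$ can ever become observed are: by being selected, by being dominated from $b$ (which requires $b \in S$), or by the propagation rule applied at $b$. Since $b \notin S$, the middle option is unavailable. Now look at the observation sequence witnessing that $S$ is a solution and let $\ell_1$ be whichever of the two leaves is observed first; at that moment $\ell_2$ is still unobserved, so $b$ has (at least) two unobserved neighbors and cannot propagate to $\ell_1$. Hence $\ell_1$ must be in $S$.

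The second step is the swap. Set $S' := (S \setminus \{\ell_1\}) \cup \{b\}$, which has the same size as $S$. I would verify by induction on the number of applied rules that every vertex observed by $S$ after $i$ steps is also observed by $S'$ after $i$ steps: replacing the selected leaf $\ell_1$ with $b$ itself means $b$ is observed from the start (instead of via domination from $\ell_1$), and $\ell_1$ is immediately dominated by $b$, so the remaining derivation in $G$ goes through unchanged. Hence $S'$ is a minimum power dominating set containing $b$.

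The main obstacle is the first step: one must rigorously exclude the scenario in which both leaves are observed solely via propagation at $b$, which a priori could happen if they become observed at different times. The argument above hinges on two hypotheses simultaneously: that $b$ has \emph{two} leaves (so $b$ has two unobserved neighbors at the critical moment) and that these leaves carry no booster edges or implication arcs (so no alternative observation pathway exists). Once this is nailed down, the swap itself is essentially cosmetic, since selecting a vertex is at least as strong as having it merely observed.
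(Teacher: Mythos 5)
Your proof is correct and follows essentially the same route as the paper's: both argue that the two leaves mutually block observation via propagation at $b$, so one of $b, \ell_1, \ell_2$ must be selected, and then exchange a selected leaf for $b$ itself. Your write-up is in fact somewhat more explicit than the paper's about the final swap $S' = (S \setminus \{\ell_1\}) \cup \{b\}$ and the monotonicity argument behind it, which the paper compresses into a single sentence.
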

\begin{proof}
  We show that at least one of $b$ or its leaves, $u$ and $v$  must be selected.
  First, assume that none of these vertices is selected.
  If $b$ is not propagating, there is no observation rule by which $u$ and $v$ can be observed.
  Otherwise the only candidate rule is the propagation rule which in turn cannot be applied to $b$ before at least one of the leaves is observed.
  Hence, every power dominating set must contain at least one of $b, u$ or $v$.

  In this case we can always select $b$ and observe the leaves with the domination rule.
\end{proof}

\begin{lemma}
  \label{lem:booster gadget}
  Every \Ipds-instance with booster edges can be reduced to an equivalent instance without booster edges.
\end{lemma}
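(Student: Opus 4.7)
The plan is to replace every booster edge by the gadget shown in \cref{fig:booster gadget}. Given an \Ipds-instance $G=(V,E)$ with booster edges $B \subseteq E$, I would build an equivalent instance $G'$ by introducing a single globally shared propagating vertex $b$ with two pendant leaves attached (so that \cref{lem:booster vertex} forces $b$ into every minimum solution), and, for each booster edge $xy \in B$, removing the edge and adding a new propagating vertex $v_{xy}$ adjacent to $x$, $y$, and $b$. All implication arcs are kept. The parameter shifts from $k$ to $k+1$ to account for $b$, which is independent of the input size, so this remains a parameterized reduction.

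For the forward direction, take a solution $S$ for $G$ of size $k$ and claim that $S' = S \cup \{b\}$ solves $G'$. The domination rule applied to $b$ immediately observes its two leaves and every gadget vertex $v_{xy}$. Consequently, the gadget-neighbors of $x$ and $y$ are observed from the outset and never block propagation at $x$ or $y$. Moreover, whenever $x$ becomes observed during the run, the propagating vertex $v_{xy}$ then has only $y$ as unobserved neighbor and propagates to it, faithfully simulating the booster rule on $xy$. Replaying a fixed rule sequence for $G$ step by step in $G'$, inserting these gadget propagations whenever a booster rule fires in $G$, observes all of $V$ in $G'$; the gadget interiors are then observed as well.

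For the reverse direction, take a solution $S'$ for $G'$ of size $k+1$. By \cref{lem:booster vertex} we may assume $b \in S'$, and by a swap argument (any selected $v_{xy}$ can be replaced by $x$ since $b$ dominates every gadget vertex anyway and the remaining gadget vertices behave at least as well) we may assume no $v_{xy}$ is selected. Setting $S = S' \setminus \{b\}$ yields $|S|=k$, and I would translate a fixed observation-rule sequence of $G'$ into one for $G$ by keeping every rule outside the gadgets verbatim, replacing each gadget propagation $v_{xy} \to y$ (which can only fire once $x$ has been observed) by a single application of the booster rule on $xy$ in $G$, and discarding the observations of the leaves and of the $v_{xy}$ themselves.

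The hard part will be the reverse direction: verifying that no propagation step at $x$ or $y$ in $G'$ is enabled by the gadget but unavailable in $G$. This should follow because all $v_{xy}$-neighbors of $x$ and $y$ are observed from the first domination step via $b$, so the set of unobserved neighbors of $x$ (or $y$) at each point of the translated run on $G$ coincides with its set of unobserved neighbors in the run on $G'$. Hence propagation at $x$ or $y$ is applicable in $G$ exactly when it is applicable in $G'$, which is what makes the step-by-step translation go through.
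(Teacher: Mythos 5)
Your proposal matches the paper's proof essentially step for step: the same gadget (subdividing each booster edge $xy$ with a propagating vertex $v_{xy}$ attached to a single global vertex $b$ forced into the solution by two pendant leaves via \cref{lem:booster vertex}), the same parameter shift from $k$ to $k+1$, and the same forward/reverse translation of rule sequences in which propagation through $v_{xy}$ simulates the booster rule and vice versa. The one point you flag as "the hard part" is resolved exactly as the paper does it — whenever $x$ becomes observed in the translated run on $G$, immediately fire the booster rule on $xy$ so that $y$ never appears as an extra unobserved neighbor of $x$ in $G$ — so the argument goes through.
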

\begin{proof}
  Let $G = (V, E)$ be an \Ipds-instance with booster edges and let $xy \in E$ be a booster edge.
  Our construction requires a known selected vertex.
  To ensure such a vertex exists, we insert a new vertex $b$ and attach two leaves to $b$.
  By \cref{lem:booster gadget} we can assume without loss of generality that $b$ is selected.

  We show that subdividing $xy$ by adding a third vertex $v_{xy}$ between $x$ and $y$ and connecting $v_{xy}$ to $b$ yields an equivalent instance $G'$ with one fewer booster edge.
  Replacing every booster edge in this way yields an instance without booster edges.
  This gadget construction is also shown in \cref{fig:booster full}.
  Note that we only need to insert $b$ once when replacing the booster edges iteratively.

  Intuitively, the gadget introduces a vertex $v_{xy}$ between $x$ and $y$ that is always observed.
  Then, if either $x$ or $y$ become observed at some point, $v_{xy}$ has only one unobserved neighbor left which becomes observed by applying the propagation rule.

  To formally prove the claim, we show that $G$ has a power dominating set of size $k$ if and only if $G'$ has a power dominating set of size $k + 1$.
  For the first direction let $S$ be a minimum power dominating set of $G$ and let $r_1,\dots,r_\ell$ be a sequence of observation rules the application of which observes all vertices.
  We show that $S \cup \set{b}$ is a power dominating set of $G'$.
  Based on $r_1,\dots,r_\ell$ we give a sequence of rule applications that observes all vertices in $G'$.
  %We denote by $V_i$ the set of vertices observed after applying rule $r_i$ where $V_0$ is the empty set.
  We use an inductive argument, i.e. for a given fixed $i$ we assume that we have a sequence of observation rules $r'_1,\dots,r'_j$ for $G'$ that observes at least all vertices observed by $r_1,\dots,r_{i-1}$ in $G$.
  Based on $r_i$, we extend that sequence with observation rules $r'_{j+1},\dots,r'_{j'}$ to observe all missing vertices that become observed by applying $r_i$.
  %We use an inductive argument, i.e. for a given fixed $i$ we assume that we have a sequence $r'_1,\dots,r'_j$ of observation rules that observes at least the vertices observed by $r_1,\dots,r_{i-1}$ in $G'$ and we can extend it to a sequence $r'_1,\dots,r'_j,\dots,r'_{j'}$ where the sequence $r'_{j+1},\dots,r'_{j'}$ is based on $r_{i}$.
  The first rule $r'_1$ we apply in $G'$ is the domination rule on $b$.
  %We further assume that the booster rule is applied to $x$ immediately after it becomes observed and that the same applies to $y$ respectively.
  In each step we handle rules that are affected by the introduction of the gadget.

  The booster gadget does not affect the implication rule which can thus be applied as before.
  The other rules require adaptation if they are applied to $x$ or $y$.

  If $r_i$ if the domination rule applied to $x$, we use the domination rule in $G'$ and add an application of the propagation rule on $v$ to ensure $y$ becomes observed.
  The same applies to the domination rule on $y$.
  All other edges remain the same in the construction and thus the domination rule can be applied as before.

  If $r_i$ is the propagation rule and $x$ is the propagating vertex, propagating to $y$, we could instead use the booster rule, so we proceed like described there.
  The same applies if $y$ is the propagating vertex, observing $x$.
  All other application of the propagation rule can be applied unchanged in $G'$ due to our induction hypothesis.

  If $r_i$ is booster rule applied to $xy$ we replace it by an application of the propagation rule applied to $v_{xy}$.
  The inserted vertex $v_{xy}$ is observed in the application of the domination rule in $r'_1$.
  Applying the booster rule requires an observed endpoint, let this be $x$.
  By assumption, $x$ is also observed in $G'$ where $v$ has thus at most one unobserved neighbor, $y$.
  The propagation rule can thus be applied in $G'$ and $y$ becomes observed.
  Applications of the booster rule to other edges are applied in $G'$ in the same way as in $G$.

  The argument for the other direction follows the same structure with the roles of $G$ and $G'$ swapped.
  Let $S$ be a minimum power dominating set of $G'$ and let $r'_1,\dots,r'_\ell$ be a corresponding sequence of observation rules observing all vertices.% and let $V_i$ be the set of observed vertices in step $i$ where $V_0$ is empty.
  Given $r'_1,\dots,r'_\ell$ we construct a sequence of rule applications that observes all vertices in $G$.
  %Without loss of generality we assume that the booster rule is applied to $xy$ as soon as $x$ or $y$ become observed, i.e. if $x$ enters $V_i$ in step $i$, then $r_{i+1}$ is the booster rule applied to $xy$.

  We ignore all rules applied to $b$ and its attached leaves.
  With the exception of the propagation rule, all observation rules can be applied in $G$ in the same way as in $G'$.
  If $r'_i$ is the propagation rule we distinguish three cases by the propagating vertex.
  \begin{enumerate}
    \item The propagating vertex is $v_{xy}$.
      We know that $x$ or $y$ is observed; without loss of generality assume this is $x$.
      We apply the booster rule on $xy$ and $y$ becomes observed in $G$.
    \item The propagating vertex is $x$.
      We first apply the booster rule on $xy$ to make sure $y$ is observed in $G$.
      Now $x$ has the same unobserved neighbor in $G$ and $G'$ and thus the rule can be applied like before.
      The same applies if the propagating vertex is $y$.
    \item Otherwise, the propagation rule can be applied as before.
  \end{enumerate}
\end{proof}

\begin{restatable}{lemma}{IpdsToPds}
  \label{lem:ipds to pds}
  There is a parameterized reduction from \IPDS to \PDS.
\end{restatable}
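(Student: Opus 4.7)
The plan is to simply chain together the two gadget replacements established in \cref{lem:implication gadget,lem:booster gadget}. Given an \Ipds instance $(G, A, k)$ with implication arcs $A$ and booster edges, I would first invoke \cref{lem:implication gadget} to produce an equivalent \Ipds instance $(G_1, k)$ that contains no implication arcs but may still contain booster edges (indeed, the implication gadget of \cref{fig:implication gadget} itself introduces fresh booster edges, which is why the order of the two reductions matters). Then I would invoke \cref{lem:booster gadget} on $(G_1, k)$ to obtain a \Pds instance $(G_2, k+1)$ free of both implication arcs and booster edges.

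The correctness argument is essentially bookkeeping. Equivalence between $(G,A,k)$ and $(G_1,k)$ is exactly what \cref{lem:implication gadget} gives, and equivalence between $(G_1,k)$ and $(G_2,k+1)$ is exactly \cref{lem:booster gadget}; composition of equivalences is an equivalence. For the parameter, the first step keeps $k$ unchanged, the second step increases it by exactly one (because a single global enforcement vertex $b$, together with its two leaves, suffices for all subdivided booster edges, as noted in the proof of \cref{lem:booster gadget}). Hence the overall parameter change is $k \mapsto k+1$, which is independent of the input size, so the transformation is a parameterized reduction.

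For the running time, the implication-arc replacement adds a constant number of vertices and edges per arc, and likewise each booster-edge subdivision adds a constant number of vertices plus a single edge to $b$; the vertex $b$ with its two leaves is added once. Thus $G_2$ has size polynomial in $\abs{V(G)} + \abs{E(G)} + \abs{A}$, and the construction is clearly computable in polynomial time, which is well within the FPT-time budget required by a parameterized reduction.

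I do not anticipate any real obstacle here beyond getting the order of operations right: one must eliminate implication arcs \emph{first}, because the implication gadget in \cref{fig:implication gadget} relies on booster edges, and then eliminate booster edges in a second pass. Once the order is fixed, the two preceding lemmas do all the work, and the only remaining check is that adding a single enforcement vertex $b$ suffices globally for all booster-edge replacements (rather than one per edge), which is immediate from the proof of \cref{lem:booster gadget}.
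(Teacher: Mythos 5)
Your proposal is correct and follows essentially the same route as the paper: compose \cref{lem:implication gadget} and \cref{lem:booster gadget}, with the parameter increasing by exactly one due to the single global enforcement vertex $b$. Your explicit remark that implication arcs must be eliminated before booster edges (since the implication gadget itself introduces booster edges) is a detail the paper leaves implicit, but it is the right reading of the construction.
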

\begin{proof}
  An \Ipds-instance $G$ may have implication arcs and booster edges, which are not allowed in \Pds instances.
  As shown in \cref{lem:implication gadget,lem:booster gadget}, we can construct an equivalent instance instance $G'$ which does not contain implication arcs or booster edges.
  Lacking those special  edges, $G'$ is also a \Pds instance.
  Thus, this construction is a parameterized reduction from \Ipds to \Pds, increasing the parameter, i.e. the solution size, by one in the process.
\end{proof}

\subsection{Extension to Non-Extension (for IPDS)}
\label{sec:ipdse to ipds}

A reduction from \extension{\Ipds} to \Ipds requires a mechanism to enforce the selection of certain vertices and to make sure other vertices are not selected in a minimum power dominating set.
We already saw a way to ensure that a vertex gets selected in \cref{lem:booster vertex}.
%It thus remains to show that we can also enforce the exclusion of certain vertices from a minimum power dominating set.
The core difficulty thus comes from enforcing the excluded vertices to not be selected.
The basic idea how we achieve this is by using many copies of the graph and an additional clique of non-propagating vertices.
The vertices in the clique represent the vertices that are allowed to be selected and provide the only connection between the copies.
With this construction, selecting vertices outside the clique is never optimal.

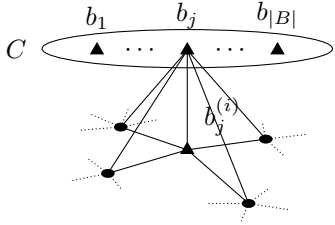
\begin{figure}
  \tikzpicturename{rpds}
\begin{tikzpicture}[pds instance,x=8mm,y=8mm]
	\path (0,0) node[input,label={$b_1$}] (x1) {} -- node[pos=0.25] {$\cdots$} node[input,pos=0.5,label={$b_j$}] (xi) {} node[pos=0.75] {$\cdots$} ++(3,0) node[input,label={$b_{\abs{B}}$}] (xp) {};
	\node[draw=black,shape=ellipse,fit={(x1)(xi)(xp)},"$C$" left] (c) {};
	\begin{scope}[node/.append style={shape=ellipse,minimum height=0.6\pgfplotmarksize}]
		\node[input,below=1.5 of xi,"30:$b_{j}^{(i)}$"] (y1) {};
		\node[node,above left=0.3 and 1.0 of y1] (y2) {};
		\node[node,above right=0.1 and 1.2 of y1] (y3) {};
		\node[node,below left=0.3 and 1.2 of y1] (y4) {};
		\node[node,below right=0.8 and 0.9 of y1] (y5) {};
	\end{scope}

	\foreach \v/\n/\exp in {
		y2/3/{55*\noexpand\i+25},
		y3/2/{80*\noexpand\i-65},
		y4/2/{85*\noexpand\i+120},
		y5/3/{80*\noexpand\i+180}
	}{
		\foreach \i in {0,...,\n}{
			\edef\exp{\exp}
			\pgfmathsetmacro\angle{\exp}
			%\pgfmathsetmacro\angle{55*\i+25}
			\pgfmathsetmacro\x{cos(\angle) * 0.7}
			\pgfmathsetmacro\y{sin(\angle) * 0.3}
			\draw[densely dotted] (\v) -- ++(\x,\y);
		}
	}
	\graph[use existing nodes] {
		y1 -- {y2, y3, y4, y5};
		xi -- {y1, y2, y3, y4, y5};
	};
\end{tikzpicture}
  \caption{The vertices $b_j$ in the top-level clique $C$ connect to the neighborhood of their corresponding vertices $b^{(i)_j}$ in each copy $G^{(i)}$ of $G$.}
  \label{fig:restricted pds construction}
\end{figure}

\begin{lemma}
  \label{lem:vertex exclusion}
  Every instance of \extension{\Ipds} can be reduced to an equivalent instance without excluded vertices without changing the parameter.
\end{lemma}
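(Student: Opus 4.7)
The plan is to reduce an \extension{\Ipds} instance $(G=(V,E), X, Y)$ with parameter $k$ to an equivalent instance with no excluded vertices by gluing $k+1$ disjoint copies of $G$ together through a \emph{control clique} that replaces the exclusion constraint. Let $B = V \setminus Y$ and construct $G'$ from copies $G^{(1)}, \dots, G^{(k+1)}$ of $G$ (preserving propagation status, booster edges, and implication arcs) together with a clique $C = \set{b_v : v \in B}$ of non-propagating vertices; for every $v \in B$ and every copy index $i$, add edges from $b_v$ to $v^{(i)}$ and to every neighbor of $v^{(i)}$ in $G^{(i)}$. In the new \extension{\Ipds} instance, declare the pre-selection $X' = \set{b_v : v \in X}$, leave the excluded set empty, and keep the parameter at $k$.

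The crucial property is that selecting $b_v \in C$ has, inside every copy $G^{(i)}$, exactly the effect of selecting $v$ in $G$, because the closed neighborhood of $b_v$ restricted to $G^{(i)}$ equals the copy of $N[v]$. Moreover, as soon as any $b_v$ is selected, the clique $C$ becomes fully dominated, and since $C$ is non-propagating it can neither contribute to nor block further propagation. This makes the forward direction easy: from an extension solution $S$ of the original instance with $|S| \le k$, the set $\set{b_v : v \in S}$ contains $X'$, has size $|S|$, and its observation cascade on each copy coincides with the cascade of $S$ on $G$, since every copy vertex perceives its $C$-neighbors as already observed and therefore propagates using only its $G$-neighbors.

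The main obstacle is the converse direction, which I plan to handle via a pigeonhole argument enabled by having $k+1$ copies. Given any solution $S'$ of $G'$ with $|S'| \le k$ and $X' \subseteq S'$, at least one copy $G^{(j)}$ contains no vertex of $S'$, so $G^{(j)}$ has to be observed entirely through $S' \cap C$: all dominations reaching $G^{(j)}$ come from selected clique vertices, and since $C$ is inert the propagation cascade inside $G^{(j)}$ mirrors one in $G$ driven by $S_C := \set{v : b_v \in S' \cap C}$. Hence $S_C$ is a power dominating set of $G$ of size at most $|S'| \le k$; by construction $S_C \cap Y = \emptyset$, and $X' \subseteq S'$ immediately yields $X \subseteq S_C$, so $S_C$ is a valid extension solution of the original instance.

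The subtlest step will be justifying that the added $C$-edges do not disturb propagation inside a copy. The non-propagating status of $C$ combined with the fact that $C$ becomes fully dominated as soon as $|S' \cap C| \ge 1$ makes the $C$-neighbors ``transparent'' to the propagation rule, so the per-copy dynamics agree with the dynamics in $G$. Once this is in hand, it is routine to check that the construction runs in time polynomial in $|V|$ and $k$ and keeps the parameter at $k$, completing the reduction.
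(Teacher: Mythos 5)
Your construction is essentially the paper's own: glue disjoint copies of $G$ together through a non-propagating clique on the allowed vertices $B$, each clique vertex joined to the closed neighborhoods of its counterparts in every copy, and use a pigeonhole argument over the copies to force any small solution into the clique. The only differences are cosmetic --- you use $k+1$ copies where the paper uses $\abs{V}+1$, argue directly about any solution of size at most $k$ rather than about minimum solutions by contradiction, and state explicitly how the pre-selected set $X$ is carried over --- so the argument is correct and matches the paper's proof.
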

\begin{proof}
  Let $G = (V, E)$ be an \extension{\Ipds}-instance and let $Y$ be a set of vertices excluded from a solution to $G$.
  We construct an \extension{\Ipds} instance $G'$ where no vertices are explicitly excluded.

  Let $B = \set{b_1,\dots,b_{\abs{B}}} = V \setminus Y$ be the set of vertices allowed in a solution.
  We initialize $G'$ with $\abs{V} + 1$ copies of $G$ which we denote by $G^{(i)}$ and a clique $C$ consisting of the vertices in $B$.
  Each copy $G^{(i)}$ consists of a copy of the vertices $V^{(i)} = \set{v^{(i)} \mid v \in V}$.
  These copied vertices are connected by edges if their original counterparts are connected, i.e. $E^{(i)} = \set{v^{(i)}w^{(i)} | vw \in E}$.
  Each vertex $b_j$ in the clique has edges to all vertices in the neighborhood of its counterparts in the copies, i.e. there is an edge from $b_j$ to every vertex in $N[x^{(i)}]$ in every copy $i$.
  Each vertex $v^{(i)}$ in the copy $G^{(i)}$ is propagating if and only if $v$ is propagating in $G$.
  All vertices in $C$ are non-propagating.
  See \cref{fig:restricted pds construction} for an example.

  This construction ensures that vertices selected in one $G^{(i)}$ never observe vertices in other copies or cause them to become observed by the propagation rule.
  If a minimum power dominating set contains a vertex in one $G^{(i)}$, it must therefore also contain vertices from each other $G^{(i')}$.
  Hence, a minimum power dominating set of size less than or equal to $\abs{V}$ cannot contain any vertices outside $C$.

  It remains to show that $G$ has an implicating power dominating set of size $k$ if and only if $G'$ has a power dominating set of size $k$.
  Let $S \subseteq V$ be a minimum power dominating set of $G$ and let $r_1,\dots,r_\ell$ be a sequence of observation rules that observes all vertices in $G$.
  It is easy to verify that $S$ is also a power dominating set of $G'$ by modifying the sequence $r_j$.
  If $r_j$ is an application of the domination rule, we apply it to the same vertex in $G'$ where it is part of the clique.
  By definition, $S$ contains only vertices in $B$.
  Otherwise, $r_i$ is some other observation rule and we apply it in every copy $G^{(i)}$ in the same way as in $G$.
  The first application of a domination rule observes all vertices in $C$.
  Thus, after the transformed sequence of observation rules is applied, all vertices in $G'$ are observed.

  Now let $S$ be a minimum power dominating set of $G'$.
  It follows from the previous direction that, if a minimum power dominating set of $G'$ has more than $\abs{V}$ vertices, $G$ does not have a power dominating set.
  We show by contradiction that $S$ can only contain vertices in the top-level clique $C$.
  Assume $S \setminus B \neq \emptyset$.
  Because $\abs{S} \leq \abs{V(G)}$, there is at least one $G^{(\hat{\iota})}$ without any vertices in $S$.
  By out assumption that $S$ is a power dominating set, all vertices in $G^{(\hat{\iota})}$ are observed by $S$.
  All vertices in $C$ are non-propagating and have no booster edges or implication arcs.
  This means that $S \cap B$ is a power dominating set of the induced subgraph in $G'$ by $B$ and $V(G^{(\hat{\iota})})$.%TODO: Is this precise enough?
  But then, $S \cap B$ is a power dominating set for $G'$ as all $G^{(i)}$ are identical.
  This contradicts our assumption that $S$ is a minimum power dominating set and hence no minimum power dominating set can have vertices not in $B$.
\end{proof}

\begin{restatable}{lemma}{IpdseToIpds}
  \label{lem:ipdse to ipds}
  There is a parameterized reduction from \Extension{\IPDS} to \IPDS.
\end{restatable}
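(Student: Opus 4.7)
The plan is to compose two simple transformations that together strip away the extension annotations without changing the parameter. The key observation is that \cref{lem:vertex exclusion} already handles excluded vertices, so what remains is to enforce that the pre-selected vertices end up in every minimum solution. For the latter, \cref{lem:booster vertex} gives us exactly the tool we need: attaching two plain leaves to a vertex forces it into every minimum power dominating set.

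Concretely, given an \extension{\Ipds}-instance $(G, X, Y, k)$, I would first apply the construction of \cref{lem:vertex exclusion} to obtain an equivalent instance $(G', X', \emptyset, k)$ with no excluded vertices, where $X'$ denotes the image of $X$ in the multi-copy construction (the pre-selected vertices end up inside the top-level clique $C$). Second, for each vertex $x \in X'$, I would attach two fresh non-propagating leaves to $x$ with no booster edges or implication arcs. Finally, I would discard the annotations and return the resulting graph as an \Ipds-instance with parameter $k$.

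For correctness, I would argue both directions. In the forward direction, a valid solution $S$ of the original extension instance lifts via \cref{lem:vertex exclusion} to a power dominating set of size $|S|$ in $G'$ containing $X'$; the freshly attached leaves are then immediately observed by the domination rule applied to their selected neighbor, so the same set remains a power dominating set after leaf attachment. In the backward direction, any minimum power dominating set of size at most $k$ in the final graph must contain all of $X'$ by \cref{lem:booster vertex}, and by \cref{lem:vertex exclusion} it corresponds to a valid extension solution of the original instance. Since neither step changes the parameter, the composition is a parameterized reduction.

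The main subtlety I anticipate is checking that the two transformations interact cleanly. In particular, after \cref{lem:vertex exclusion} the pre-selected vertices sit inside the top-level clique whose vertices are already non-propagating and globally connected to many copies of the original graph, so one has to verify that adding leaves to them does not enable any shortcut solution that avoids selecting them via some other observation path. However, the new leaves are degree-one, non-propagating, and carry no booster edges or implication arcs, so the argument of \cref{lem:booster vertex} applies verbatim and the multi-copy argument of \cref{lem:vertex exclusion} is unaffected. With this confirmed, the proof reduces to the bookkeeping sketched above.
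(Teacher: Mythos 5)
Your proposal is correct and follows essentially the same route as the paper: the paper's own proof also combines \cref{lem:vertex exclusion} to eliminate excluded vertices with \cref{lem:booster vertex} (attaching two leaves) to force the pre-selected vertices, keeping the parameter unchanged. Your write-up is in fact more careful than the paper's, since you explicitly check that the two constructions compose cleanly.
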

\begin{proof}
  \extension{\Ipds} differs from \Ipds by requiring the solution to contain certain vertices while excluding others.
  We can use \cref{lem:booster vertex} to force the inclusion of vertices in the solution by adding two leaves.
  \Cref{lem:vertex exclusion} provides a method to enforce the exclusion of vertices.
  Together the yield a parameterized reduction from \extension{\Ipds} to \Ipds.
\end{proof}

\subsection{WMCS to IPDS-Extension}
\label{sec:wmcs to ipds}

So far we only considered different extensions of the \PDS problem.
We will now see that those extensions allow a straightforward reduction from \WMCS, a $W[P]$ complete problem.
The core idea is to replace the arcs in the directed acyclic graph describing the circuit with implication arcs and to model \AND-gates using a gadget as show in \Cref{fig:and gadget}.
This last step in our reduction chain finally yields the $W[P]$-hardness of \Pds.

\begin{figure}
  \subcaptionbox{
    Input monotone circuit consisting of \texttt{and}-gates and \texttt{or}-gates\label{fig:monotone circuit}
  }[.27\linewidth]
  {\tikzpicturename{reduction-circuit}
\begin{tikzpicture}[pds instance,y=0.923cm]
\graph[grow down,branch right,nodes={node},empty nodes]{
	subgraph I_n [n=4,name=In,nodes={input}] -!-
	subgraph I_n [n=3,name=Or1,nodes={or,x=0.5}] -!-
	subgraph I_n [n=2,name=And1,nodes={and,x=1}] -!-
	subgraph I_n [n=3,name=Or2,nodes={or,x=0.5}] ->
	subgraph I_n [n=1,name=And2,nodes={and,x=1.5}] ->
	out[output,x=1.5];
	{In 1, In 2} -> Or1 1;
	{In 2, In 3} -> Or1 2;
	{In 3, In 4} -> Or1 3;
	{Or1 1, Or1 2} -> And1 1;
	{Or1 2, Or1 3} -> And1 2;
	And1 1 -> Or2 1;
	{And1 1, And1 2} -> Or2 2;
	{And1 1, And1 2} -> Or2 3;
};
\end{tikzpicture}}
  \hfill
  \subcaptionbox{
    Corresponding \Ipds instance.
    Only the triangular vertices in the top clique are selectable.\label{fig:ipds circuit}
  }[.48\linewidth]{\tikzpicturename{reduction-pds}
\begin{tikzpicture}[pds instance,l/.style={bend left=#1},r/.style={bend right=#1},
	default to/.style={to path={-- (\tikztotarget) \tikztonodes}},
	->o-/.style={to path={-- node[node,propagating,pos=0.66] (tmp) {} (\tikztotarget)\pgfextra{\draw[->] (\tikztostart) -- (tmp);}}},
	x=0.5cm,y=0.33cm,
]
	\graph[grow down=2,branch right=2,color class=ins1,color class=ins2]{
		{[nodes={node},empty nodes]
			%subgraph I_n [n=4,nodes=node] --
%				subgraph I_n [n=4,name=In Orig,nodes={input},ins1]
%				--[matching={ins1}{ins2}]
			{
				subgraph I_n [n=4,name=In,nodes={input},ins2]
				-!- subgraph I_n [n=3,name=Or1,nodes={or,x=1}]
				-!- subgraph I_n [n=2,name=And1in,nodes={and,x=2,y=-0.8}]
				-- subgraph I_n [n=2,name=And1out,nodes={and,x=2}]
				-- subgraph I_n [n=3,name=Or2,nodes={or,x=1}]
				--[->o-] subgraph I_n [n=1,name=And2,nodes={and,x=3,y=-0.8}]
				-- and[and,x=3]
				-- out[output,x=3];
				{In 1, In 2} -> Or1 1;
				{In 2, In 3} -> Or1 2;
				{In 3, In 4} -> Or1 3;
				{Or1 1, Or1 2} --[->o-] And1in 1;
				{Or1 2, Or1 3} --[->o-] And1in 2;
				And1out 1 -> Or2 1;
				{And1out 1, And1out 2} -> Or2 2;
				{And1out 1, And1out 2} -> Or2 3;
				out ->[l=2.1cm] In 1; out ->[l=2.7cm] In 2;
				out ->[r=2.1cm] In 4; out ->[r=2.7cm] In 3;
			}
		};
%			\foreach \i in {1,...,4}{
%				In Orig \i --[densely dotted] {
%					subgraph I_n[
%						name=circle \i,n=5,
%						clockwise=12,phase=-30,radius=8mm,
%						nodes={empty nodes,shift={(In Orig \i.center)}},
%					]
%				};
%			};
	};
	\begin{scope}[on background layer]
		\node[fit={(In 1) (In 4)},draw=black,rounded corners,inner sep=2mm,thick,densely dotted] {};%,"right:$V \setminus Y$"
	\end{scope}
	\end{tikzpicture}}
  \hfill
  \subcaptionbox{
    \AND-gadget\label{fig:and gadget}
  }[.19\linewidth]{\tikzpicturename{reduction-and}
\begin{tikzpicture}[pds instance,y=0.75cm,->o-/.style={to path={-- node[node,propagating,pos=0.66] (tmp) {} (\tikztotarget)\pgfextra{\draw[->] (\tikztostart) -- (tmp);}}}]
	\path (0,-4) graph [empty nodes,grow down=1,branch right=1] {
		subgraph I_n[n=3]
		-> and out [and,x=1]
		-> subgraph I_n[n=3,name=below]
	};
	\path[shift={(0,-3.7)}] graph [empty nodes,grow down=1,branch right=1] {
		subgraph I_n[n=3,name=above]
		--[->o-] and in [and,x=1,y=-0.2,"$x$" right,anchor=base]
		-- and out [and,x=1,y=0.2,"$y$" right,anchor=base]
		-> subgraph I_n[n=3]
	};
	\draw[->,decoration={snake,amplitude=.4mm,segment length=1.5mm,post length=2mm},decorate] (below 2) -> (above 2);
\end{tikzpicture}}
  \caption{Example of a transformation from a circuit to an \Ipds instance. We transform the circuit in \subref{fig:monotone circuit} to the semi-directed graph in \subref{fig:ipds circuit} by replacing the \AND-gates with the gadget in \subref{fig:and gadget} and inserting implication edges from the output to each input.}
  \label{fig:wmcs to ipds}
\end{figure}
\begin{restatable}{lemma}{WmcsToIpds}
  \label{lem:wmcs to ipds}
  There is a parameterized reduction from \WMCS to \Extension{\IPDS}.
\end{restatable}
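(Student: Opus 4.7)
The plan is to turn a monotone circuit $C$ with parameter $k$ into an \extension{\Ipds} instance $G$ whose size-$k$ solutions correspond bijectively to satisfying assignments of $C$ with $k$ inputs set to \TRUE. Following the construction depicted in \Cref{fig:wmcs to ipds}, I introduce one vertex per circuit node, mark the input vertices as undecided (selectable) and all other vertices as excluded, encode every OR-gate by adding an implication arc from each of its inputs to its output, replace every AND-gate by the gadget of \Cref{fig:and gadget}, and finally insert an implication arc from the unique output vertex back to every input vertex. The parameter is left unchanged, so a parameterized reduction is what we seek.

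The crucial ingredient is the AND-gadget: it routes the gate's inputs into a propagating vertex $x$ whose only remaining neighbor is a vertex $y$ carrying the gate's output onward. Because the propagation rule fires at $x$ only when at most one of its neighbors is still unobserved, $y$ can become observed through $x$ only after \emph{every} AND-input has been observed. Combined with implication arcs, this faithfully mimics Boolean AND using monotone observation. OR-gates need no gadget because a single observed input already fires the implication arc to the output.

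For the forward direction, assume $C$ is satisfied by setting a size-$k$ set $T$ of inputs to \TRUE, and let $S$ be the corresponding vertices in $G$. Processing the circuit in topological order, I prove by induction that every vertex representing a node evaluating to \TRUE eventually becomes observed: gate outputs that are \TRUE are observed either via an OR-implication arc from a \TRUE input, or by propagation through an AND-gadget once all inputs to the gate are observed. Once the circuit output is observed, the feedback implication arcs from it to the inputs observe the remaining input vertices, and a second sweep through the topological order (now with all inputs observed) observes the rest of $G$.

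For the reverse direction, let $S$ be a size-$k$ solution of $G$. Since only inputs are selectable, $S$ consists of $k$ input vertices, which I interpret as the set of \TRUE inputs. A symmetric inductive argument along topological order shows that the vertex associated with a circuit node can become observed only if the node evaluates to \TRUE under this assignment, since the sole means of observing a non-selected vertex are the implication arcs and the AND-gadgets, which both respect Boolean semantics. The main obstacle will be ruling out that the feedback arcs from the circuit output create a spurious way to observe the graph without the circuit being satisfied; I handle this by noting that every feedback arc originates at the output vertex, so none of them can fire before the output itself is legitimately observed, at which point $C$ must already be satisfied by $T$. Combining both directions establishes equivalence and, together with \Cref{lem:ipdse to ipds,lem:ipds to pds,lem:pds to spds}, completes the $W[P]$-hardness chain.
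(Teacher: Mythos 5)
Your proposal follows essentially the same route as the paper: the identical construction (implication arcs for the circuit edges, the \AND-gadget with a propagating input vertex $x$ feeding a single output vertex $y$, feedback implication arcs from \texttt{out} to the inputs, and exclusion of all non-input vertices), with the forward direction argued by induction over a topological order of the \TRUE{} gates followed by a second sweep after the feedback arcs fire, and the reverse direction handled by restricting attention to the rule prefix before any feedback arc is applied. The argument is correct and matches the paper's proof in both structure and key ideas.
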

\begin{proof}
  Let $C = (V, E)$ be a monotone circuit interpreted as acyclic graph with input nodes $\Vin$, \AND-gates $\Vand$, \OR-gates $\Vor$ and the output node $\texttt{out}$.
  \Cref{fig:monotone circuit} show an example of such a circuit.
  Its corresponding \extension{\Ipds}-instance is depicted in \cref{fig:ipds circuit}.
  To construct such an instance from $C$ we proceed as follows.
  We interpret all directed edges $E$ in $C$ as implication arcs $A$ and add further implication arcs from the output node \texttt{out} to every input.
  Next we replace every \AND-gate $v$ by two new connected vertices $x$ and $y$ where all outgoing edges of $v$ are instead outgoing implication arcs of $y$; see \cref{fig:and gadget} for an example.
  For every incoming edge of $v$ from a vertex $u$, we place a fresh vertex $x_u$ and add an edge $x_u x$ and an implication arc $(u, x_u)$.
  We refer to $x$ as the \emph{input} of the gate and to $y$ as the \emph{output} of the gate.
  We call the fresh nodes $x_u$ \emph{proxy inputs}.
  They are marked in gray in \cref{fig:and gadget}.
  The other gates and vertices are their own input and output.
  We exclude all vertices except the inputs from a solution to the \extension{\Ipds} instance.

  The basic idea of the construction is that we interpret true values in the circuit as a node being observed in $G$.
  We can then simulate \AND-gates by exploiting the propagation rule and \OR-gates using the implication rule.
  Whenever a single input node of an \OR-gate becomes observed, the implication rule ensures that first the gate and then all its children become observed, too.
  Each proxy input propagates its observation to the gate input $x$, which in turn can only propagate its observation to the gate output $y$ when all inputs are observed.

  It remains to show that $C$ has a satisfying assignment with at most $k$ true variable if and only if $G$ has a power dominating set of size $k$.
  \begin{claim*}[A satisfying assignment of $C$ implies a power dominating set in $G$]

    Let $\mathcal{X}$ be a satisfying assignment with true variables $v_1,\dots,v_k$ and let $x_1,\dots,x_\ell$ be a topological ordering of the nodes and gates with \TRUE output.
    We note that the subgraph induced in $C$ by $\set{x_1,\dots,x_\ell}$ must be connected and contains a path from the input nodes to the output.
    A candidate solution for the \extension{\Ipds}-instance $G$ is $S = \set{x_1,\dots,x_\ell}$.
    We use a two step argument two show that $S$ does indeed observe all vertices in $G$.
    First, we show inductively, that if the vertices in $S$ are observed, the output node becomes observed.
    When the output node is observed, all inputs become observed by the implication arcs from the output node.
    The second step is to show that this suffices to observe all remaining vertices.

    To show that $S$ does observe the output node, we use an inductive proof.
    It is based on the assumption that for any fixed $i$, we can find a sequence of observation rules that observes at least the vertices in $G$ corresponding to the \TRUE gates $x_1,\dots,x_{i-1}$.
    To extend the sequence to $x_i$, we differentiate by the type of $x_i$:
    \begin{enumerate}
      \item If $x_i$ is an input, it must be in $\mathcal{X}$ and thus is selected and thus observed by the domination rule.
      \item If $x_i$ is an \OR-gate, there is some \TRUE gate $x_j$ with $j < i$.
      By our induction hypothesis, we know that the corresponding vertex in $G$ is observed and we can use the implication rule on the implication arc to $x_i$.
      \item If $x_i$ is an \AND-gate, all gates with edges into $x_j$ are \TRUE and their corresponding vertices are thus observed.
      We first apply the implication rule on the implication arcs to the proxy inputs, then use the propagation rule to observe the gate input.
      The gate input thus has only one unobserved neighbor left, the gate output, which becomes observed by the propagation rule.
      \item If, finally, $x_i$ is the output node, we apply the implication rule on its parent node, observing $x_i$.
    \end{enumerate}

    After the output node is observed, we can use the implication rule on the implication arcs to the inputs, observing all of them.
    Note that in a monotone circuit with all inputs set to \TRUE, all gates and the output are \TRUE, too.
    Then, by the same inductive argument as above, all vertices in $G$ become observed.
  \end{claim*}

  \begin{claim*}[A power dominating set of $G$ implies a satisfying assignment of $C$]
    Let $S$ be a minimum power dominating set of $G$.
    By construction, all vertices in $S$ are input nodes and thus we obtain a candidate solution $\mathcal{X}$ where all inputs in $S$ are set to \TRUE and all others to \FALSE.
    It remains to show that $\mathcal{X}$ is indeed a solution of $C$.
    Let $r_1,\dots,r_{\hat{\ell}}$ be a sequence of observation rules for $S$ that observes all vertices in $G$ and let $V_i$ be the set of vertices observed after the application of rule $i$.
    We only consider the subsequence $r_1,\dots,r_\ell$ where $r_{\ell+1}$ is the first application of the implication rule on an implication arc from \texttt{out} to an input vertex.%$\ell$ is the smallest index such that $V_\ell$ contains \texttt{out}.%

    We claim that every gate in $C$ with an observed output node in $V_\ell$ outputs \TRUE when evaluating the circuit.
    Assume the claim is false and let $\hat{\iota} \leq \ell$ be the smallest index such that $V_{\hat{\iota}}$ contains a gate output vertex $v$ whose gate $x_v$ in $C$ outputs \FALSE.
    We consider the different gate types of $x_v$ and show that this leads to a contradiction.
    \begin{enumerate}
      \item If $x_v$ is an input, we know that $x_v \in \mathcal{X}$.
      Otherwise, the only way to observe $x_v$ is by an implication arc from \texttt{out}, which we explicitly excluded from $r_1,\dots,r_\ell$.
      \item If $x_v$ is an \OR-gate or \texttt{out}, $v$ has at least one observed predecessor and became observed by the implication rule.
      Because $\hat{\iota}$ is minimal, $x_v$ also has a \TRUE input and therefore outputs \TRUE.
      \item If $x_v$ is an \AND-gate, the gate input and the proxy inputs must all be observed.
      Each proxy input $v_u$ has an incoming implication arc from the output vertex $u$ of some other logic gate.
      By definition of $V_{\hat{\iota}}$, we know that $u$ is observed and thus $x_u$ outputs \TRUE.
      So all inputs of $x_v$ are true and it outputs \TRUE.
    \end{enumerate}
    %    Given this sequence, we show inductively that there exists a set of logic gates that, when evaluated in the canonical way with $X$ as input assignment, yields \TRUE in the output node in $C$.
    %    Let $\r_1,\dots,r_\ell$ be a sequence of observation rules and $v_1,\dots,v_n$ be the order in which the vertices become observed according to those rules.
  \end{claim*}
  The output \texttt{out} is thus \TRUE, concluding the proof.
\end{proof}

This proof concludes the chain of reductions as presented in \cref{fig:hardness overview}.
We saw that we can reduce from the $W[P]$-complete \WMCS to \Extension{\IPDS} in \cref{lem:wmcs to ipds}.
\extension{\Ipds}, in turn, can be represented in terms of \IPDS, as seen in \cref{lem:ipdse to ipds}.
\Cref{lem:ipds to pds} showed that booster edges and implication arcs can be replaced by gadgets with only normal vertices.
Propagating vertices can also be eliminated as we showed in \cref{lem:pds to spds}.
All these transformations are parametric reductions and thus \textsc{Simple-\Pds} is at least as hard as \Wmcs, i.e. both are $W[P]$ hard.

\begin{restatable}{corollary}{pdsIsWPComplete}
  \label{lem:pds is wp complete}
	\PDS is \W{P} complete.
\end{restatable}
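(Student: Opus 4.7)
The plan is to assemble the corollary directly from the chain of reductions already established in this section, together with the $W[P]$-containment argument given at the beginning. Since the statement is an ``if and only if'' in disguise (completeness = hardness + containment), I would organize the proof in two short parts.

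For containment, I would simply invoke the earlier lemma showing that \extension{\Ipds} is in $W[P]$ via the nondeterministic Turing machine characterization of \cref{thm:wp contained ntm}. Because \Pds is the special case of \extension{\Ipds} with no booster edges, no implication arcs, no pre-selected, and no excluded vertices, this containment transfers to \Pds without any additional work. I would state this in a single sentence referencing the earlier remark that all variants considered are special cases of \extension{\Ipds}.

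For hardness, I would compose the four parameterized reductions shown in \cref{fig:hardness overview}. Starting from \Wmcs, which is $W[P]$-complete by a standard result (cited from \cite{downey2013fundamentals}), I would apply in order: \cref{lem:wmcs to ipds} to obtain an \extension{\Ipds} instance, \cref{lem:ipdse to ipds} to remove pre-selection and exclusion annotations and land in \Ipds, and finally \cref{lem:ipds to pds} to eliminate implication arcs and booster edges and obtain a \Pds instance. Since parameterized reductions compose, the overall map is a parameterized reduction from \Wmcs to \Pds, giving $W[P]$-hardness of \Pds.

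There is essentially no obstacle here; the work has been done in the preceding lemmas. The only thing to be mildly careful about is to note explicitly that the composition is a parameterized reduction (each step changes the parameter by at most an additive constant, as all three lemmas state), and that the resulting \Pds instance is indeed a well-formed \Pds instance (i.e., the reduction of \cref{lem:pds to spds} is not needed for the corollary about \Pds itself, though it would be invoked if we wanted the same statement for \textsc{simple-\Pds}).
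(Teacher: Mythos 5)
Your proof matches the paper's argument: containment via the $W[P]$-membership of \Extension{\IPDS} (of which \Pds is a special case), and hardness by composing the reductions of \cref{lem:wmcs to ipds,lem:ipdse to ipds,lem:ipds to pds}. Your observation that \cref{lem:pds to spds} is not needed for \Pds itself (only for \textsc{simple}-\Pds) is correct and slightly sharper than the paper, which cites all four lemmas in its proof of the corollary.
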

\begin{proof}
  By \cref{thm:wp contained ntm}, \extension{\Ipds} is in $W[P]$.
  Being a special case of \extension{\Ipds}, \Pds it is also contained in $W[P]$.
  The $W[P]$-hardness follows from the reduction chain in \cref{lem:wmcs to ipds,lem:ipdse to ipds,lem:ipds to pds,lem:pds to spds}.
\end{proof}

%%% Local Variables:
%%% mode: latex
%%% TeX-master: "../paper"
%%% End:

  % \input{sections/variants}
  %! Tex root = ../paper.tex
\section{Solving \PDS}
\label{sec:solving}

In this section, we give an algorithm for solving \extension{\Pds}.
Our algorithm consists of different phases.
In the first phase, we apply the reduction rules described in \cref{sec:reduction rules}.
Each rule either shrinks the graph or decides for a vertex that it should be pre-selected or excluded.
We prove that the rules are safe, i.e., they yield equivalent instances.
Afterwards, in \cref{sec:sub-problems} we split the instance into several components that can be solved independently.
Finally, each of these subinstances is solved exactly using the implicit hitting set approach~\cite{bozeman2018RestrictedPD} with our improved strategy for finding new sets that need to be hit; see Section~\ref{sec:kernel solver}.

We note that these phases are somewhat modular in the sense that one could easily add further reduction rules or that one can replace the algorithm for solving the kernel in the final step.
In our experiments in Section~\ref{sec:experiments}, we also use an MILP for this step.
This MILP formulation is based on the formulation for \Pds by Jovanovic and Voss~\cite{jovanovic2020fixed} and we discuss our adjustments in \cref{sec:milp formulation}.
Moreover, instead of solving the subinstances optimally, one can instead use a heuristic solver.
In the latter case, the preceding application of our reduction rules and splitting into subinstances then helps to find better solutions rather than improving the running time of exact solvers.
This is used in our experiments to find upper bounds on the power domination number before we have an exact solution.

\subsection{Reduction Rules}
\label{sec:reduction rules}

\makeatletter
\newcommand{\reductionname}[1]{% \settheoremtag{<tag>}
  \let\oldthereduction\thereduction% Store \thetheorem
  \renewcommand{\thereduction}{\texttt{#1}}% Redefine it to a fixed value
  \g@addto@macro\endreduction{% At \end{reduction}, ...
  %\addtocounter{theorem}{-1}% ...restore theorem counter value and...
  \global\let\thereduction\oldthereduction}% ...restore \thetheorem
}
\makeatother

Many of our reduction rules are local in the sense that they transform one substructure into a different substructure.
Most of our local reduction rules are illustrated in \Cref{fig:local-red-rules}.
%In the following, we specify additional reduction rules that are either non-local or otherwise difficult to illustrate.
For proofs of safeness, see \Cref{sec:reduct-rules-appendix}.

\begin{figure}
  \begin{subcaptionblock}{\linewidth}
    \centering
    \tikzpicturename{rules-leaf1}
\begin{tikzpicture}[pds instance,every node/.style={node}]
  \begin{scope}[xshift=2cm]
    \graph[grow right,branch down,nodes={every node},empty nodes] {
      a[propagating=dontcare,solution=undecided];
      a --[densely dotted] {[n=3,clockwise,nodes={shift={(a)}}]
        x[> solid,solution=undecided];
        y[coordinate];
        z[coordinate];
      };
    };
  \end{scope}
  \begin{scope}[xshift=5cm]
    \graph[grow right,branch down,nodes={every node},empty nodes] {
      a[propagating=dontcare,solution=undecided];
      a --[densely dotted] {[n=3,clockwise,nodes={shift={(a)}}]
        x[> solid,solution=no];
        y[coordinate];
        z[coordinate];
      };
    };
  \end{scope}
  \draw[snaked,->] (3cm,0) -- (4cm,0);
\end{tikzpicture}
    \caption{\ref{red:leaf 1}: If an undecided leaf is attached to a non-excluded vertex. The reduction rule excludes the leaf.\label{fig:red:leaf 1}}
  \end{subcaptionblock}
  \begin{subcaptionblock}{\linewidth}
    \centering
    \tikzpicturename{rules-leaf21}
\begin{tikzpicture}[pds instance,every node/.style={node}]
  \begin{scope}[xshift=2cm]
    \graph[grow right,branch down,nodes={every node},empty nodes] {
      a[propagating=yes,solution=dontcare];
      a --[densely dotted] {[n=3,clockwise,nodes={shift={(a)}}]
        x[> solid,solution=no];
        y[coordinate];
        z[coordinate];
      };
    };
  \end{scope}
  \begin{scope}[xshift=5cm]
    \graph[grow right,branch down,nodes={every node},empty nodes] {
      a[propagating=no,solution=dontcare];
      a --[densely dotted] {[n=3,clockwise,nodes={shift={(a)}}]
        x[> draw=none,coordinate];
        y[coordinate];
        z[coordinate];
      };
    };
  \end{scope}
  \draw[snaked,->] (3cm,0) -- (4cm,0);
\end{tikzpicture}
    \hspace{2cm}
    \tikzpicturename{rules-leaf22}
\begin{tikzpicture}[pds instance,every node/.style={node}]
  \begin{scope}[xshift=2cm]
    \graph[grow right,branch down,nodes={every node},empty nodes] {
      a[propagating=no,solution=undecided];
      a --[densely dotted] {[n=3,clockwise,nodes={shift={(a)}}]
        x[> solid,solution=no];
        y[coordinate];
        z[coordinate];
      };
    };
  \end{scope}
  \begin{scope}[xshift=5cm]
    \graph[grow right,branch down,nodes={every node},empty nodes] {
      a[propagating=no,solution=yes];
      a --[densely dotted] {[n=3,clockwise,nodes={shift={(a)}}]
        x[> draw=none,coordinate];
        y[coordinate];
        z[coordinate];
      };
    };
  \end{scope}
  \draw[snaked,->] (3cm,0) -- (4cm,0);
\end{tikzpicture}
    \caption{\ref{red:leaf 2}: Two cases of excluded leaves attached to a vertex. On the left, the parent vertex is propagating and becomes non-propagating. On the right, the parent vertex is non-propagating and becomes pre-selected. The attached leaf is removed in both cases.\label{fig:red:leaf 2}}
  \end{subcaptionblock}
  \begin{subcaptionblock}{\linewidth}
    \centering
    \tikzpicturename{rules-path1}
\begin{tikzpicture}[pds instance,every node/.style={node}]
  \begin{scope}[xshift=-1cm]
    \graph[grow right,nodes={every node},empty nodes] {
      / [coordinate,xshift=0.5cm] --[draw=none] a[propagating=dontcare,"$x$"] -- v[propagating=yes,"$v$"] -- b[propagating=dontcare,solution=dontcare,"$y$"];
      {[edges=densely dotted]
        a -- {[circular placement,phase=150,radius=0.5cm,nodes={coordinate,shift={(a)}}] subgraph I_n[n=2,name=a] };
        b -- {[circular placement,phase=330,radius=0.5cm,nodes={coordinate,shift={(b)}}] subgraph I_n[n=2,name=b] };
      };
    };
%    \node[propagating=yes] (v) {};
%    \node[propagating=yes][left=of v] (a) {};
%    \node[propagating=dontcare,solution=dontcare][right=of v] (b) {};
%    \draw (a) -- (v) -- (b);
  \end{scope}
  \begin{scope}[xshift=5cm]
    \graph[grow right,nodes=every node,empty nodes] {
      a[propagating=dontcare] -- v[solution=no,propagating=yes] -- b[propagating=dontcare,solution=dontcare];
      {[edges=densely dotted]
        a -- {[circular placement,phase=150,radius=0.5cm,nodes={coordinate,shift={(a)}}] subgraph I_n[n=2,name=a] };
        b -- {[circular placement,phase=330,radius=0.5cm,nodes={coordinate,shift={(b)}}] subgraph I_n[n=2,name=b] };
      };
    };
  \end{scope}
  \draw[snaked,->] (3cm,0) -- (4cm,0);
\end{tikzpicture}
    \caption{\ref{red:path 1}: A vertex $v$ of degree two can safely be excluded if it has an undecided neighbor. \label{fig:red:path 1}}
  \end{subcaptionblock}
  \begin{subcaptionblock}{\linewidth}
    \centering
    \tikzpicturename{rules-path2}
\begin{tikzpicture}[pds instance,every node/.style={node}]
  \begin{scope}[xshift=-1cm]
    \graph[grow right,nodes={every node},empty nodes] {
      / [coordinate,xshift=0.5cm] -- a[propagating=yes,solution=dontcare,"$x$"] -- v[propagating=yes,solution=no,"$v$"] -- b[propagating=dontcare,solution=dontcare,"$y$"];
      a --[bend right,nonedge] b;
      {[edges=densely dotted]
        b -- {[circular placement,phase=330,radius=0.5cm,nodes={coordinate,shift={(b)}}] subgraph I_n[n=2,name=b] };
      };
    };
  \end{scope}
  \begin{scope}[xshift=4cm]
    \graph[grow right,nodes=every node,empty nodes] {
      / [coordinate,xshift=0.5cm] -- a1[propagating=yes,solution=dontcare] -- v[coordinate] -- b[propagating=dontcare,solution=dontcare];
      {[edges=densely dotted]
        b -- {[circular placement,phase=330,radius=0.5cm,nodes={coordinate,shift={(b)}}] subgraph I_n[n=2,name=b] };
      };
    };
  \end{scope}
  \draw[snaked,->] (b) (3cm,0) -- (4cm,0);
\end{tikzpicture}
    \caption{\ref{red:path 2}: The neighbors $x$ and $v$ of degree two can be merged if $x$ is not adjacent to $v$'s other neighbor~$y$. \label{fig:red:path 2}}
  \end{subcaptionblock}
  \begin{subcaptionblock}{\linewidth}
    \centering
    \tikzpicturename{rules-path4}
\begin{tikzpicture}[pds instance,every node/.style={node}]
  \begin{scope}
    \graph[grow right,nodes={every node},empty nodes] {
      / [coordinate,xshift=0.5cm] -- a[propagating=yes,solution=dontcare,"$x$"] -- z[propagating=yes,observed=yes,solution=no,"$v$"] -- v[propagating=yes,solution=no,"$y$"] -- b[propagating=dontcare,solution=dontcare,"$z$"] --[draw=none] / [xshift=-0.5cm,coordinate];
      a --[bend right,nonedge] b;
      {[edges=dotted]
        b -- {[circular placement,phase=330,radius=0.5cm,nodes={coordinate,shift={(b)}}] subgraph I_n[n=2,name=b] };
      };
    };
  \end{scope}
  \begin{scope}[xshift=6cm]
    \graph[grow right,nodes={every node},empty nodes] {
      / [coordinate,xshift=0.5cm] -- a[propagating=yes,solution=dontcare] -!- z[propagating=yes,observed=yes,solution=no] -!- v[coordinate] -!- b[propagating=dontcare,solution=dontcare] --[draw=none] / [xshift=-0.5cm,coordinate];
      a --[bend right] b;
      {[edges=dotted]
        b -- {[circular placement,phase=330,radius=0.5cm,nodes={coordinate,shift={(b)}}] subgraph I_n[n=2,name=b] };
      };
    };
  \end{scope}
  \draw[snaked,->] (5,0) -- (6,0);
\end{tikzpicture}
    \caption{\ref{red:path 3}: If $v$ is observed and has two non-adjacent unobserved neighbors of degree two, $x$ and $y$, with $y$ being excluded, then we remove $y$ and the edge $xv$ and connect $x$ to the other neighbor  $z$ of $y$.\label{fig:red:path 3}}
  \end{subcaptionblock}
  \begin{subcaptionblock}{\linewidth}
    \centering
    \tikzpicturename{rules-path3}
\begin{tikzpicture}[pds instance,every node/.style={node}]
  \begin{scope}[xshift=2cm]
    \graph[grow right,branch down,nodes={every node},empty nodes] {
      a[propagating=no,solution=undecided];
      {
        a -- {[circular placement,phase=60,nodes={propagating=dontcare,solution=dontcare,shift={(a)}}] subgraph P_n[n=2,name=a] };
        a --[densely dotted] {[circular placement,phase=240,radius=0.3cm,nodes={coordinate,shift={(a)}}] subgraph I_n[n=2,name=b] };
      };
    };
  \end{scope}
  \begin{scope}[xshift=5cm]
    \graph[grow right,branch down,nodes={every node},empty nodes] {
      a[propagating=no,solution=yes];
      {
        a --[densely dotted] {[circular placement,phase=240,radius=0.3cm,nodes={coordinate,shift={(a)}}] subgraph I_n[n=2,name=b] };
      };
    };
  \end{scope}
  \draw[snaked,->] (3cm,0) -- (4cm,0);
\end{tikzpicture}
    \caption{\ref{red:tri}: We pre-select the undecided and remove the two vertices of degree two in the triangle.\label{fig:red tri}}
  \end{subcaptionblock}
  \begin{subcaptionblock}{\linewidth}
    \centering
    \tikzpicturename{rules-path5}
\begin{tikzpicture}[pds instance,every node/.style={node}]
  \begin{scope}
    \graph[grow right,branch down,nodes={every node},empty nodes] {
      a[propagating=no,solution=no] -- v[propagating=dontcare,solution=no] -- b[propagating=no];
      {[edges=densely dotted]
        a -- {[circular placement,phase=150,radius=0.5cm,nodes={coordinate,shift={(a)}}] subgraph I_n[n=2,name=a] };
        b -- {[circular placement,phase=330,radius=0.5cm,nodes={coordinate,shift={(b)}}] subgraph I_n[n=2,name=b] };
      };
    };
  \end{scope}
  \begin{scope}[xshift=5cm]
    \graph[grow right,branch down,nodes={every node},empty nodes] {
      a1[propagating=no,solution=no] -- v[solution=no,propagating=dontcare,solution=no,observed] -- b1[propagating=no,solution=yes];
      {[edges=densely dotted]
        a1 -- {[circular placement,phase=150,radius=0.5cm,nodes={coordinate,shift={(a1)}}] subgraph I_n[n=2,name=a1] };
        b1 -- {[circular placement,phase=330,radius=0.5cm,nodes={coordinate,shift={(b1)}}] subgraph I_n[n=2,name=b1] };
      };
    };
  \end{scope}
  \draw[snaked,->] (b) (3cm,0) -- (4cm,0);
\end{tikzpicture}
    \caption{\ref{red:path 5}: If an excluded vertex is surrounded by non-propagating vertices, only one of which is not excluded, we pre-select the non-excluded vertex.\label{fig:red:path 5}}
  \end{subcaptionblock}
  \caption{Illustrated overview of the local reduction rules. Round vertices {\tikz[pds instance] \node[node,propagating] {};} are propagating, triangular vertices {\tikz[pds instance] \node[node,propagating=no] {};} are non-propagating, square vertices \tikz[pds instance] \node[node] {}; may be propagating or non-propagating. Hollow vertices {\tikz[pds instance] \node[node,solution=no] {};} are excluded from a solution, vertices filled black {\tikz[pds instance] \node[node,solution=undecided] {};} are undecided, red vertices {\tikz[pds instance] \node[node,solution=yes,propagating=no] {};} are pre-selected. Vertices filled gray {\tikz[pds instance] \node[node,solution=dontcare] {};} may be undecided, excluded or pre-selected. Green vertices {\tikz[pds instance] \node[node,observed,solution=no] {};} are observed but not pre-selected. The non-existence of an edge is indicated in red {\tikz[pds instance] \draw[nonedge] (0,0) -- ++(0.5,0);}.}
  \label{fig:local-red-rules}
\end{figure}

Our first set of reduction rules stems from the observation that leaves, i.e. vertices of degree one, are usually not part of an optimum solution.
To keep the rules safe, we only delete excluded leaves and only exclude leaves that can be observed from some other vertex.
See \cref{fig:red:leaf 1,fig:red:leaf 2} for a visual example.

\reductionname{Deg1a}
\begin{reduction}[restate=reductionDegIa]
  \label{red:leaf 1}
  Let $v$ be an undecided leaf attached to a non-excluded vertex $w$.
  Then mark $v$ as excluded.
\end{reduction}

\reductionname{Deg1b}
\begin{reduction}[restate=reductionDegIb]
  \label{red:leaf 2}
  Let $v$ be an excluded leaf attached to $w$.
  If $w$ is propagating, delete $v$ and set $w$ to non-propagating.
  If $w$ is not excluded and not propagating, delete $v$ and select $w$, i.e., set $X \gets X \cup \set{w}$.
\end{reduction}

A case very similar to \ref{red:leaf 2} is illustrated in \cref{fig:red tri}.
Two adjacent vertices $x$ and $y$ of degree two share an undecided neighbor $z$
Similar to \cref{lem:booster vertex}, we can always select $z$.

\reductionname{Tri} %todo: rename → Tri rule
\begin{reduction}[restate=reductionTri]
  \label{red:tri}
  Let $x$ and $y$ be two adjacent vertices of degree two with a common neighbor $z$.
  If $z$ is undecided, select $z$ and delete $x$ and $y$.
\end{reduction}

The possibility of applying the propagation rule leads to a number of further reduction rules related two vertices of degree two, illustrated in \cref{fig:red:path 1,fig:red:path 2,fig:red:path 3}.
Note, for example, that vertices of degree two with an undecided neighbor never need to be selected.
One can always select the neighbor instead.

\reductionname{Deg2a}
\begin{reduction}[restate=reductionDegIIa]
  \label{red:path 1}
  Let $v \in Z \setminus (X \cup Y)$ be an undecided propagating vertex with $d(v) = 2$ and let $x$ and $y$ be its neighbors.
  If $x$ and $y$ are not adjacent and if at least one of $x$ and $y$ is not excluded, exclude $v$.
\end{reduction}

If two adjacent vertices have degree two, the propagation rule ensures that if one becomes observed, the other becomes observed, too.
The two vertices can thus be merged.

\reductionname{Deg2b}
\begin{reduction}[restate=reductionDegIIb]
  \label{red:path 2}
  Let $v \in Y \cap Z$ be an excluded propagating vertex with $d(v) = 2$ and let $x$ and $y$ be its neighbors.
  If $x$ and $y$ are not adjacent and if at least one of $x$ and $y$ is propagating and has degree 2, delete $v$ and add an edge between $x$ and $y$ instead.
\end{reduction}

Something similar happens when two unobserved vertices of degree two share an observed neighbor with no other unobserved neighbors.
Recall the gadget used in the removal of the booster edges in \cref{lem:booster gadget}, which works in the same way.
If one of the two unobserved vertices becomes observed, the other becomes observed by the propagation rule.
If both vertices additionally have degree two, they thus behave like a single vertex to their other two neighbors.

\reductionname{Deg2c}
\begin{reduction}[restate=reductionDegIIc]
  \label{red:path 3}
  Let $v$ be an observed excluded propagating vertex with precisely two unobserved neighbors, both of degree two, $x$ and $y$.
  If $x$ and $y$ are not connected, $y$ must have another neighbor $z$.
  In this case remove $y$ and the edge $xv$ and add a new edge $xz$.
\end{reduction}

There are several trivial cases where a vertex must be selected to form a feasible solution.
If an undecided unobserved vertex has no neighbors, clearly that vertex must be selected.
See \cref{fig:red:path 5} for an example.
Similarly, if an unobserved excluded vertex has only non-propagating only one of which is not excluded, that neighbor must be selected.

\reductionname{OnlyN} %todo: rename?
\begin{reduction}[restate=reductionOnlyN]
  \label{red:path 5}
  Let $v$ be an excluded propagating vertex of degree two with two non-propagating neighbors $x$ and $y$.
  If $x$ is excluded and $y$ is undecided, select $y$.
\end{reduction}

\reductionname{Isol}
\begin{reduction}[name=Isolated,restate=reductionIsol]
  \label{red:isolated}
  Let $v$ be an undecided isolated vertex.
  Then pre-select $v$.
\end{reduction}

\reductionname{ObsNP}
\begin{reduction}[name=Observed Non-Propagating,restate=reductionObsNP]
  \label{red:observed non-zi}
  Let $v$ be an observed, non-propagating and excluded vertex.
  Then delete $v$.
\end{reduction}
We do not need the propagation rule to observe the vertices observed by the currently selected vertices.
Instead we can connect all those vertices directly to selected vertices and observe them by the domination rule instead and remove the now redundant edges.
This rule is most useful when combined with the other rules.
In particular, all vertices with only observed neighbors become leaves and are then removed by \cref{red:leaf 1,red:leaf 2}.

\reductionname{ObsE}
\begin{reduction}[name=Observed Edge,restate=reductionObsE]
  \label{red:observed edge}
  Let $vw \in E$ be an edge with two observed but not pre-selected endpoints and let $x \in X$.
  Then delete $vw$ and instead insert edges $vx$ and $wx$.
\end{reduction}

The previous rules identified vertices to be forbidden from a solution by looking at small localized structures.
We can identify those vertices and many more by looking at the vertices that would become observed by selecting a vertex.
For the next two reduction rules, we introduce the concept of \emph{observation neighborhood}.
For a set of vertices $U \subseteq V$ the observation neighborhood is the set of vertices that is observed when selecting $U$ in addition to all pre-selected vertices and applying the observation rules exhaustively.
For convenience, we define the observation neighborhood of a single vertex $v$ to be the observation neighborhood of the single element set $\set{v}$.
We use the observation neighborhood to formulate the following reduction rule.

\reductionname{Dom}
\begin{reduction}[name=Domination,restate=reductionDom]
  \label{red:domination}
  If the closed neighborhood of some undecided vertex $w$ is contained in the observation neighborhood of some other undecided vertex $v$, exclude $w$.
\end{reduction}

\Cref{red:leaf 2,red:tri,red:path 5} can be seen as special cases of a more general rule.
All three identify vertices that must be active in a minimum solution.
There are, however some vertices that must be active in every solution.
We can identify all those vertices by looking at the observation neighborhood of all other undecided vertices.
This tells us if there is a solution in which the vertex is not selected.

\reductionname{NecN}
\begin{reduction}[name=Necessary Node,restate=reductionNecN]
  \label{red:necessary}
  Let $v \in B$ be an undecided vertex.
  If the observation neighborhood of all undecided vertices except $v$ does not contain all vertices in $G$, select $v$.
\end{reduction}

We note that Binkele-Raible and Fernau~\cite{binkele2012exact} already introduced reduction rules for their exponential-time algorithm.
We do not use them in our algorithm as they are not generally applicable but rather require a specific situation.
The only exception \cite[``isolated'']{binkele2012exact} that is applicable is superseded by a combination of our reduction rules.

% \paragraph{Other rules}

% Prior work by Binkele-Raible and Fernau~\cite{binkele2012exact} introduced reduction rules for \textsc{simple-\Pds} which only worked under specific assumptions and in a particular order.
% Our rules differ in that each of them is safe on its own, i.e. every rule produces an equivalent instance in terms of feasibility and optimality.
% This allows combining the rules freely in any order.
% Additionally, we leverage the non-propagating vertices, leading to new reduction rules.

% The following reduction rules for \extension{\textsc{simple-\Pds}} were used by Binkele-Raible and Fernau~\cite{binkele2012exact}.

% \reductionname{FullObs}
% \begin{reduction}[Fully Observed („Isolated“ in \cite{binkele2012exact})]
%   \label{red:fully observed}
%   Let $v \in B$ be an undecided vertex.
%   If $N[v]$ is observed, exclude $v$.
% \end{reduction}

% We do not include this rule in our reduction because all vertices to which it applies are removed by successively applying \cref{red:observed edge,red:leaf 2}.
% Such vertices would also be excluded by \cref{red:domination}.
% The other rules proposed by Binkele-Raible and Fernau are not applicable to general \extension{\Pds} and only work in conjunction with their branching algorithm on \Pds-instances.
% \todo{Elaborate argument?}

\subsubsection{Order of Application}
\label{sec:order application}

In a first step, use a depth first search and process the vertices in post-order to apply the rules \labelcref{red:leaf 1,red:leaf 2,red:path 1}.
The order in which we process the vertices is important here, as it makes sure that attached paths are properly reduced.
This is only relevant for this first application of the reduction rules and in later applications, we process the vertices and edges in arbitrary order.
After this initial application, we iterate the following three steps until no reduction rules can be applied.
(i) Iterate the application of the local reduction rules (\ref{red:leaf 1}, \ref{red:leaf 2}, \ref{red:path 1}, \ref{red:path 2}, \ref{red:tri}, \ref{red:path 3}, \ref{red:path 5}, \ref{red:observed non-zi}, \ref{red:observed edge}) until no local reduction rule is applicable.
(ii) Apply the non-local rule \ref{red:domination}.
(iii) Apply the non-local rule \ref{red:necessary}.

We note that applying \ref{red:domination} once to all vertices is exhaustive in the sense that it cannot be applied again immediately afterwards.
It can, however, become applicable again after rerunning the other reduction rules.
The same is true for \ref{red:necessary}.

Our reasoning for this sequence of application is that the local rules are more efficient than the non-local once.
Thus, we first apply the cheap rules exhaustively before resorting to the expensive ones.
Preliminary experiments showed that further tweaking the order of application has only minor effect on the kernel size and run time.

\subsubsection{Implementation Notes}

The naïve implementation of the reduction rules can be very slow, in particular for the non-local rules.
The costly operation in those rules is the computation of the observation neighborhood.
We thus use a specialized data structure that allows us to pre-select and deselect vertices in arbitrary order.
Each time we pre-select a vertex, we also update the observed vertices and keep track of which vertex propagates to which other vertex.
For de-selecting vertices, we only mark vertices as unobserved, that were directly or indirectly observed by the deselected vertex.
Being able to select and deselect arbitrary vertices allows a straightforward implementation of the non-local rules.
%In \cref{red:domination} we select a vertex and check for all vertices that become observed and their neighbors whether they have any unobserved neighbors left.
%Afterwards we deselect that vertex again.
%Similarly for \cref{red:necessary}, we select all vertices and deselect each undecided vertex in turn to see whether any unobserved vertices remain.

\subsection{Split into Subinstances}
\label{sec:sub-problems}

\begin{figure}
  \tikzpicturename{reduction-subproblems}
\begin{tikzpicture}[pds instance,node distance=1cm,graphs/radius=0.4cm,every node/.style={node},zi/.style={propagating=yes}]
\begin{scope}
  \node[solution] (a1) {};
  \coordinate[right=of a1] (graph);
  \graph[clockwise,empty nodes,nodes={zi,shift={(graph)}}] {
    subgraph C_n[V={b1,b2,b3,b4}];
  };
  \node[solution,right=of graph] (a2) {};
  \graph[use existing nodes] { a1 -- b4; b2 -- a2; };
  \coordinate[below left=of a1] (graph);
  \graph[clockwise,nodes={zi,shift={(graph)}},phase=45,,empty nodes]{
    subgraph C_n[V={c1,c2,c3,c4}];
  };
  \coordinate[above left=of a1] (graph);
  \graph[clockwise,nodes={zi,shift={(graph)}},phase=-45,empty nodes]{
    subgraph C_n[V={d1,d2,d3,d4}];
  };
  \graph[use existing nodes] { a1 -- {c1, d1};};
  \coordinate[right=of a2] (graph);
  \graph[clockwise,nodes={zi,shift={(graph)}},empty nodes] {
    subgraph C_n[V={e1,e2,e3,e4}];
  };
  \graph[use existing nodes] { a2 -- e4; };
\end{scope}

\coordinate[right=of graph] (mid left);

\begin{scope}[xshift=7cm]
  \node[solution] (a1) {};
  \coordinate[right=of a1] (graph);
  \graph[clockwise,empty nodes,nodes={zi,shift={(graph)}}] {
    subgraph C_n[V={b1,b2,b3,b4}];
  };
  \node[solution,right=of graph] (a2) {};
  \graph[use existing nodes] { a1 -- b4; b2 -- a2; };
  \node[solution,below left=0.3 of a1] (a11) {};
  \coordinate[below left=of a11] (graph);
  \graph[clockwise,nodes={zi,shift={(graph)}},phase=45,,empty nodes]{
    subgraph C_n[V={c1,c2,c3,c4}];
  };
  \node[solution,above left=0.3 of a1] (a12) {};
  \coordinate[above left=of a12] (graph);
  \graph[clockwise,nodes={zi,shift={(graph)}},phase=-45,empty nodes]{
    subgraph C_n[V={d1,d2,d3,d4}];
  };
  \graph[use existing nodes] { a11 -- c1; a12 -- d1;};
  \node[solution,right=0.3 of a2] (a21) {};
  \coordinate[right=of a21] (graph);
  \graph[clockwise,nodes={zi,shift={(graph)}},empty nodes] {
    subgraph C_n[V={e1,e2,e3,e4}];
  };
  \graph[use existing nodes] { a21 -- e4; };
\end{scope}

\coordinate[left=1.5 of a1] (mid right);

\draw[snaked,->] (mid left) -- (mid right);

\end{tikzpicture}
  \caption{Example of independent sub-problems arising from the kernel on the left. Each connected component on the right can be solved independent from the others.}
  \label{fig:red:sub problems}
\end{figure}
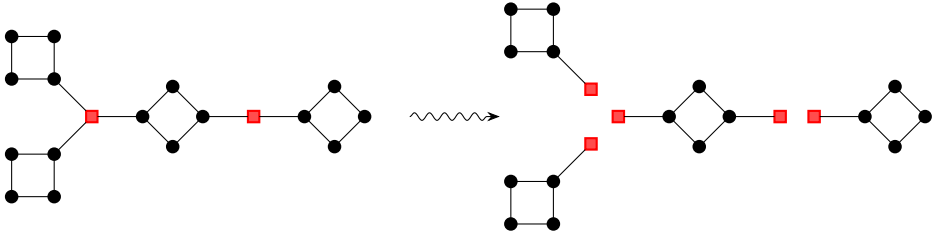

While the propagation rule may have non-local effects within the whole graph, propagation cannot pass through selected vertices.
This is formalized by the following theorem.

\begin{restatable}{theorem}{splitIntoSubinstances}
  Let $G = (V, E)$ be the graph with pre-selected vertices $X \subseteq V$ and let $C_1,\dots,C_\ell \subseteq V$ be the vertices in the connected components of the sub-graph of $G$ induced by $V \setminus X$.
  Further let $S_1,\dots,S_\ell$ be minimum power dominating sets of the subgraphs induced by $N[C_1],\dots,N[C_\ell]$.
  Then $S = S_1 \cup \dots \cup S_\ell$ is a minimum power dominating set of $G$.
\end{restatable}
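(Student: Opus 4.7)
The plan is to prove the theorem in two phases -- feasibility of $S$ as a power dominating set of $G$ and minimality of $|S|$ -- both resting on one structural observation that I would establish first. Any vertex $v \in X$ is pre-selected and therefore, together with all of $N[v]$, is observed immediately by the domination rule; consequently $v$ never has an unobserved neighbor and thus never fires a propagation step, neither in $G$ nor in any subgraph $G[N[C_i]]$. Hence every propagation step in either setting is performed by some vertex outside $X$, and since the $C_i$ are the connected components of $G[V \setminus X]$, such a propagating vertex $v$ lies in a unique $C_i$ and satisfies $N_G(v) \subseteq N[C_i]$. The neighborhoods of $v$ in $G$ and in $G[N[C_i]]$ therefore coincide.

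With this in hand, feasibility follows by a step-by-step lifting. Fixing $i$, I would prove by induction on the observation step that every vertex observed when running $S_i$ in $G[N[C_i]]$ is also observed when running $S$ in $G$. The base case uses $S_i \subseteq S$. A subgraph domination step lifts trivially to $G$ because $N_G[v] \supseteq N_{G[N[C_i]]}[v]$. For a subgraph propagation step from a vertex $v$, the structural observation forces $v \in C_i$ with identical neighborhoods in the two graphs; combined with the induction hypothesis, the $G$-unobserved neighborhood of $v$ is contained in its subgraph-unobserved neighborhood, so the same propagation fires in $G$ and reaches the same target. Ranging over all $i$ and using $\bigcup_i N[C_i] = V$ shows that $S$ observes every vertex of $G$.

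For minimality, I would take an arbitrary power dominating set $S^{\star}$ of $G$ (so in particular $X \subseteq S^{\star}$) and set $S^{\star}_i := S^{\star} \cap N[C_i]$. A symmetric induction, again exploiting that $X$-vertices do not propagate in $G$ either, shows that every observation chain in $G$ ending at some $u \in N[C_i]$ is carried by vertices in $N[C_i]$ and can therefore be replayed inside $G[N[C_i]]$; consequently $S^{\star}_i$ is a power dominating set of $G[N[C_i]]$, and minimality of $S_i$ gives $|S^{\star}_i| \geq |S_i|$. Because $X \cap N[C_i] \subseteq S_i \cap S^{\star}_i$ and the $C_i$ partition $V \setminus X$, splitting each cardinality into its $X$-part and its $C_i$-part and summing over $i$ yields $|S^{\star}| - |X| = \sum_i |S^{\star} \cap C_i| \geq \sum_i |S_i \cap C_i| = |S| - |X|$, hence $|S^{\star}| \geq |S|$.

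The main obstacle is the propagation-lifting argument in both directions: naively, the precondition ``exactly one unobserved neighbor'' is not monotone under changing the ambient graph, so a propagation that fires in $G[N[C_i]]$ need not fire in $G$, and vice versa, whenever the firing vertex has neighbors across multiple components. The structural observation about pre-selected vertices never propagating removes this obstacle entirely, because it restricts every lift we need to propagations carried by $C_i$-vertices, whose neighborhoods are identical in $G$ and $G[N[C_i]]$.
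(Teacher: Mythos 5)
Your proposal is correct and follows essentially the same route as the paper's proof: both directions rest on the observation that pre-selected vertices never need to propagate and that every vertex of $C_i$ has identical neighborhoods in $G$ and in $G[N[C_i]]$, so observation-rule sequences can be lifted between the two, and minimality follows by restricting an optimal solution to each $N[C_i]$ and counting. Your write-up is merely more explicit than the paper's terse version, which compresses the same lifting argument into the remark that no unselected vertex of $G_i$ has neighbors outside $G_i$.
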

\begin{proof}
  Let $G_i$ be the induced sub-graph on $N[S_i]$ and let $S_i$ be a minimum solution of $G_i$.
  Let $S = S_1 \cup \dots \cup S_\ell$ be a candidate solution.
  We show that each $S_i$ observes at least the vertices $C_i$ in $G$.
  Now consider a sequence of observation rules observing all vertices in $G_i$.
  Clearly, the domination rule can be applied in $G$ in the same way as in $G_i$.
  No unselected vertex in $G_i$ has neighbors outside $G_i$, so the propagation rule can also be applied in the same way.
  As this applies to all $G_i$, all vertices in $G$ are observed and thus $S$ is a solution.

  It remains to show that $S$ is a minimum solution.
  Assume $S$ is not a minimum power dominating set and let $S*$ be a minimum solution.
  Then there is a sub-graph $G_i$ with $\abs{C_i \cap S_I} > \abs{C_i \cap S*}$ and $C_i \cap S*$ must be a power dominating set of $G_i$.
  This contradicts the assumption that $S_i$ is minimal.
\end{proof}

These sub-problems can be identified in linear time using a depth-first search restarting at unexplored non-active nodes while ignoring outgoing edges of active nodes.
See \cref{fig:red:sub problems} for an example of an instance split into subinstances.

\subsection{Solving the Kernel Via Implicit Hitting Set}
\label{sec:kernel solver}

We briefly describe the implicit hitting set approach.
Compared to how Bozeman et al.~\cite{bozeman2018RestrictedPD} introduced it, we allow non-propagating vertices.
However, this does not change any of the proofs and thus the approach directly translates to this slightly more general setting.

In a graph $G$, a \emph{fort} is a non-empty subset of vertices $F \subseteq V(G)$ such that no propagating vertex outside $F$ is adjacent to precisely one vertex in $F$.
A power dominating set must be a hitting set of the family of all \emph{fort neighborhoods} in $G$, i.e., if $F$ is a fort and $S$ is a power dominating set, then $N[F] \cap S \neq \emptyset$.
Conversely, if a hitting set for a family $\mathcal F$ of fort neighborhoods is not a power dominating set, then one can find an additional fort of $G$ whose neighborhood is not in $\mathcal F$.

This yields the following algorithm.
Start with some set $\mathcal F$ of fort neighborhoods.
Compute a minimum hitting set $H$ for $\mathcal F$.
If $H$ is already a power dominating set, we have found the optimum.
Otherwise, we construct at least one new fort neighborhood and add it to $\mathcal F$.

One core ingredient of this approach is the choice of which fort neighborhoods to add to~$\mathcal F$.
Previous approaches~\cite{smith2020OptimalSP,bozeman2018RestrictedPD} aimed at finding forts or fort neighborhoods that are as small as possible.
The reasoning behind this is that the set of all fort neighborhoods can be exponentially large (even when restricted to those that are minimal with respect to inclusion) and thus it makes sense to add sets that are as restrictive as possible, hoping that only few sets suffice before the \HS solution yields a \Pds solution.
However, finding forts of minimum size or minimum size fort neighborhoods is difficult while just finding any fort is easy.
Moreover, if we add only few forts in every step, we have to potentially solve more \HS instances.
We thus propose to instead find multiple forts at once and to add them all to the \HS instance.
Our method of finding forts is based on the following lemma.

\begin{lemma}
  \label{lem:unobserved fort}
  Let $G = (V, E)$ be a graph and let $S \subseteq V$ be a set of selected vertices.
  Let further be $R$ the set of vertices observed by exhaustive application of the observation rules with respect to $S$.
  Then the set of unobserved vertices $V \setminus R$ is a fort.
\end{lemma}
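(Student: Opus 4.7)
The plan is to prove this by contradiction on the defining property of a fort. Recall that a fort is a non-empty vertex set $F$ such that no propagating vertex in $V \setminus F$ has exactly one neighbor in $F$. I will first handle the non-emptiness caveat by noting that the lemma is of interest precisely when the process leaves unobserved vertices; if $R = V$, then $V \setminus R = \emptyset$ and the statement is vacuous (or the lemma is implicitly conditioned on $R \neq V$, which is the relevant case in the algorithm). Hence I assume $V \setminus R \neq \emptyset$.

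Next, suppose toward a contradiction that $F := V \setminus R$ is not a fort. Then there exists a propagating vertex $u \in R$ with $\abs{N(u) \cap F} = 1$; call its unique neighbor in $F$ the vertex $v$. The key step is to observe that this configuration exactly matches the premise of the propagation rule applied to $u$: the vertex $u$ is observed (since $u \in R$), it is propagating, and all of its neighbors other than $v$ lie in $R$ and are therefore observed. Hence the propagation rule applied to $u$ would observe $v$.

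This contradicts the assumption that $R$ is the set produced by \emph{exhaustive} application of the observation rules starting from $S$: if $v$ could be observed via propagation from $u$, it would already belong to $R$, so $v \notin F$, contradicting $v \in F$. Therefore no propagating vertex of $R$ has exactly one neighbor in $V \setminus R$, and $V \setminus R$ satisfies the fort condition.

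I don't foresee a real obstacle here; the argument is essentially a one-line contrapositive that turns the fixed-point property of exhaustive rule application into the fort property. The only mild subtlety is making sure the domination rule doesn't need to be invoked (it doesn't, since $S \subseteq R$ and any vertex dominated by $S$ is in $R$), and to ensure the non-emptiness convention for forts is handled cleanly in the write-up.
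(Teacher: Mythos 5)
Your proof is correct and follows essentially the same argument as the paper: assume $V \setminus R$ is not a fort, extract a propagating observed vertex with exactly one unobserved neighbor, and derive a contradiction with the exhaustive application of the propagation rule. You are in fact slightly more careful than the paper's version, which silently ignores the non-emptiness requirement in the definition of a fort.
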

\begin{proof}
  Assume $V \setminus R$ is not a fort.
  Then there exists a propagating vertex $v$ in $R$ that is adjacent to precisely one vertex $w$ in $V \setminus R$, i.e., $v$ has precisely one unobserved neighbor $w$.
  This constradicts the exhaustive application of the propagation rule and thus the set of unobserved vertices is a fort.
\end{proof}

By Lemma~\ref{lem:unobserved fort}, whenever we have a candidate solution that does not yet observe all vertices, we obtain a new fort and can add its neighborhood to the \HS instance $\mathcal F$.
For the new forts, we have two objectives.
First, we want the new fort neighborhood to actually provide new restrictions, i.e., it should not be already hit by the minimum hitting set $H$ of $\mathcal F$.
This is achieved by making sure that the candidate solution $S$ is a superset of the hitting set $H$.
Secondly, we want the resulting forts (i.e., the number of unobserved vertices) to be small.
We achieve this heuristically by greedily considering large candidate solutions.

Specifically, we choose candidate solutions as follows.
Recall, that we consider the extension problem, i.e., we have sets $X$ and $Y$ of pre-selected and excluded vertices, respectively.
Moreover, let $H$ be a minimum hitting set of the current set of fort neighborhoods.
Then $V$ is partitioned into the four sets $X$, $Y$, $H$, and $U = V \setminus H \setminus X \setminus Y$.
Each candidate solution $S$ we consider is a superset of $H \cup X$ and a subset of $H \cup X \cup U$.
We randomly order the vertices in $U = \set{u_1, \dots, u_\ell}$ and define a sequence $U_0, \dots, U_\ell \subseteq U$.
We then consider the candidate solutions $S_i = H \cup X \cup U_i$ for $0 \le i \le \ell$.
As we want to consider large candidate solutions, we start with $U_0 = U$, which clearly yields a solution as the instance would be invalid otherwise.
We obtain the subset $U_i$ from $U_{i - 1}$ as follows.
If $S_{i - 1}$ was a solution, i.e., there were no unobserved vertices, then $U_i = U_{i - 1} \setminus \set{u_i}$.
Otherwise, $U_i = U_{i - 1} \cup \set{u_{i - 1}} \setminus \set{u_i}$.
Note that this makes sure that each candidate solution $S_i$ we consider is either a solution or barley not a solution as $S_i \cup \set{u_i}$ is a solution.

This gives us at least one and up to $\ell$ new fort neighborhoods.
These are not directly added to the set $\mathcal F$.
Instead, we first apply a simple local search to make sure that each fort is minimal with respect to inclusion.
To this end, we iteratively re-select vertices from $U$ that had been removed before and check whether this still results in a non-empty fort.

We note that we only add sets to $\mathcal F$.
Thus, we have to solve a sequence of increasing \HS instances as a subroutine.
To improve the performance of this, one can use lower bounds achieved in earlier iterations as lower bounds for later iterations (\HS is monotone with respect to the addition of sets).

% \subsection{A greedy heuristic}

% To find an upper bound on the power domination number of a graph, we propose a very simple greedy heuristic.
% We enumerate the undecided vertices in descending order by their degree.
% If a vertex is unobserved or has an unobserved neighbor, we add that vertex to the list of selected vertices and update the observed vertices.
% We stop when all vertices are observed.
% If a vertex remains unobserved after enumerating all undecided vertices, the instance is infeasible.
% To improve the bound, we greedily deselect all vertices that are not needed for the current set of selected vertices to be a power dominating set.

% This heuristic can be augmented by exhaustively applying our reduction rules in the beginning and each time we select another vertex.

% \todoin{Move minimum size fort to appendix?}

% \todo{Mention performance here or later?}

%%% Local Variables:
%%% mode: latex
%%% TeX-master: "../paper"
%%% End:

  \section{Experiments}
\label{sec:experiments}

The goal of this section is threefold.
First, we evaluate the performance of our algorithm in comparison to two previous state-of-the-art approaches.
Secondly, we give a more detailed view on the performance by analyzing how the upper and lower bounds found be the different algorithms converge to the optimal solution.
Thirdly, we evaluate the impact of the different reduction rules.

\subparagraph{Experiment Setup.}

We implemented our algorithm in C++ 20 and compiled it with clang 15.0.1 with the \texttt{-O3} optimization flag.
Our source code along with all data sets and evaluation scripts is available on GitLab\footnote{\url{https://gitlab.com/Aldorn/pds-code}}.
For the comparison with the previous state-of-the-art, we use the MILP formulation approach by Jovanovic and Voss \cite{jovanovic2020fixed}.
In the following, we refer to this algorithm with \texttt{MILP}.
The second solver by Smith and Hicks~\cite{smith2020OptimalSP} and is based on the implicit hitting set approach.
Unfortunately, their code is not publicly available, and the paper does not specify all implementation details.
To make a fair (or rather generous) comparison, we initialized their set of forts with our fort heuristic, which, as far as we can judge, leads to better results than reported in the original publication~\cite{smith2020OptimalSP}.
We refer to this algorithm as \texttt{MFN} (abbreviation for \emph{minimum fort neighborhood}).
For the implicit hitting set approaches, we use an MILP formulation to solve the \HS instances.
All MILP instances are solved using Gurobi~9.5.2~\cite{gurobi}.

The experiments were run on a machine running Ubuntu 22.04 with Linux 5.15.
The machine has two Intel\textregistered Xeon\textregistered Gold 6144 CPUs clocked at 3.5 GHz with 8 single-thread cores and 192 GB of RAM.

We used a collection of instances shipped with pandapower~\cite{thurner2018pandapower}.
We further use the Eastern, Western, Texas and US instances from the powersimdata set\footnote{\url{https://github.com/Breakthrough-Energy/PowerSimData}}~\cite{Xu2020powersimdata} based on the US electric grids.
We interpret the power grids as graphs were buses are vertices and power lines and transformers are edges.
Buses without attached loads or generators yield propagating vertices.
For experiments on the pandapower instances, we used a timeout of \SI{2}{h} and repeated each experiment 5 times.
On the powersimdata instances, we used a timeout of \SI{10}{h} and only repeated the experiments using our solver.
For repeated experiments, we report the median result.

\subparagraph{Performance Comparison.}

We compare the performance of our solver to the \texttt{MILP} and \texttt{MFN} approach, each with and without preprocessing by the reduction rules.
To assess the performance of our approach with reduction rules, we compute the speedup compared to the lowest run time of the previous approaches without reduction rules.

\begin{table}
\caption{Run times of different combinations of \Pds solvers and reduction rules on the pandapower data set. Note that $\gamma_P$ differs from the results reported in other literature. This is to be expected because we include non-propagating vertices from the input. Further observe that some run times are given in milli- or microseconds.}
\label{tbl:solver comparison normal}
%\hspace{-1.5cm}
\footnotesize
  \begin{threeparttable}
\pgfplotstabletypeset[
  string type,
  columns={name,pmus,jovanovic,jovanovic-r,bozeman-smith,bozeman-smith-r,ours,ours-r,speedup},
  columns/name/.style={
    column type=l,
    column name=instance,
    string type,postproc cell content/.append style={/pgfplots/table/@cell content/.add={\ttfamily}{}}
  },
  columns/pmus/.style={column name={$\gamma_P$}},
  columns/bozeman-smith/.style={column name={\texttt{MFN}\tnote{a}}},
  columns/bozeman-smith-r/.style={column name={\texttt{MFN+R}\tnote{a}}},
  columns/jovanovic/.style={column name={\texttt{MILP}\tnote{a}}},
  columns/jovanovic-r/.style={column name={\texttt{MILP+R}\tnote{a}}},
  columns/ours/.style={column name={Ours}},
  columns/ours-r/.style={column name={Ours+R}},
  columns/speedup/.style={column name={Speedup\tnote{b}},numeric type,fixed, fixed zerofill,precision=1},
  every head row/.style={
    before row=\toprule,
    after row={\texttt{case*}&\multicolumn{1}{c}{\#}&\multicolumn{1}{c}{s}&\multicolumn{1}{c}{s}&\multicolumn{1}{c}{s}&\multicolumn{1}{c}{s}&\multicolumn{1}{c}{s}&\multicolumn{1}{c}{s}\\\midrule},
  },
  every last row/.style={
    after row=\bottomrule,
  },
  every even row/.style={before row={\rowcolor[gray]{0.9}}},
  column type=r,multicolumn names,
]{tables/solver-times.csv}
\begin{tablenotes}
  \item[a] numbers here were obtained from our interpretation of the respective approach
  \item[b] speedup of Ours+R compared to the faster of \texttt{MILP} and \texttt{MFN}
  \item[c] solved optimally by reduction rules
\end{tablenotes}
\end{threeparttable}
\end{table}

\Cref{tbl:solver comparison normal} shows the run times of the solvers on the smaller pandapower instances.
Preprocessing significantly reduced the running times of all solvers in most cases, especially for the larger instances.
In fact, we found that our reduction rules were able to solve 9 out of 26 instances on their own.
In those cases, no solver without reduction rules could compete.
Out of the remaining instances, our our solver without reduction rules was the fastest on 3 instances while our solver with reduction rules was the fastest on all others.

For the larger powersimdata instances, neither \texttt{MILP} nor \texttt{MFN} were able to compute an optimal solution without using our reduction rules within the time limit.
Thus, for these instances, we only compare our solver with \texttt{MFN+R} and \texttt{MILP+R}.
Table~\ref{tbl:solver comparison large} shows the results.
Observe that for \texttt{Eastern}, our algorithm finished after \SI{16}{min} while \texttt{MFN} did not finish after more than \SI{6}{h}, with a lower bound that was still more than 100 vertices below the optimal solution.
Observe that the number of fort neighborhoods $|\mathcal F|$ is slightly lower for \texttt{MFN}, which is to be expected as this is basically the main goal of \texttt{MFN} when finding new fort neighborhoods.
However, this clearly does not show any benefit in the resulting run time.

\begin{table}
  \caption{Run times of different combinations of \Pds solvers and reduction rules on the pandapower data set with all vertices considered propagating. Observe that some run times are given in milli- or microseconds.}
%  \caption{Median running times of different combinations of \Pds solvers and reduction rules on the instance with only propagating vertices.}
  \label{tbl:solver comparison all propagating}
%  \hspace{-1.5cm}
\footnotesize
  \begin{threeparttable}
    \pgfplotstabletypeset[
    ignore chars={"},
    string type,
  columns={name,pmus,jovanovic,jovanovic-r,bozeman-smith,bozeman-smith-r,ours,ours-r,speedup},
    columns/name/.style={
      column type=l,
      column name=instance,
      string type,postproc cell content/.append style={/pgfplots/table/@cell content/.add={\ttfamily}{}}
    },
    columns/pmus/.style={column name={$\gamma_P$}},
    columns/bozeman-smith/.style={column name={\texttt{MFN}\tnote{a}}},
    columns/bozeman-smith-r/.style={column name={\texttt{MFN+R}\tnote{a}}},
    columns/jovanovic/.style={column name={\texttt{MILP}\tnote{a}}},
    columns/jovanovic-r/.style={column name={\texttt{MILP+R}\tnote{a}}},
    columns/ours/.style={column name={Ours}},
    columns/ours-r/.style={column name={Ours+R}},
    columns/speedup/.style={column name={{Speedup}\tnote{b}},numeric type,fixed, fixed zerofill,precision=1},
    every head row/.style={
      before row=\toprule,
      after row={\texttt{case*}&\multicolumn{1}{c}{\#}&\multicolumn{1}{c}{s}&\multicolumn{1}{c}{s}&\multicolumn{1}{c}{s}&\multicolumn{1}{c}{s}&\multicolumn{1}{c}{s}&\multicolumn{1}{c}{s}\\\midrule},
    },
    every last row/.style={
      after row=\bottomrule,
    },
    every even row/.style={before row={\rowcolor[gray]{0.9}}},
    column type=r,multicolumn names,
    ]{tables/solver-times-z.csv}
    \begin{tablenotes}
      \item[a] numbers here were obtained from our interpretation of the respective approach
      \item[b] speedup of Ours+R compared to the faster of \texttt{MILP} and \texttt{MFN}
      \item[c] solved optimally by reduction rules
    \end{tablenotes}
  \end{threeparttable}
\end{table}

In the literature, most other solvers only consider networks consisting solely of propagating vertices.
For comparison, we conducted the experiment on the same instances but with all vertices propagating, see \cref{tbl:solver comparison all propagating}.
Observe that $\gamma_P$ is lower when all vertices are propagating.
With only propagating vertices in the input, our reduction rules could solve 13 out of 26 instances on their own.
On all remaining instances, our solver combined with the reduction rules was the fastest and achieved a median speedup of 176.2.

Comparing the results in \cref{tbl:solver comparison all propagating} and \cref{tbl:solver comparison normal}, we observe that the solvers react differently to all vertices being propagating.
While \texttt{MILP} and our solver without reductions are somewhat faster with some non-propagating vertices in the input, \texttt{MFN} and our solver with reductions perform better when all vertices are propagating.
This leads to an interesting effect: Overall, \texttt{MILP} performs better than \texttt{MFN} when the input contains non-propagating vertices but considerably  worse when all vertices are propagating.
Nonetheless, our solver remains the fastest in both cases.

%We obtained similar results when considering all vertices in the instances propagating.
%This is the version of the problem commonly discussed in the literature.
%With these instances as inputs, our solver was the fastest on 16 out of 26 instances.\todo{update numbers}
%We achieved a median speedup of around 52.9.
%The median speedup over all instances with only propagating vertices was 449, with higher speedups in larger instances.

\begin{table}
  \footnotesize
  \caption{
    Comparison between our algorithm, \texttt{MILP} and \texttt{MFN} on the larger powersimdata US instances preprocessed with our reduction rules.
    The number of vertices is $n$, $|Z|$ is the number of non-propagating vertices and $\abs{\mathcal F}$ is the size of the arising hitting set instance, if applicable.
    For the solvers, we report the power dominating number $\gamma_P$ (or the best found lower bound) as well as the number of fort neighborhoods $\mathcal F$ and the run time.
  }
  \label{tbl:solver comparison large}
  \begin{tabular}{l r r r r r r r r r r}
    \toprule
  & \multicolumn{2}{c}{Input} & \multicolumn{3}{c}{Our  Solver} & \multicolumn{3}{c}{\texttt{MFN+R}} & \multicolumn{2}{c}{\texttt{MILP+R}} \\\cmidrule(rl){2-3}\cmidrule(rl){4-6}\cmidrule(rl){7-9}\cmidrule(rl){10-11}
  Instance                             & $n$   & $\abs{Z}$ & $\gamma_P$ & $\abs{\mathcal{F}}$ & $t$~(s) & $\gamma_P$ & $\abs{\mathcal{F}}$ & $t$~(s) & $\gamma_P$ & $t$~(s) \\\midrule
  \rowcolor[gray]{0.9}\texttt{Texas}   & 2000  & 376   & 411   & 838   & 0.98   & 411    & 659     & 17.73        & 411    & 1.81 \\
  \texttt{Western}                     & 10024 & 4106  & 1825  & 2618  & 1.55   & 1825   & 2010    & 158.51       & 1825   & 2.16 \\
  \rowcolor[gray]{0.9}\texttt{Eastern} & 70047 & 30332 & 12895 & 27019 & 552.46 & >12789 & >15043  & >\SI{10}{h}  & >12890 & >\SI{10}{h} \\
  \texttt{USA}                         & 82071 & 34814 & 15131 & 30357 & 728.62 & >14124 & >16391  & >\SI{10}{h}  & >15126 & >\SI{10}{h} \\\bottomrule
  \end{tabular}
\end{table}

\subparagraph{Lower and Upper Bounds.}

We note that all three approaches find lower bounds while solving the instances.
In case of the implicit hitting set approeach, each time we solve the current \HS instance, the solution size is a lower bound for a minimum power dominating set.
This yields lower bounds for our approach as well as for \texttt{MFN}.
Moreover, Gurobi also provides lower bounds for \texttt{MILP}.
Additionally, Gurobi provides upper bounds.
To also get upper bounds for the implicit hitting set approaches, we use the following greedy heuristic.
Whenever we have computed a hitting set $H$ of the current fort neighborhoods, we greedily add vertices to $H$, preferably selecting vertices with many unobserved neighbors, until we have a power dominating set.
Afterwards, we make sure that the resulting solution is minimal with respect to inclusion.

With this, we can observe how quickly the different algorithms converge towards the optimal solution.
\Cref{fig:exp:bounds and gap} illustrates the behavior of the bounds with respect to the time for two of the four powersimdata instances.
All three algorithms use our reduction rules (recall that neither \texttt{MILP} nor \texttt{MFN} were able to solve these instances without them).
We clearly see that, with our approach, the gap between upper and lower bounds shrinks quickly, in particular compared to \texttt{MFN}.
This validates our assumption that adding many -- potentially larger -- forts instead of a single minimum size one is highly beneficial.
Recall that \texttt{MFN} can increase its lower bound only by at most $1$ after finding a new hitting set while we can increase the lower by up to one for each undecided unhit vertex.

Interestingly, for \texttt{MILP+R} the gap between upper and lower bound closes much quicker than for \texttt{MFN+R}.
In particular, for the largest \texttt{USA} instance, there is almost no gap left after little more than \SI{100}{s}.
Gurobi also found an optimal solution, but failed to prove the lower bound on its size within the timeout of \SI{10}{h}.
Thus, in cases where a good approximation is acceptable, the \texttt{MILP} formulation (with our reduction rules) is not much worse than our approach.

\begin{figure}
  \begin{center}
    \tikzpicturename{evaluation-bounds_gap}
\tikzpicturedependsonfile{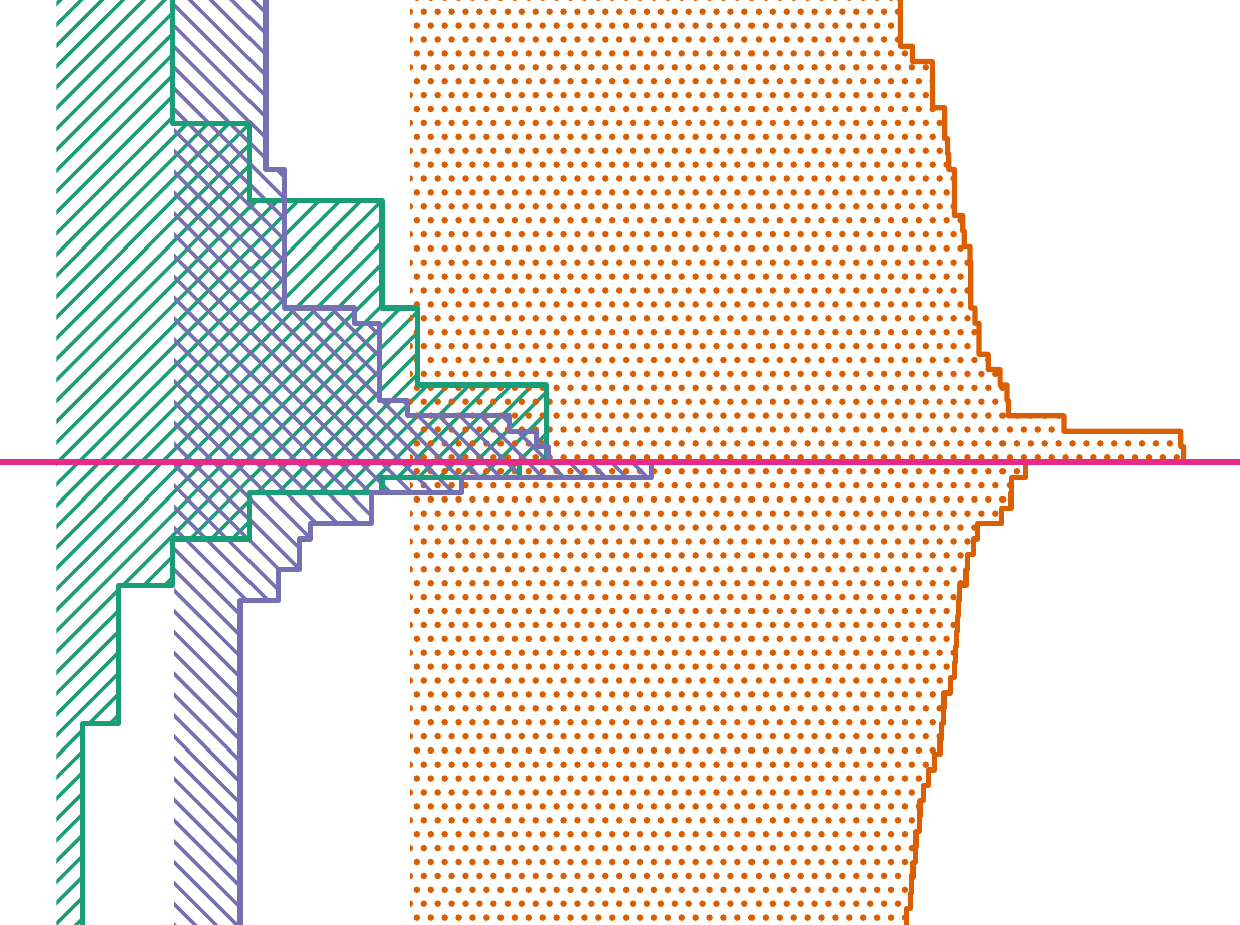}
\tikzpicturedependsonfile{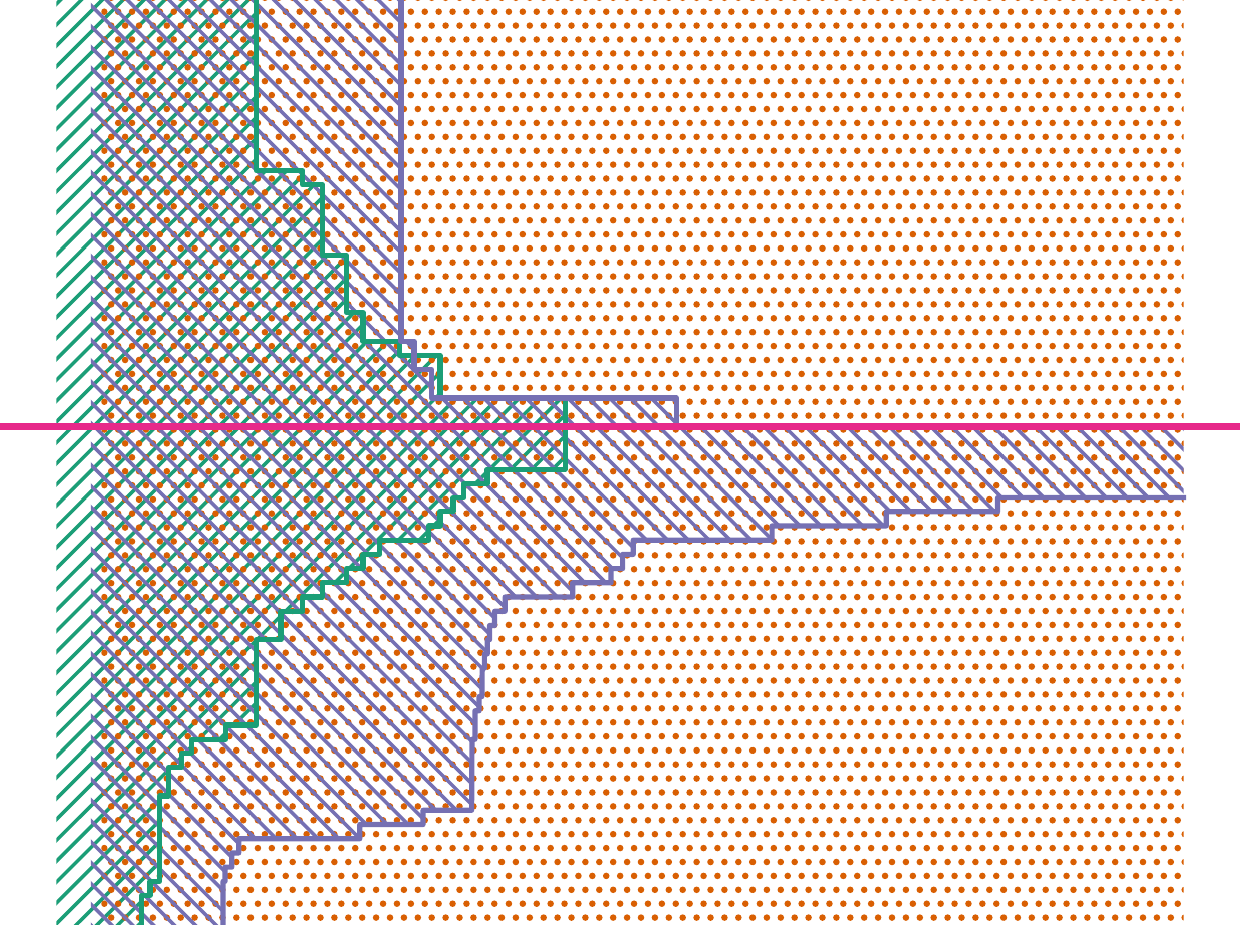}
\tikzpicturedependsonfile{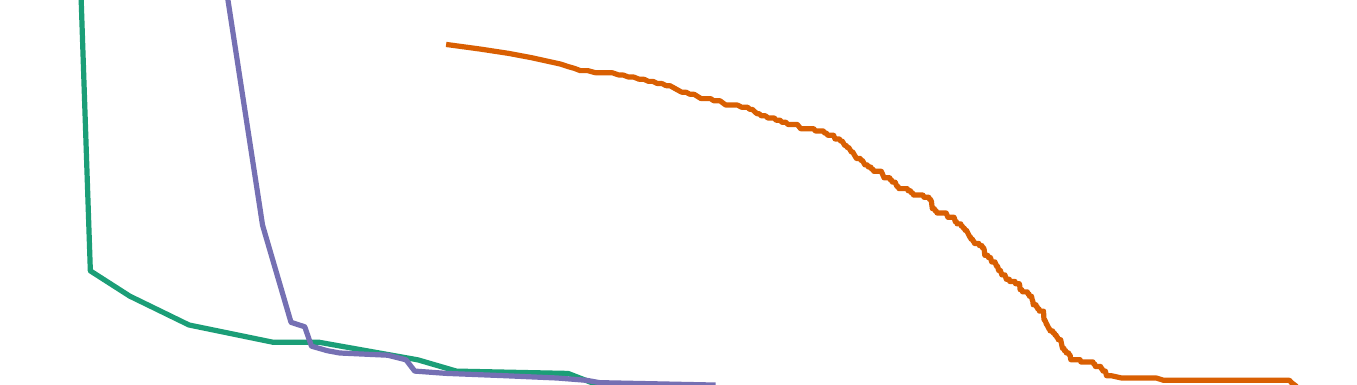}
\tikzpicturedependsonfile{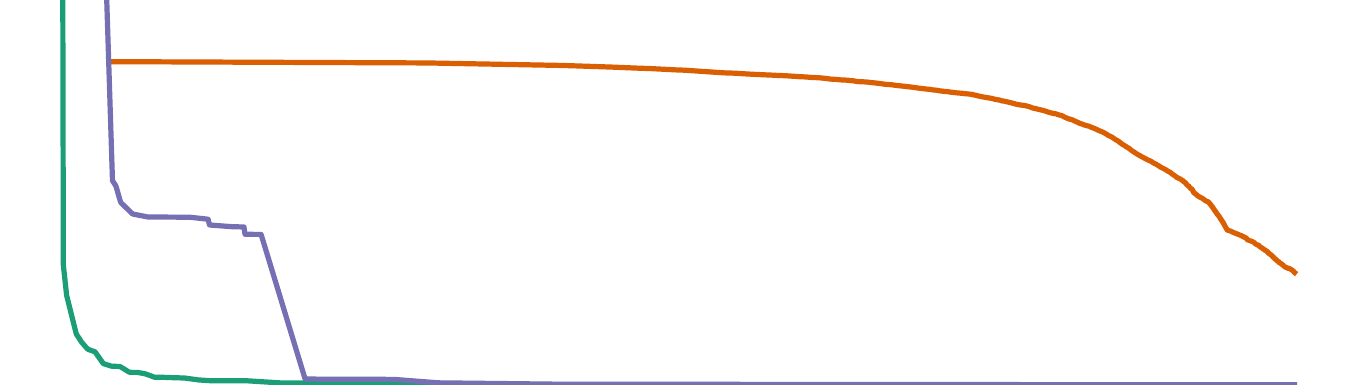}
\begin{tikzpicture}[trim axis group left,trim axis group right]
  \newcommand{\horizontalline}[1]{
    \draw ({1, #1} -| {axis description cs:0, 0}) -| (axis description cs:1, 0);
  }
  \pgfmathsetmacro{\myplotwidth}{(\linewidth - 15mm) / 2}
  \begin{groupplot}[
      group style={
        group size={2 by 2},
        vertical sep=1mm,
        horizontal sep=5mm,
      },
      xmode=log,
      /tikz/mark repeat=1,
      /tikz/mark phase=0,
      width=\myplotwidth,scale only axis,
      height=3cm,
      scaled y ticks=false,
      axis on top,
    ]
    \nextgroupplot[xmin=0.0411112282015524,xmax=54.7549980628163,ymin=381.0,ymax=441.0,xticklabels={},ylabel=bound value (\#),title=\texttt{Texas}] {
      \addplot graphics[xmin=0.0411112282015524, xmax=54.7549980628163, ymin=381.0, ymax=441.0] {figures/evaluation/bounds-gap/bounds-Texas.pdf};
    }
    \nextgroupplot[
      xmin=20.61432111982069,xmax=51419.59945287649,ymin=15096.0,ymax=15161.0,yticklabel pos=right,
      xticklabels={},ylabel=bound value (\#),title=\texttt{USA}
    ] {
      \addlegendimage{area legend,color0,thick,pattern=north east lines,pattern color=.}
      \addlegendentry{Ours}
      \addlegendimage{area legend,color1,thick,pattern=crosshatch dots,pattern color=.}
      \addlegendentry{\texttt{MFN}}
      \addlegendimage{area legend,color2,thick,pattern=north west lines,pattern color=.}
      \addlegendentry{\texttt{MILP}}
      \addplot graphics[xmin=20.61432111982069, xmax=51419.59945287649, ymin=15096.0, ymax=15161.0] {figures/evaluation/bounds-gap/bounds-USA.pdf};
    }
    \nextgroupplot[xmin=0.0411112282015524,xmax=54.7549980628163,ymin=0.0,ymax=0.4,height=1.5cm,xlabel={time (ms)},ylabel=gap] {
      \addplot graphics[xmin=0.0411112282015524, xmax=54.7549980628163, ymin=0.0, ymax=0.4] {figures/evaluation/bounds-gap/gap-Texas.pdf};
    }
    \nextgroupplot[xmin=20.61432111982069,xmax=51419.59945287649,ymin=0.0,ymax=0.4,ymax=0.4,height=1.5cm,xlabel={time (ms)},ylabel=gap,yticklabel pos=right] {
      \addplot graphics[xmin=20.61432111982069, xmax=51419.59945287649, ymin=0.0, ymax=0.4] {figures/evaluation/bounds-gap/gap-USA.pdf};
    }
  \end{groupplot}
\end{tikzpicture}
  \end{center}
  \vspace{-1\baselineskip}
  \caption{Upper and lower bounds on the optimum value on the \texttt{Texas} and \texttt{USA} powersimdata instances with preprocessing by our reduction rules. We give the bounds reported by our solver and by \texttt{MFN}, both with added greedy upper bounds, as well as Gurobi for \texttt{MILP}. Note that the x axis uses a logarithmic scale.}
  \label{fig:exp:bounds and gap}
\end{figure}

\subparagraph{Reduction Rules.}

\begin{figure}
  \centering
  \tikzpicturename{evaluation-running_times}
\tikzpicturedependsonfile{figures/evaluation/running_times_normal.csv}
\tikzpicturedependsonfile{figures/evaluation/speedup_boxes_normal.csv}
\begin{tikzpicture}[trim axis group left]
\newcommand{\horizontalline}[2]{
  \draw ({1, #2} -| {axis description cs:0, 0}) -- node[fill=white,inner sep=0.5mm,font=\scriptsize] {#1} ({1, #2} -| {axis description cs:1, 0});
}
\pgfmathsetmacro{\myplotwidth}{(\linewidth - 5mm) / 2}
\begin{groupplot}[
  group style={group size=2 by 1,horizontal sep=5mm,},
  grid=major,
  %ytick style={color=black},
  /tikz/mark size=1.5pt,
  /tikz/semithick,
  cycle multiindex list={{mark list*}\nextlist{Dark2}\nextlist},
  table/col sep=comma,
  legend columns=6,
  legend style={at={([yshift=2mm]1,1)},anchor=south},%-2.1\baselineskip
  width=\myplotwidth,scale only axis,
  xlabel shift=-0.3em,
  ymode=log,
  domain=3:12000,
]
  % instance time scatter plot
  \nextgroupplot[
    xlabel={\# buses}, ylabel={total time (ms)},
    xmode=log, ymin=0.001, ymax=10000,
  ]
  \pgfplotstableread{figures/evaluation/running_times_normal.csv}\loadedtable
  \pgfplotsinvokeforeach{{1ms}{1},{1s}{1000},{1min}{60000},{Timeout (1.5h)}{5400000}}{
    \horizontalline#1
  }
  \foreach \col/\name in {all/all,{non-local}/domination,{local+\texttt{Dom}}/no-necessary,{local+\texttt{NecN}}/no-domination,local/simple,{no reductions}/none}{
    \addplot+[only marks] table[x=n,y=\name] \loadedtable;
    \expandafter\addlegendentry\expandafter{\col};
  }
  % speedup box plot
  \nextgroupplot[
      xtick={0,...,4},
      xticklabels={all,non-local,local+\texttt{Dom},local+\texttt{NecN},local},
      x tick label style={
        anchor=near xticklabel,yshift={-mod(\ticknum,2)*\baselineskip}
      },
      yticklabel pos=right,
      ymajorgrids,
      ylabel={speed-up},
      cycle list name=mark list*,
      /pgfplots/boxplot/.cd,
      draw direction=y,
      every whisker/.style={black},
      every box/.style={black},
      every median/.style={ultra thick},
  ]
    \pgfplotstableread{figures/evaluation/speedup_boxes_normal.csv}\loadedtable
    \pgfplotstabletranspose[colnames from={0}]\stattable{\loadedtable}
    \pgfplotsinvokeforeach{0,1,2,3,4}{
      \pgfmathsetmacro{\myindex}{int(#1+1)}
      \addplot+ [
      color#1,
      boxplot prepared from table={
        table=\stattable,
        row=#1,
        lower whisker=whislo,
        lower quartile=q1,
        %lower notch=cilo,
        median=med,
        %upper notch=cihi,
        upper quartile=q3,
        upper whisker=whishi,
      },
      boxplot prepared={
        box extend=0.35,
        draw position=\plotnumofactualtype,
      },
      area legend,
      fill=none,mark phase=9,
      ] table [y index=\myindex] \loadedtable;
    }
\end{groupplot}

\end{tikzpicture}
  \vspace{-2\baselineskip}

  \subcaptionbox{Median running time on each instance with the different subsets of reduction rules\label{fig:exp:reduction running times}}[.48\linewidth]{~}\hfill
  \subcaptionbox{Aggregated speed-up of each set of reduction rules compared to our algorithm without reduction rules\label{fig:exp:reduction speed up}}[.48\linewidth]{~}

  \caption{Running times  and speed-up of our algorithm with different subsets of the reduction rules on the pandapower instances.}
  \label{fig:exp:running time and sped up}
\end{figure}

To evaluate the effect of the reduction rules on the performance of our algorithm, we let it run on the pandapower instances with different subsets of reduction rules.
Recall that we have several local reduction rules as well was the two non-local rules \ref{red:domination} and \ref{red:necessary}.
In addition to using all or no reduction rules, we consider the following subsets.
Only local rules, only non-local rules, all local rules together with \ref{red:domination}, and all local rules together with \ref{red:necessary}.

\Cref{fig:exp:reduction running times} shows the median running time for each instance in the different settings.
In most instances, the reductions could decrease the running time by an order of magnitude or more.
Moreover, we can see that in most cases all reduction rules are relevant, i.e., we achieve the lowest run time when using all reduction rules and applying no reduction rules is usually slower than applying any of the rules.

\Cref{fig:exp:reduction speed up} shows the speedup aggregated over all instances of using reduction rules compared to using no reduction rules for our solver.
We can see that the median speedup is roughly one order of magnitude when applying all reduction rules.
The most interesting observation here is that local+\ref{red:necessary} does not give any improvement compared to just local.
In fact, it is slightly slower.
However, when combined with \ref{red:domination}, \ref{red:necessary} gives a significant improvement.

%%% Local Variables:
%%% mode: latex
%%% TeX-master: "../paper"
%%% End:

  \section{Conclusion}
\label{sec:conclusion}

We showed that \Pds is $W[P]$-complete.
This closes the gap in the study of its parameterized complexity.
Our reduction uses an auxiliary problem, \Ipds, to simulate arbitrary monotone circuits.

Our second contribution in this paper is a set of new reduction rules for \Pds.
The rules yield partially solved instances of \extension{\Pds} where some vertices are pre-selected for the power dominating set while other are forbidden from being included.
Each rule shrinks the instance by removing vertices or edges, or pre-selects or excludes vertices from being selected.
Our reduction rules can be used as a pre-processing step to significantly enhance the performance of existing solvers.
Our third and last contribution is a new algorithm for solving \Pds based on the implicit hitting set approach.
The core of our algorithm is a new heuristic to find missing sets for the implicit hitting set instances.
We evaluate the effectiveness of our reduction rules and the performance of our algorithm in experiments on a set of practical power grid instances from the literature.
For comparison, we run the same experiments with two different approaches from the literature.
The comparison shows clearly that our new heuristic for finding missing fort neighborhoods outperforms the previous approach.
Our algorithm outperforms the reference solvers by more than one order of magnitude.
Even when combining the other approaches with our reduction rules, our algorithm beats them on most instances.
Furthermore, we can solve large instances of continental scale that could not be solved before.
We found that our algorithm finds lower bounds on the power dominating number more quickly than Gurobi.

A major advantage of our fort heuristic is that it translates easily to other variants of \Pds, as long as it is easy to verify which vertices are observed by a partial solution.
Examples of such variant are the $k$-\PDS where propagation is possible if a vertex has less than $k$ unobserved neighbors or $l$-\textsc{Round \PDS} where the number of propagation steps is limited.
Other variants, such as \textsc{Connected \PDS} are less straightforward.
It might be interesting to see if connectivity can be efficiently enforced in the implicit hitting set model.

Even though our algorithm shows a significant improvement over the state-of-the-art, there is still some potential for further engineering.
Currently, our implementation of the reduction rules is optimized for a single execution as a pre-processing step.
Further optimization might make them more efficient, especially when only few vertices have changed between rule applications.
This might be useful in more accurate heuristics solutions on large instances or for use in a branching algorithm.
Further fast high quality heuristics can provide good upper bounds on the solution size.
Such a heuristic, combined with the lower bound provided by our algorithm, might prove optimality earlier, further reducing the run time.
Also, other hitting set solvers beside Gurobi exist and our algorithm might benefit from using those instead.

%%% Local Variables:
%%% mode: latex
%%% TeX-master: "../paper"
%%% End:

  \bibliography{bibliography.bib}
  \bibliographystyle{plainurl}

  \appendix
  \section{Non-Propagation for MILP formulations and Fort neighborhoods}

\subsection{MILP Formulation}
\label{sec:milp formulation}

There exist MILP formulation for \textsc{simple-\Pds}, e.g. \cite{jovanovic2020fixed} and \cite{brimkov2019connected}, but to the best of our knowledge these have not been applied to \extension{\Pds}.
We discuss how these formulations can be modified to solve the extension problem and to accommodate non-propagating vertices.

We use the formulation by Jovanociv and Voss~\cite{jovanovic2020fixed} as a basis for our MILP.
They represent each vertex with two variables $x_v$ and $s_v$.
The binary variable $x_v$ represents whether a vertex is selected.
The third variable $p_{v,w}$ represents a vertex $v$ propagating to another vertex $w$.
To prevent cycles, Jovanovic and Voss count the number of propagation steps on the path from a selected vertex in the variable $s_v$.

Non-propagating vertices can be accounted for by introducing an additional constraint forcing $p_{v,w} = 0$ if $v$ is non-propagating.
If a vertex $v$ is excluded, we add a constraint $x_v = 0$ and if $v$ is selected, we add a constraint $x_v = 1$.

While Jovanovic and Voss require $s_v$ to be integral, we relax this constraint and allow continuous values instead.
The steps only require a minimum offset but not integrality.
There are also several other redundant constraints in the model.
We did not remove all of them as we found in preliminary experiments that they improve the solver performance.
We did remove one constraint, namely $p_{v,w} + p_{w,v} \leq 1$.
This constraint follows from constraint~\labelcref{eq:milp:step} and seems to lead to a significant decrease in performance.
For completeness we state the resulting model with all our modifications.

\begin{align}
  \label{eq:milp}
  \text{Minimize } &\sum_{v\in V} x_v\\
  \text{s.t. } & s_v \leq x_t + M (1 - x_t) & \forall v \in V, t \in N[v]\label{eq:milp:domination}\\ %domination rule %TODO: constant folding?
  %&s_v \leq M (x_v + \sum_{w \in N(v)}(x_v + p_{wv})) & \forall v \in V \label{eq:milp:observation rules}\\
  &s_v \leq M (x_v + \sum_{w \in N(v)} (x_w + p_{w,v})) &\forall v \in V \label{eq:milp:observation rules}\\
  &\sum_{w\in N(v)} p_{w,v} \leq 1 & \forall v \in V \label{eq:milp:observing arc in}\\
  &\sum_{w\in N(v)} p_{v,w} \leq 1 & \forall v \in V \label{eq:milp:observing arc out}\\
  &s_v \geq s_t + 1 - M (1 - p_{w,v}) &\begin{aligned} \forall &v \in V, w \in N(v),\\ &t \in N[w] \setminus \set{v}\end{aligned} \label{eq:milp:step}\\
  &1 \leq s_v \leq \abs{V}  \forall v \in V\label{eq:milp:observation guarantee}\\
  &x_v = 1 & \forall v \in X \label{eq:milp:selected} \\
  &x_v = 0 & \forall v \in Y \label{eq:milp:excluded} \\
  &p_{v,w} = 0 & \forall v \in Z, w \in N(v) \label{eq:milp:non propagating}\\
  &x_v,p_{w,v} \in \set{0,1} \label{eq:milp:binary}
\end{align}

\begin{lemma}
  Let $G$ be a \extension{\Pds}-instance.
  The MIP formulation has a solution with objective value $k$ if and only $G$ has a solution of size $k$.
\end{lemma}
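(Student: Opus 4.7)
The plan is to prove the equivalence by constructing explicit correspondences between feasible integer solutions of the MILP and valid solutions of the \extension{\Pds} instance. In both directions the selected set is $S = \{v : x_v = 1\}$, so the $x$ variables encode the candidate solution directly while the $s$ and $p$ variables encode a sequence of observation-rule applications witnessing that $S$ observes~$V$.

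For the forward direction, I would assume a feasible integral solution with objective value $k$. The pre-selection and exclusion constraints on $x_v$ immediately yield $X \subseteq S$ and $S \cap Y = \emptyset$, so it remains to show that $S$ is a power dominating set. I would process the vertices in non-decreasing order of $s_v$ and argue inductively that each vertex is observed by the time it is processed. The observation-rule constraint forces $x_v = 1$, some $x_w = 1$ for $w \in N(v)$, or some $p_{w,v} = 1$. In the first two cases $v$ is observed by domination. In the third case, the non-propagation constraint guarantees that $w$ is propagating, and the step constraint gives $s_t < s_v$ for every $t \in N[w] \setminus \{v\}$, so by the induction hypothesis all other closed neighbors of $w$ are already observed and the propagation rule applies to~$v$.

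For the backward direction, I would assume $G$ admits a power dominating set $S$ of size $k$ and fix a sequence of rule applications observing all of $V$. I then assign to each vertex the integer round in which it first becomes observed, with $s_v = 1$ for vertices observed by domination in the initial round, and set $p_{w,v} = 1$ exactly when the sequence uses a propagation step from $w$ to $v$; all other $p$ variables are zero. The domination constraint holds because any vertex with a selected closed neighbor is observed in round~$1$; the step constraint holds because propagation only fires when the propagator's other closed neighbors are already observed and hence have strictly smaller round index; the at-most-one constraints on incoming and outgoing propagation hold because each vertex is observed only once and each propagating vertex fires the propagation rule at most once during an exhaustive sequence; and the non-propagation constraint holds because no non-propagating vertex ever fires the propagation rule. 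The pre-selection, exclusion, and binary constraints are immediate from the definition of $S$.

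I expect the main subtlety to lie in the bookkeeping around the big-$M$ constants and the relaxation of $s_v$ to continuous values. Specifically, one must verify that the step constraint forces a strict gap of at least $1$ between $s_t$ and $s_v$ along every propagation, which together with the bound $s_v \leq \abs{V}$ precludes cyclic propagations in the forward direction and justifies that the non-decreasing ordering of $s_v$ can be used as a valid processing order. In the backward direction, choosing integer round indices along a fixed rule sequence sidesteps the continuous relaxation entirely, making the constraint verification routine.
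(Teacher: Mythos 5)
Your proposal is correct and follows essentially the same route as the paper's proof: one direction encodes a fixed rule sequence into the $x$, $s$, $p$ variables, and the other processes vertices in non-decreasing order of $s_v$ and inductively recovers a valid sequence of domination and propagation steps. If anything, your version is slightly cleaner in consistently taking $S = \set{v : x_v = 1}$ (the paper's forward direction loosely sets $x_v = 1$ for every vertex observed by the domination rule) and in explicitly noting why the strict gap from the step constraint makes the $s_v$-ordering a valid induction order.
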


\begin{proof}
  Let $G=(V, E)$ be a \extension{\Pds}-instance and let its MIP be defined as above.
  \begin{claim}[If $G$ has a solution of size $k$, there is a satisfying assignment $(x, s)$]
    Let $S$ be a minimum power dominating set of $G$ and let $r_1,\dots,r_\ell$ be a sequence of rule applications that observes $G$.
    Without loss of generality, we assume that all applications of the domination rule occur before the first application of the propagation rule.
    Every vertex is observed either by the domination rule or by the observation rule.
    If $v$ is observed by the domination rule we set $x_v = s_v = 1$, otherwise we set $x_v = 0$ and thus constraint~\labelcref{eq:milp:domination,eq:milp:selected,eq:milp:excluded,eq:milp:observation guarantee} are satisfied.
    Only if a vertex $u$ observes another vertex $v$ by the propagation rule, we set $p_{u,v} = 1$ and $s_v = 1 + \max\set{s_w \mid w \in N[u] \setminus \set{v}}$ which satisfies constraint~\labelcref{eq:milp:observation rules}.
    By default we set $p_{u,v}=0$, in particular for all non-propagating vertices, satifying constraints~\labelcref{eq:milp:step,eq:milp:observing arc in,eq:milp:observing arc out}.
    We only assign 1 or 0 to the variables, so constraint~\labelcref{eq:milp:binary} is satisfied.
  \end{claim}
  \begin{claim}[If the MIP has a solution of weight $k$, then $G$ has a solution of size $k$]
    We construct a candidate solution $S = \set{v \mid x_v = 1}$.
    Let $V = v_1,\dots,v_{\abs{V}}$ be sorted by the value of $s_v$ in ascending order.
    We show inductively that $S$ is a solution of $G$ by iterating the vertices sorted by their value of $s_v$ in ascending order.
    In each step $i$ we assume that there is a sequence of observation rules that observes all vertices up to $v_{i-1}$ and we extend this sequence to observe $v_i$.

    First, we apply the domination rule to all vertices in $S$, i.e. with $x_v = 1$.
    This observes $N[S]$.
    By constraints~\labelcref{eq:milp:domination,eq:milp:observation guarantee} all vertices in $N[S]$ must have $s_v = 1$.
    Note that by constraint~\labelcref{eq:milp:observation guarantee} there is no vertex with $s_v < 1$.

    Now let $u$ be a vertex with $s_v > 1$.
    By constraint~\labelcref{eq:milp:domination} we know that $N[u] \cap S$ is empty and thus constraint~\labelcref{eq:milp:observation rules} implies that there is at least one other vertex $v \in N(u)$ with $p_{v,u} = 1$.
    Constraint~\labelcref{eq:milp:observing arc in} ensures that there is only one such $v$.% and \cref{eq:milp:observing arc out} asserts that $v$ has no other neighbor $u' \in N(v) \setminus \set{u}$ with $p_{v,u'} = 1$.
    Then constraint~\labelcref{eq:milp:step} states that $s_u > s_w$ for all vertices $w \neq u$ in the closed neighborhood $N[v] \setminus \set{u}$.
    By our induction hypothesis we thus know that all vertices in $N[v] \setminus \set{u}$ are observed, i.e. $v$ is observed and has only one unobserved neighbor, $u$.
    We can thus apply the propagation rule and $u$ becomes observed.
  \end{claim}
\end{proof}

In the above formulation we account for propagation and selected and excluded vertices by introducing additional constraints.
Instead of introducing additional constraints, we can modify the existing constraints to account for these additional properties.
The resulting model contains fewer variables and constraints from the start.
%The following is the MILP formulation obtained when rewriting the constraints in \cref{eq:milp} to account for selected and excluded vertices and for propagating vertices.
%\begin{align}
%  \label{eq:milp:expanded}
%  &\text{Minimize } \sum_{v \in V} x_v\\
%  &1 \leq s_v \leq x_w + M (1 - x_w) && \forall v \in V \setminus (X \cup Y) \forall w \in N[v]\\
%  &s_v  = 1 && \forall v \in X\\
%  &s_v \leq M (x_v + \sum{w \in N(v)} x_w + \sum_{w \in N(v) \cap Z} p_{w,v}) && \forall v \in V \setminus X\\
%  &\sum_{w \in N(v) \cap Z} p_{w, v} \leq 1 && \forall v \in V \setminus X\\
%  &\sum_{w \in N(v) \cap Z} p_{w, v} = 0 && \forall v \in X\\
%  &\sum_{w \in N(v)} p_{v,w} \leq 1 &&\forall v \in Z\\
%  &1 \leq s_v \leq \abs{V} && \forall v \in V\\
%  &s_v \geq s_t + 1 - M (1 - p_{w,v}) && \begin{aligned} \forall &v \in B \cup I, \forall w \in N(v) \cap Z\\\forall &t \in N[w] \setminus \set{v} \end{aligned}\\
%  &x_v  = 1 && \forall v \in A\\
%  &x_v = 0 && \forall v \in I\\
%  &x_v \in \set{0,1} &&\forall v \in V
%\end{align}

\subsection{Fort Neighborhoods}

Both previous approaches using fort neighborhoods search for fort neighborhoods of minimum size~\cite{bozeman2018RestrictedPD,smith2020OptimalSP}.
We adapt their ILP formulations for the two subproblems of finding a hitting set and finding a violated fort to \extension{\Pds}.
Let $G=(V, E)$ be a \extension{\Pds}-instance with selected vertices $X$, excluded vertices $Y$ and non-propagating vertices $Z$.
We introduce a variable $s_v$ for each vertex, representing whether that vertex is in the power dominating set.
The following model finds a minimum hitting set of $M$ that contains all vertices in $X$ and no vertices in $Y$.
%As shown by Bozeman et al.~\cite{bozeman2018RestrictedPD}, a solution to this problem is a power dominating set of $G$.
\begin{align}
  \label{eq:hitting set}
  \text{Minimize } & \sum_{v \in V} s_v\\
  \text{s.t. } &\sum_{v \in F} s_v \geq 1 & \forall F \in M\\
  &s_v = 1 &\forall v \in X\\
  &s_v = 0 &\forall v \in Y\\
  &s_v \in \set{0, 1}
\end{align}

To find a violated fort neighborhood of minimum size, we adapt the method introduced by Smith and Hicks~\cite{smith2020OptimalSP}.
Let $S$ be a set of selected vertices and let $R$ be the set of vertices observed by exhaustive application of the observation rules.
For each vertex we introduce two variables $x_v$ and $y_v$ which represent whether $v$ is in the fort or in the neighborhood of the fort, respectively.
\begin{align}
  \label{eq:fort neighborhood}
  \text{Minimize } & \sum_{v \in V} y_v\\
  \text{s.t. } & \sum_{v \in V} x_v \geq 1\\
  & x_v = 0 & \forall v \in R\\
  & x_v + \sum_{w \in N(v)} \geq x_u & \forall u \in V \,\forall v \in N(u)\\
  & y_v \geq x_w &\forall v \in V \,\forall w \in N[v]\\
  & x_v \in \set{0, 1}
\end{align}
Note that this model has a feasible solution if and only if $R \neq V$.

\section{Reduction Rules (Missing Proofs From Section~\ref{sec:reduction rules})}
\label{sec:reduct-rules-appendix}

\reductionDegIa*
\reductionDegIb*
\begin{lemma}
  \Cref{red:leaf 1,red:leaf 2} are safe.
\end{lemma}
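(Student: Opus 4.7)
The plan is to show, for each of the two rules separately, that the original instance admits an extension solution of size $k$ if and only if the reduced instance admits one of the same size.

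For \cref{red:leaf 1}, I would show that there is always a minimum solution that avoids the leaf $v$. Let $S$ be any optimal solution with $v \in S$. Since $w \notin Y$, the set $S' = (S \setminus \set{v}) \cup \set{w}$ is again a valid extension with $|S'| \le |S|$. As $v$ is a leaf, its domination rule only observes $\set{v, w}$, and it can never trigger the propagation rule because its unique neighbor $w$ is already observed the moment $v$ is selected. Hence every vertex observed by $S$ is also observed by $S'$. Going the other way is trivial: any solution of the reduced instance automatically satisfies the stronger exclusion constraint $Y \cup \set{v}$ and is therefore a solution of the original instance. Combining both directions, excluding $v$ cannot change the optimum.

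For the first case of \cref{red:leaf 2} (with $w$ propagating), the key observation is that $w$'s propagation ability is never useful for observing any vertex other than $v$. Indeed, since $v \in Y$ is a leaf whose only neighbor is $w$, the sole ways $v$ can become observed are (a) $w$ being selected and dominating $v$, or (b) $w$ propagating to $v$, which by definition requires $v$ to be the unique unobserved neighbor of $w$ and therefore leaves $w$ with no unobserved neighbor to propagate to ever again. In either scenario, $w$ never propagates to a vertex other than $v$. I would turn this into a rule-by-rule simulation in both directions, analogous to those in the proofs of \cref{lem:pds to spds,lem:booster gadget}: given a sequence of observation rules for $G$, drop any propagation step from $w$ to $v$ to obtain a valid sequence on the reduced instance; conversely, lift a sequence on the reduced instance back to $G$ by optionally appending one propagation step from $w$ to $v$ (unnecessary when $w$ is selected). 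Since the parameter is unchanged, the minimum solution size is preserved.

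For the second case of \cref{red:leaf 2} (with $w$ neither propagating nor excluded), the same observation specializes to: the only way $v$ can be observed is via $w$ being selected, because option (b) is no longer available. Consequently every feasible extension solution of the original instance must contain $w$. Pre-selecting $w$ and deleting $v$ therefore yields an equivalent instance, with $v$ handled implicitly by the domination rule applied at $w$.

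The main obstacle I anticipate is making the simulation in the first case of \cref{red:leaf 2} airtight, because the reduced instance both removes a vertex and changes the propagation status of $w$, and one must verify that none of the other observation rules rooted at $w$ are silently affected. I would handle this by an induction over the length of the observation sequence, in the style used elsewhere in \cref{sec:wp complete}, tracking the invariant that $w$ never actually uses its propagating flag for anything other than the (now removed) propagation to $v$.
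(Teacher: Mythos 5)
Your proposal is correct and follows essentially the same route as the paper: for \labelcref{red:leaf 1} the swap $S' = (S \setminus \set{v}) \cup \set{w}$, and for \labelcref{red:leaf 2} the observation that a propagating $w$ can only ever spend its single propagation on the excluded leaf $v$ (so it may be made non-propagating once $v$ is removed), respectively that a non-propagating $w$ is the only vertex that can observe $v$ and hence must be selected. The paper's own proof is in fact terser than your write-up; your extra rule-by-rule simulation for the first case of \labelcref{red:leaf 2} is a more careful version of the same argument, not a different one.
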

\begin{proof}
  Given a graph $G = (V, E)$ and sets $X, Y$ as above, let $v \in V$ be a leaf with neighbor $w \in V$.
  Let $S$ be a minimum solution.
  If $v \in S$ and $w \notin Y$, we can construct an equivalent solution $S' = (S \setminus \set{v}) \cup \set{w}$.
  Therefore, \labelcref{red:leaf 1} is safe.
  For the second rule we distinguish two cases.

  First, if $w$ is propagating and $v$ excluded, $w$ cannot propagate to any vertex but $v$.
  Because a power dominating set requires all vertices to be observed, this is always the case in a valid solution.
  It is thus safe to remove $v$ and set $w$ to non propagating.

  In the second case $w$ is already non-propagating.
  In this case $v$ can only be observed from $w$.
  Thus $w$ must be selected.
\end{proof}

\reductionTri*

\begin{lemma}
  \Cref{red:tri} is safe.
\end{lemma}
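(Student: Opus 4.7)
The plan is to show (i) every feasible solution must select at least one of $x, y, z$, (ii) any such solution can be transformed into one containing $z$ without increasing its size, and (iii) once $z$ is pre-selected, deleting $x$ and $y$ yields an equivalent instance.

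For (i), I would argue by contradiction. Assume none of $x, y, z$ is selected. Since $x$ and $y$ have degree two with $N(x) = \{y, z\}$ and $N(y) = \{x, z\}$, neither can be dominated (their only neighbors are $x, y, z$, none selected). Consider the first of $x, y, z$ to be observed. It cannot be $x$ or $y$: their only observation routes are propagation from within $\{x, y, z\}$, which requires another of the three to already be observed. So $z$ becomes observed first, via some observation outside the triangle. Now consider the second of the three to be observed, say $y$ (the argument is symmetric). Again, $y$ can only be observed by propagation from $x$ (still unobserved, cannot propagate) or from $z$; but $z$ has at least two unobserved neighbors ($x$ and $y$), so $z$ cannot propagate. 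This contradicts $y$ being observed, establishing (i).

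For (ii), suppose a minimum solution $S$ contains $x$ (the case $y \in S$ is symmetric) but not $z$. Let $S' = (S \setminus \{x\}) \cup \{z\}$; this is valid because $z$ is undecided. Since $N[x] = \{x, y, z\} \subseteq N[z]$, the domination step of $S'$ observes at least as much as $S$. No propagation step of $S$ originating from $x$ or $y$ can observe anything new, because after $z$ dominates, all of $\{x, y, z\}$ are already observed and these vertices have no other neighbors. All other applications of the observation rules in $S$ carry over unchanged, as only the selection inside $\{x, y\}$ was modified and this cannot reduce the observed set. Hence $S'$ is a solution of the same size, so there exists a minimum solution containing $z$.

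For (iii), once $z$ is pre-selected, $x$ and $y$ are observed by the domination rule from the start. They have no neighbors outside $\{x, y, z\}$, so they cannot propagate to anything new and their deletion does not affect any other vertex's ability to become observed (no edges leave $\{x, y\}$ except to $z$, which is already observed). Conversely, any solution of the reduced instance extended with $z$ pre-selected is trivially a solution of the original instance, as $x$ and $y$ get dominated by $z$.

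The main obstacle is (ii): one has to be careful that replacing $x$ by $z$ does not break a propagation chain that started at $x$ or $y$. The key observation is that after swapping, $z$ dominates all three vertices immediately, so any propagation out of $\{x, y\}$ in the original solution is either redundant (its target was in $\{x, y, z\}$, already dominated) or impossible (no edges leave $\{x, y\}$). Propagation from $z$ itself is unaffected, since $z$ is selected and observed in both solutions, with the same propagating status.
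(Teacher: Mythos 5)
Your proposal is correct and follows essentially the same route as the paper's proof: establish that one of $x$, $y$, $z$ must be selected, observe that selecting $z$ observes everything that selecting $x$ or $y$ would, and note that $x$ and $y$ have no neighbors outside the triangle so they can be deleted once $z$ is selected. Your version just spells out the propagation-based case analysis that the paper leaves implicit.
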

\begin{proof}
  At least one of $x$, $y$ and $z$ must be selected, otherwise $x$ and $y$ remain unobserved.
  All vertices that would become observed by selecting $x$ or $y$ also become observed when selecting $z$.
  The two vertices $x$ and $y$ become observed when selecting $z$ and have no remaining unobserved neighbors or neighbors that could propagate.
  They can thus safely be removed.
\end{proof}

\reductionDegIIa*

\reductionDegIIb*

\reductionDegIIc*

\reductionOnlyN*

\begin{lemma}
  \Cref{red:tri,red:path 5} are safe.
\end{lemma}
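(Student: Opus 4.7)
My plan is to prove each reduction safe by verifying the two standard requirements: every feasible solution of the reduced instance lifts to one of the original (of the same or smaller size), and some minimum solution of the original respects the decisions the rule imposes. For both rules the argument reduces to a local, case-based analysis of how the affected vertex can be observed.

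For \ref{red:tri}, I would start from the structural observation that $x$ and $y$ have degree exactly two and belong to the triangle $\{x,y,z\}$, so the only neighbors of $x$ or $y$ lie in this triangle. Therefore the only ways $x$ and $y$ can become observed are (a) by being selected, (b) by being dominated from a selected vertex in $\{x,y,z\}$, or (c) by propagation originating at $z$ once $z$ is observed and has only one unobserved neighbor. In every case, the power domination of $\{x,y\}$ depends on a selection or propagation action at one of $x$, $y$, or $z$. I would then argue that among the three, selecting $z$ is always at least as good: if some minimum solution $S$ contains $x$ (respectively $y$), then replacing it with $z$ yields $S'=(S\setminus\{x\})\cup\{z\}$, which is feasible because $z$ is undecided (hence not forbidden), has $N[z]\supseteq\{x,y\}$, and dominates at least as much outside the triangle as $x$ did (since the only neighbors of $x$ are $y$ and $z$, both in $N[z]$). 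Finally, once $z$ is pre-selected, both $x$ and $y$ are dominated and have no remaining unobserved neighbors, and they cannot be needed in any future propagation chain (their only non-triangle neighbor is $z$ itself, which is already observed), so deleting them preserves the set of observed vertices.

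For \ref{red:path 5}, the argument is even more direct. Since $v$ is excluded, $v\notin S$ for any feasible $S$, but $v$ must nevertheless become observed. The only candidate observers are $v$'s two neighbors, $x$ and $y$, which are both non-propagating. Hence the propagation rule can never fire at $x$ or at $y$, so $v$ cannot become observed by propagation; it must become observed by domination, i.e.\ $S\cap\{x,y\}\neq\emptyset$. Because $x$ is excluded, this forces $y\in S$. Thus every feasible solution must already contain $y$, and pre-selecting $y$ does not remove any feasible solution and in particular does not remove any minimum one. The reverse direction is trivial: any feasible solution of the reduced instance is also feasible for the original, as pre-selecting $y$ is a restriction rather than a relaxation, and no vertices or edges have been deleted.

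The main (though mild) obstacle for both proofs is making the propagation bookkeeping precise: one must verify that no propagation chain used by some optimum solution of the original instance is destroyed by the reduction. In \ref{red:tri} the worry is that $z$ might have propagated into $x$ or $y$ along some essential chain, but since $x$ and $y$ have no neighbors outside the triangle, any propagation starting at them terminates inside $\{x,y,z\}$; once $z$ is pre-selected and $x,y$ are deleted, all propagation chains that originally left the triangle from $z$ remain available. In \ref{red:path 5} no edges or vertices are removed at all, so there is nothing to check beyond the forcing argument above. Together these two observations establish safeness of \ref{red:tri} and \ref{red:path 5}.
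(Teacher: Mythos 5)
Your proof is correct and follows essentially the same route as the paper: for \ref{red:tri} the paper likewise observes that one of $x,y,z$ must be selected (since $\{x,y\}$ cannot be observed from outside the triangle), that $z$ observes everything $x$ or $y$ would, and that $x,y$ are then inert and deletable; for \ref{red:path 5} the paper's entire argument is the same forcing observation you make, namely that $v$ is excluded and its only neighbors are non-propagating with $x$ excluded, so $y$ must be selected. Your write-up is merely more explicit about the propagation bookkeeping than the paper's very terse proofs.
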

\begin{proof}
    The only way $v$ can be observed is by selecting $y$.
\end{proof}
\begin{lemma}
  \Cref{red:path 1,red:path 2,red:path 3} are safe. \cref{,,red:tri,,red:path 5} are safe.
\end{lemma}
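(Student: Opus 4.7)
The plan is to prove each of the three reductions by an exchange or rewriting argument on solutions, together with a simulation of observation-rule sequences between the original and the reduced instance. In all three cases we exploit that a degree-two propagating vertex $v$ with non-adjacent neighbors $x,y$ behaves essentially like a ``wire'': once one endpoint becomes observed, $v$ is observed by domination and then propagates to the other endpoint because it has only one unobserved neighbor left.

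For Reduction \ref{red:path 1}, I would use a swap argument. Assume $S$ is a minimum PDS of the instance with $v \in S$, and without loss of generality let $x \notin Y$. Set $S' = (S \setminus \set{v}) \cup \set{x}$, so $|S'| \le |S|$. I then show that $S'$ observes everything that $S$ observes by transforming any rule sequence for $S$ into one for $S'$: the only rule applications to reinterpret are those that depended on $v$ being selected. The domination applications at $v$ are replaced by the domination at $x$, which observes $N[x] \supseteq \set{x,v}$; then $v$ has at most one unobserved neighbor (namely $y$), so a single application of the propagation rule at $v$ recovers the observation of $y$. From that point on, the set of observed vertices in the $S'$-run dominates the one from the $S$-run, and the original rule sequence can be replayed. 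This yields a minimum solution avoiding $v$, so excluding $v$ is safe.

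For Reduction \ref{red:path 2}, I would prove that the instance $G$ and the reduced instance $G'$ (obtained by deleting the excluded propagating degree-two $v$ and inserting the edge $xy$) have identical minimum solutions, by a two-way simulation of rule sequences. In the forward direction, I translate each application of a rule in $G$: domination rules not involving $v$ are applied verbatim; if an application observes $v$ (via $x$ or $y$), it is simply dropped in $G'$; an application of the propagation rule from $x$ through $v$ to $y$ (or vice versa) is replaced by a propagation along the new edge $xy$. The precondition that at least one of $x,y$ has degree two and is propagating guarantees that this single propagation in $G'$ is legal because the propagating endpoint has the same set of other neighbors in both graphs, so its ``unique unobserved neighbor'' condition is preserved (in $G$ the unique unobserved neighbor would have been $v$, which corresponds to $y$ in $G'$). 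The reverse direction is symmetric: a propagation along $xy$ in $G'$ is simulated by first observing $v$ via domination from its already-observed endpoint and then propagating $v \to$ other endpoint. The matching of solution sizes is immediate because $v$ is excluded and thus never in any solution.

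For Reduction \ref{red:path 3}, I would use essentially the same bidirectional simulation, now for the substructure $x - v - y - z$ where $v$ is observed and excluded, and $x,y$ are degree-two unobserved excluded propagating. I pair up the configurations: in the original, $v$ being observed together with $y$ eventually observed causes $y$ to propagate to $z$; after the reduction we instead have $x$ directly adjacent to $z$, and I verify that any propagation that would have traveled along the path $x{-}v{-}y{-}z$ (in either direction) is captured by propagation along $x{-}z$ plus the trivial updates at the endpoints, using that $v$ has no other unobserved neighbors and $x,y$ both have degree two.

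The main obstacle I expect is the bookkeeping in Reductions \ref{red:path 2} and \ref{red:path 3}: I must be careful that when I replace a propagation step across the deleted vertex by a single propagation in the reduced graph, the ``unique unobserved neighbor'' condition holds \emph{at the moment of application}. This is where the degree-two hypothesis on one of $x,y$ (resp.\ on both $x,y$) is crucial, since otherwise the propagating endpoint could acquire an additional unobserved neighbor from the insertion/deletion and the direct propagation would fail. Making this timing explicit via an induction on the length of the rule sequence, as in the proofs of \cref{lem:implication gadget,lem:booster gadget}, is the cleanest way to close the argument.
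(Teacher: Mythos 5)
Your overall strategy matches the paper's: an exchange argument for \ref{red:path 1}, and a bidirectional simulation showing that the degree-two structure acts as a single merged vertex for \ref{red:path 2} and \ref{red:path 3} (the paper phrases the latter via the booster gadget of \cref{lem:booster gadget}, but the content is the same). One recurring misstatement needs repair: you claim in several places that once an endpoint of the degree-two vertex $v$ becomes \emph{observed}, $v$ ``is observed by domination''; the domination rule applies only to \emph{selected} vertices, so $v$ is in fact observed by the \emph{propagation} rule from that endpoint, which additionally requires the endpoint to be propagating and to have $v$ as its unique unobserved neighbor at that moment --- exactly the degree-two/propagating hypotheses of \ref{red:path 2} and \ref{red:path 3}. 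Since you invoke those hypotheses correctly where it matters and your induction on the rule sequence handles the timing, this is a local fix rather than a structural gap.
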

\begin{proof}
  These rules are based on the observation that vertices of degree propagate observation whenever they become observed by one of their neighbors.
%  All these rules are based on the observation that a chain of vertices of degree two is mostly redundant because observation can propagate through that path.
  \begin{enumerate}[ncases]
%    \item[\labelcref{red:tri}] The common neighbor can observe both $x$ and $y$, it is thus always better to select $z$ than either $x$ or $y$.
%      Further, at least one of $x$, $y$ and $z$ must be selected, otherwise $x$ and $y$ remain unobserved.
    \item[\labelcref{red:path 1}] One can always select the non-excluded neighbor instead of $v$ and observe the other neighbor by the propagation rule.
    \item[\labelcref{red:path 2}] Without loss of generality, assume that $x$ has degree two.
      If $v$ becomes observed by one of its neighbors, the other neighbor becomes observed by the propagation rule.
      This is still the case after applying the reduction.
      Further, if $x$ becomes observed in the reduced instance, it is either observed from $y$ and thus $v$ is observed in the original instance, or $x$ is observed from some other vertex in which case $v$ and $y$ become observed by the propagation rule from $x$.
    \item[\labelcref{red:path 3}] The observed vertex $v$ has two unobserved neighbors of degree two, $x$ and $y$ which in turn have neighbors $z_x$ and $z_y$.
      By the definition of the rule, $v$, $x$, $y$, $z_x$ and $z_y$ are distinct vertices.
      This structure is a special case of the booster gadget from \cref{lem:booster gadget} where both endpoints of the booster edge have degree two.
      Thus, when $x$ becomes observed, $y$ becomes observed and vice versa.
      Both vertices have at most one unobserved neighbor, so when they become observed, that neighbor becomes observed, too, by the propagation rule.
      To their neighbors $z_x$ and $z_y$, $x$ and $y$ thus behave like a single vertex of degree two.
%    \item[\labelcref{red:path 5}] the only way $v$ can be observed is by selecting $y$.
  \end{enumerate}
\end{proof}

\reductionObsNP*
\begin{lemma}
  \Cref{red:observed non-zi} is safe.
\end{lemma}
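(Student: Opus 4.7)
The plan is to establish that a set $S$ with $X \subseteq S \subseteq V \setminus Y$ is a power dominating set of the original instance $(G, X, Y)$ if and only if $S$ is a power dominating set of the reduced instance $(G - v, X, Y \setminus \{v\})$. Since $v \in Y$, no valid solution contains $v$ in either instance, so the candidate solution sets coincide and only the observation condition requires argument. In particular, this will give a size-preserving bijection and hence equivalence of the instances.

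The core technical claim I would prove is that for every $S \supseteq X$, the set of observed vertices in $V \setminus \{v\}$ under $S$ is the same in $G$ as in $G - v$. I would argue this by induction on a sequence of observation rule applications, leaning on three facts: (i) $v$ is non-propagating, so $v$ never fires a propagation rule in $G$; (ii) $v$ is excluded and hence $v \notin S$, so $v$ never fires a domination rule in $G$; (iii) once $v$ is observed, its presence in a propagating neighbor $u$'s neighborhood does not affect whether $u$'s propagation rule fires, since observed neighbors do not contribute to the ``unique unobserved neighbor'' test. The only effect of $v$ on the observation process in $G$ is that $v$ itself may become observed, and by the hypothesis of the rule, this already happens using $X \subseteq S$ alone.

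To translate the claim into safeness, I would combine it with the observation that $v$ is always observed in $G$ under $S$, since $S \supseteq X$. Then $G$ is fully observed by $S$ iff $V \setminus \{v\}$ is fully observed by $S$ in $G$, which by the claim is equivalent to $G - v$ being fully observed by $S$. This gives the desired equivalence of solutions, and hence the rule is safe.

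The main obstacle will be scheduling the rule applications carefully during the window before $v$ becomes observed in $G$: a propagating neighbor $u$ of $v$ might in principle fire with $v$ still unobserved in $G$ but after a different timeline in $G - v$. I would handle this by first running in $G$ the rule sequence that observes $v$ from $X$ alone (this exists by hypothesis), and only then processing the remaining rules triggered by $S \setminus X$. After this prefix, $v$ is observed and facts (i)--(iii) render its presence irrelevant to subsequent rule firings, so the inductive lockstep between $G$ and $G - v$ is routine.
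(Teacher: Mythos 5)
Your proof is correct and follows essentially the same reasoning as the paper's: $v$ never fires the domination rule (it is excluded), never fires the propagation rule (it is non-propagating), and never counts as an unobserved neighbor of any other vertex (it is already observed from $X$), so deleting it changes nothing. Your explicit handling of the scheduling issue --- first running the rule prefix that observes $v$ from $X$ alone, then proceeding in lockstep between $G$ and $G - v$ --- is a detail the paper's one-line proof leaves implicit, but it is the same argument made more carefully.
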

\begin{proof}
  Let $v$ be an observed inactive non-zero-injection vertex as above.
  Such a vertex can never propagate and does not count toward the unobserved neighbors of any other vertex.
  The observation rules can thus not be applied to $v$ and application of the propagation rule to any of the neighbors of $v$ does not depend on $v$.
  It is thus safe to remove $v$.
\end{proof}

\reductionObsE*

\begin{lemma}
  \Cref{red:observed edge} is safe.
\end{lemma}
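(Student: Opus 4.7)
My plan is to show that, for any candidate solution $S \supseteq X$, the set of vertices observed by $S$ in $G$ equals the set observed in $G'$. This immediately implies that $G$ and $G'$ admit the same feasible solutions of the same size, so the rule is safe.

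The first step is to record a key invariant: in both $G$ and $G'$, the three vertices $v$, $w$, $x$ are observed from $X$ alone. For $G'$ this is immediate, because $x \in X \subseteq S$ and the newly inserted edges $xv$ and $xw$ let $x$ dominate $v$ and $w$. For $G$ this is the hypothesis of the rule, namely that $v$ and $w$ are observed, and $x$ is observed because $x \in X$. In particular, in any exhaustive observation process starting from a set containing $X$, the vertices $v$, $w$, $x$ are observed from the outset and hence belong to the observed set at every stage.

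The main step is then to argue that once $v, w, x$ are observed, the edge $vw$ (present only in $G$) and the edges $vx, wx$ (present only in $G'$) are \emph{inert}: they neither dominate any previously unobserved vertex nor change the number of unobserved neighbors of any vertex relative to its counterpart in the other graph. For domination, this is clear: the only vertices that could be newly dominated via these three edges are $v$, $w$, or $x$, and all three are already observed. For propagation, observe that a vertex $u$ propagates iff it has exactly one unobserved neighbor. The neighborhoods of all vertices in $V \setminus \{v,w,x\}$ are identical in $G$ and $G'$. For $v$, its neighborhood differs only by swapping the observed vertex $w$ (in $G$) for the observed vertex $x$ (in $G'$); so the set of unobserved neighbors of $v$ is identical in $G$ and $G'$ at every stage. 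Symmetrically for $w$. For $x$, its neighborhood in $G'$ contains the additional vertices $v, w$, both observed, so again the unobserved neighbors of $x$ coincide.

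Given this, I would proceed by a short induction on the length of an observation sequence in one graph, simulating every rule application in the other. A domination step from some $u \in S$ in $G$ observes exactly the neighbors of $u$ in $G$; the same rule applied in $G'$ observes the same set (by the inertness argument for the three special vertices). A propagation step from $u$ with sole unobserved neighbor $t$ in $G$ transfers verbatim to $G'$, since $u$'s unobserved-neighbor set is unchanged. The reverse direction is symmetric. Hence the observed sets agree, completing the proof. The only real subtlety—and the main thing to check carefully—is the inertness of $x$'s two new edges in $G'$: one must verify that the extra neighbors $v,w$ never appear in an unobserved-neighbor count of $x$, which holds because $v$ and $w$ are observed from the pre-selected set throughout.
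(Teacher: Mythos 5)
Your proposal is correct and follows essentially the same approach as the paper's (much terser) proof: $v$ and $w$ remain observed because they are now dominated by the pre-selected vertex $x$, and no vertex's set of unobserved neighbors changes, so all propagation steps carry over verbatim in both directions. Your explicit check that $x$'s two new neighbors are always observed, and hence never affect $x$'s propagation, is a worthwhile detail the paper leaves implicit.
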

\begin{proof}
  After application of the reduction, both vertices have a selected neighbor and can thus be observed by the domination rule, so they remain observed.
  The their number of unobserved neighbors does not change and thus the propagation rule can be applied to them as before.
\end{proof}

\reductionIsol*
\begin{lemma}
  \Cref{red:isolated} is safe.
\end{lemma}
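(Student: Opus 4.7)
The plan is to show that every valid power dominating set for the instance must contain $v$, from which safeness follows immediately: pre-selecting $v$ (adding $v$ to $X$) neither eliminates feasible solutions nor introduces new ones, so the optima of the original and reduced instances coincide.

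To establish that $v \in S$ for every solution $S$, I would first observe that because $v$ is isolated, its closed neighborhood is $N[v] = \{v\}$. Hence the domination rule can mark $v$ as observed only if $v$ itself is selected. Next, the propagation rule requires an observed neighbor of the vertex being newly observed, but $v$ has no neighbors at all, so propagation can never reach $v$. Together, these two points show that the only way for $v$ to become observed is $v \in S$. Since a valid solution must observe every vertex, in particular $v$, every feasible $S$ contains $v$.

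The only side condition to verify is consistency of the rule with the extension framework: $v$ is assumed undecided, so $v \notin Y$, and adding $v$ to $X$ does not violate the constraint $X \cap Y = \emptyset$. I do not expect any real obstacle here; the argument is essentially a one-line observation that an isolated undecided vertex has no mechanism by which it could be observed other than selecting it.
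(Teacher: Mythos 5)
Your argument is correct and matches the paper's proof, which is the same one-line observation that an isolated vertex can only be observed by being selected, since neither the domination rule (via a neighbor) nor the propagation rule can reach it. You simply spell out the two observation rules explicitly, which the paper leaves implicit.
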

\begin{proof}
  Isolated vertices cannot be observed by any other node so they must be selected.
\end{proof}

\reductionDom*

\begin{lemma}
  \Cref{red:domination} is safe.
\end{lemma}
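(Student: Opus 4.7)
The plan is to show that any feasible solution $S$ to the extension problem with $w \in S$ can be transformed into another feasible solution $S' = (S \setminus \set{w}) \cup \set{v}$ of size at most $\abs{S}$ that does not contain $w$. Since this works for every feasible solution containing $w$, in particular a minimum one, adding $w$ to the excluded set cannot increase the optimum and the rule is safe.

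First, I would check that $S'$ respects the extension constraints. Since both $v$ and $w$ are undecided by assumption, $w \notin X$ and $v \notin Y$, so $X \subseteq S \setminus \set{w} \subseteq S'$ and $S' \cap Y = \emptyset$. The size bound $\abs{S'} \leq \abs{S}$ is immediate from the construction.

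The main step is to show that $S'$ is a power dominating set. The key tool is monotonicity of the observation process: for $U_1 \subseteq U_2$, the observation neighborhood of $U_1$ is contained in that of $U_2$. Granted monotonicity, the observation neighborhood of $S'$ contains the observation neighborhood of $\set{v}$, which by the rule's hypothesis contains $N[w]$. Combined with the trivial inclusion that $S'$ observes $N[(S \setminus \set{w}) \cup X]$ via the domination rule alone, the observation neighborhood of $S'$ contains $N[S \cup X]$. Since the observation neighborhood of $S'$ is itself closed under the propagation rule, and the observation neighborhood of $S$ is the smallest set closed under propagation that contains $N[S \cup X]$ and equals $V$ by assumption, we conclude that $S'$ observes all of $V$.

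The main obstacle lies in carefully justifying monotonicity: one must argue that additional initially observed vertices can never \emph{disable} any propagation step used in the smaller process, only render some of them unnecessary (a vertex with exactly one unobserved neighbor may end up with zero, in which case propagation becomes vacuous). This is a short induction on the length of the propagation sequence, replaying each step of the smaller process in the larger one, but it is the only non-bookkeeping ingredient of the proof.
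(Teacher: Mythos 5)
Your proposal is correct and follows essentially the same route as the paper's proof: an exchange argument replacing $w$ by $v$, justified by the fact that $N[w]$ is contained in the observation neighborhood of $v$ together with monotonicity of the observation process. You spell out the monotonicity step and the extension constraints ($X$, $Y$) more carefully than the paper does, but the underlying idea is identical.
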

\begin{proof}
  Because $N[w]$ is contained in the observation neighborhood of $v$, $w$ can never observe any vertices that are not observed when selecting $v$.
  Thus a power dominating set containing $w$ can never be smaller than one containing $v$ and $v$ can safely be excluded.
\end{proof}

\reductionNecN*

\begin{lemma}
  \Cref{red:necessary} is safe.
\end{lemma}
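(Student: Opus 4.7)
The plan is to prove safeness via the contrapositive, relying on monotonicity of the observation operator. Let $\mathrm{Obs}(U)$ denote the set of vertices observed by exhaustive application of the observation rules when $U \cup X$ is the selected set (with $X$ the currently pre-selected vertices). The key structural fact we need is that $\mathrm{Obs}$ is monotone: if $U \subseteq U'$, then $\mathrm{Obs}(U) \subseteq \mathrm{Obs}(U')$. This is because every triggering condition of the domination and propagation rules is preserved under adding more observed vertices to the current observed set, so any derivation of an observation from $U$ can be replayed verbatim from $U'$.

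Next, I would argue the contrapositive: suppose some valid solution $S$ to the extension instance does not contain $v$. Validity means $X \subseteq S$ and $S \cap Y = \emptyset$, so $S \setminus X$ consists only of undecided vertices, and since $v \notin S$ we have $S \setminus X \subseteq B \setminus \{v\}$, where $B$ denotes the set of undecided vertices. By monotonicity, $\mathrm{Obs}(S \setminus X) \subseteq \mathrm{Obs}(B \setminus \{v\})$. Since $S$ is a power dominating set, $\mathrm{Obs}(S \setminus X) = V$, so $\mathrm{Obs}(B \setminus \{v\}) = V$ as well. This directly contradicts the hypothesis of the rule, namely that $\mathrm{Obs}(B \setminus \{v\}) \neq V$.

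Hence every valid solution must contain $v$, so moving $v$ into $X$ preserves the set of valid solutions (and in particular their minimum size). The reduction is therefore safe.

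The only mildly delicate step is the monotonicity claim. I would make it explicit by induction on the number of rule applications in a fixed derivation from $U$: each domination application remains valid under $U'$ because the selected set only grew, and each propagation application remains valid because the ``only one unobserved neighbor'' condition in the $U$-derivation implies at most one unobserved neighbor in the $U'$-derivation (by the inductive hypothesis, $U'$ observes a superset at every step). This is routine, so the proof itself can be quite short.
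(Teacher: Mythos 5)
Your proposal is correct and takes essentially the same approach as the paper: the paper's proof is a one-line version of your argument (``selecting more vertices cannot make fewer vertices observed,'' i.e.\ monotonicity of the observation operator, hence no solution omitting $v$ can observe everything). You simply spell out the monotonicity and the contrapositive in more detail.
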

\begin{proof}
  Selecting more vertices cannot make fewer vertices observed.
  Thus, if the observation neighborhood of all blank vertices except $v$ does not contain the whole graph, there exists no power dominating set that does not contain $v$.
  %Also, selecting $v$ cannot make an infeasible instance feasible, so this reduction is safe.
\end{proof}

%\section{Split into Subinstances (Missing proofs from Section~\ref{sec:sub-problems})}
%\label{sec:split-into-subinst-appendix}
%
%\splitIntoSubinstances*

%%% Local Variables:
%%% mode: latex
%%% TeX-master: "../paper"
%%% End:

\end{document}